\DeclareMathOperator*{\argmin}{argmin}
\DeclareMathOperator*{\minimize}{minimize}
\DeclareMathOperator*{\Null}{Null}
\DeclareMathOperator{\diag}{diag}
\DeclareMathOperator{\Var}{Var}
\newtheorem{cor}{Corollary}
\newtheorem{prop}{Proposition}
\DeclareMathOperator{\bA}{\bf A}
\DeclareMathOperator{\ba}{\bf a}
\DeclareMathOperator{\blb}{\bf b}
\DeclareMathOperator{\bd}{\bf d}
\DeclareMathOperator{\bdA}{\bf dA}
\DeclareMathOperator{\bdb}{\bf db}
\DeclareMathOperator{\bdG}{\bf dG}
\DeclareMathOperator{\bdh}{\bf dh}
\DeclareMathOperator{\bdy}{\bf dy}
\DeclareMathOperator{\bdV}{\bf dV}
\DeclareMathOperator{\bdz}{\bf dz}
\DeclareMathOperator{\be}{\bf e}
\DeclareMathOperator{\bG}{\bf G}
\DeclareMathOperator{\bg}{\bf g}
\DeclareMathOperator{\bH}{\bf H}
\DeclareMathOperator{\bh}{\bf h}
\DeclareMathOperator{\bI}{\bf I}
\DeclareMathOperator{\bF}{\bf F}
\DeclareMathOperator{\bM}{\bf M}
\DeclareMathOperator{\bP}{\bf P}
\DeclareMathOperator{\bu}{\bf u}
\DeclareMathOperator{\bV}{\bf V}
\DeclareMathOperator{\bw}{\bf w}
\DeclareMathOperator{\bX}{\bf X}
\DeclareMathOperator{\bx}{\bf x}
\DeclareMathOperator{\bY}{\bf Y}
\DeclareMathOperator{\by}{\bf y}
\DeclareMathOperator{\bz}{\bf z}
\DeclareMathOperator{\bone}{ \boldsymbol{1} }
\DeclareMathOperator{\bepsilon}{ \boldsymbol{\epsilon} }
\DeclareMathOperator{\bdlambda}{ \bd \boldsymbol{\lambda} }
\DeclareMathOperator{\bdnu}{ \bd \boldsymbol{\nu}}
\DeclareMathOperator{\blambda}{ \boldsymbol{\lambda} }
\DeclareMathOperator{\bnu}{ \boldsymbol{\nu}}
\DeclareMathOperator{\bSigma}{ \boldsymbol{\Sigma} }
\DeclareMathOperator{\btheta}{ \boldsymbol{\theta} }
\DeclareMathOperator{\bTheta}{\Theta}
\providecommand{\keywords}[1]{\textbf{\textit{Keywords:}} #1}
\newcommand{\toi}[2][i]{%
  \mathop{
    \mathrm{#2}^{( #1 )}
  }
}
\newcommand{\toit}[2][i]{%
  \mathop{
    \mathrm{#2}^{{T}^{( #1 )}}
  }
}
\newcommand{\toinv}[2][i]{%
  \mathop{
    \mathrm{#2}^{{(#1)}^{-1 }}
  }
}
\begin{document}

\title{Integrating prediction in mean-variance portfolio optimization}




\author{Andrew Butler and Roy H. Kwon \\ University of Toronto\\Department of Mechanical and Industrial Engineering}

\maketitle

\begin{abstract}
Prediction models are traditionally optimized independently from their use in the asset allocation decision-making process. We address this shortcoming and present a framework for integrating regression prediction models in a mean-variance optimization (MVO) setting. Closed-form analytical solutions are provided for the unconstrained and equality constrained MVO case. For the general inequality constrained case, we make use of recent advances in neural-network architecture for efficient optimization of batch quadratic-programs. To our knowledge, this is the first rigorous study of integrating prediction in a mean-variance portfolio optimization setting. We present several historical simulations using both synthetic and global futures data to demonstrate the benefits of the integrated approach.

\end{abstract}

\keywords{Data driven optimization, mean-variance optimization, regression, differentiable neural networks}

\section{Introduction} \label{sec:intro}
Many problems in quantitative finance can be characterized by the following elements:
\begin{enumerate}
\item A sample data set $ \bY = \{   \by^{(1)},...,   \by^{(m)} \} = \{  \toi{\by} \}_{i=1}^m$ of uncertain quantities of interest, $  \toi{\by} \in \mathbb{R}^{d_y}$, such as asset returns.
\item A decision, $ \bz \in \mathbb{R}^{d_z}$, often constrained to some feasible region $\mathbb{S} \subseteq \mathbb{R}^{d_z}$.
\item An \textit{objective (cost) function}, $c \colon \mathbb{R}^{d_z} \times \mathbb{R}^{d_y} \to \mathbb{R}$,  to be minimized over decision variable $ \bz \in \mathbb{S}$ in the context of the observed realization $ \toi{\by}$.
\end{enumerate}

For example, in portfolio management we are often presented with the following problem:  for a particular observation of asset returns, $ \toi{\by}$, the objective is to construct a vector of assets weights, $ \bz^*( \toi{\by})$, that minimizes the  cost, $c( \bz, \toi{\by})$ and  adheres to the constraint set $\mathbb{S}$. A common choice for  cost is the Markowitz mean-variance quadratic objective \citep{Markowitz1952}, with typical constraints being that the weights be non-negative and sum to one. Of course, the realization of asset returns, $\{  \toi{\by}\}_{i=1}^m$, are not directly observable at decision time and instead must be estimated through associated feature data $ \bX = \{   \bx^{(1)},...,  \bx^{(m)} \}$, of covariates of $ \bY$. Let $f\colon \mathbb{R}^{d_x} \times \mathbb{R}^{d_\theta} \to \mathbb{R}^{d_y}$ denote the $ \btheta$-parameterized prediction model for estimating $\hat{ \by}$. In this paper, we consider regression prediction models of the form:
$$ \toi{\hat{ \by}} = f( \toi{\bx}, \btheta) =   \btheta^T   \toi{\bx},$$
with regression coefficient matrix $ \btheta \in \mathbb{R}^{d_x \times d_y}$.

In most applications, estimating $ \toi{\hat{ \by}}$ requires solving an independent \textit{prediction optimization problem} over the prediction model parameter $\btheta$.  Continuing with the example above; in order to generate mean-variance efficient portfolios we must supply, at a minimum, an estimate of expected asset returns and covariances. A prototypical framework would first estimate the conditional expectations of asset returns and covariances by ordinary least-squares (OLS) regression and then `plug-in' those estimates to a mean-variance quadratic program (see for example \citet{Goldfarb2003}, \citet{Clarke2005} \citet{Chen2015}).

As exemplified above, prediction and decision-based optimization are often decoupled processes; first predict, then optimize. Indeed a perfect prediction model $(\toi{\hat{ \by}} = \toi{\by} )$ would invariably lead to optimal decision-making. In reality, however, prediction models rarely have perfect accuracy and as such an  inefficiency exists in the `predict, then optimize' paradigm; prediction models are estimated  in order to produce `optimal' predictions, not  `optimal' decisions.  

In this paper, we follow the work of \citet{Donti2017}, \citet{Elma2020} and others, and propose the use of an integrated prediction and optimization (IPO) framework with direct applications to mean-variance portfolio optimization.  Specifically,  we estimate $ \btheta$ such that the resulting optimal decisions, $\{ \bz^*(  \toi{ \hat{ \by} } ) \}_{i=1}^m$, minimizes the expected realized  decision cost:
\begin{equation} \label{eq:ed1}
\begin{split}
\minimize_{  \btheta \in  \bTheta} \quad & L(  \btheta) = \mathbb{E}[ c(  \bz^*(\hat{ \by}),   \by) ]  \\
\text{subject to }\quad &    \bz^*( \hat{ \by})   = \argmin_{  \bz \in \mathbb{S}}  c(  \bz,  \hat{ \by} ),
\end{split}
\end{equation}

Solving Program $\eqref{eq:ed1}$ challenging for several reasons. First, even in the case where the decision program is convex, the resulting integrated program is likely not convex in $ \btheta$ and therefore we have no guarantee that a particular local solution is globally optimal. Secondly,  as outlined by  \citet{Donti2017}, in the case where $L(  \btheta)$ is differentiable, computing the gradient, $\nabla_{  \btheta} L$, remains difficult as it requires differentiation through the $\argmin$ operator. Moreover, solving program $\eqref{eq:ed1}$ through iterative descent methods can be computationally demanding as at each iteration we must solve several instances of $ \bz^*( \toi{\hat{ \by} }).$ 

In this paper we address the aforementioned challenges and provide an efficient framework for integrating linear regression predictions into a mean-variance portfolio optimization.  The remainder of the paper is outlined as follows. We first review the growing body of literature in the field of integrated methods and summarize our primary contributions. In Section \ref{sec:method} we present the mean-variance portfolio optimization problem and provide the IPO formulation. We review the current state-of-the-art approach for locally solving Program $\eqref{eq:ed1}$ in the presence of lower-level inequality constraints. We then consider several special instances of the IPO mean-variance optimization problem. In particular, we demonstrate that when the  MVO program is either unconstrained or contains only linear equality constraints then the IPO problem can be recast as a convex quadratic program and solved analytically. We discuss the sampling distribution properties of the optimal IPO regression coefficients and demonstrate that the IPO solution explicitly minimizes the tracking-error to ex-post optimal mean-variance portfolios. 

In Section \ref{sec:sims_1} we perform several simulation studies, using synthetically generated data, and compare the IPO approach to a traditional `predict, then optimize' framework with prediction models estimated by OLS regression. In Sections \ref{sec:sims_2} we discuss the computational challenges of the state-of-the art iterative solution. We demonstrate the computational advantage of the closed-form IPO solution, which guarantees optimality and is approximately an order of magnitude more computationally efficient. In Section \ref{sec:sims_3} we present a simulation that demonstrates that a heuristic analytical IPO solution, with inequality constraints removed, can provide competitive out-of-sample performance and lower variance over a wide range of problem parameterizations.  We conclude in Section \ref{sec:results}  with a historical analysis using global futures data and demonstrate that the IPO framework can provide lower realized costs and improved economic outcomes in comparison to the `predict, then optimize' alternative.

\subsection{Existing Literature}
In recent years there has been a growing body of research on methods for integrating prediction models with downstream decision-making processes.  For example, \citet{Ban2019} present a direct empirical risk minimization approach using nonparametric kernel regression as the core prediction method. They consider a data-driven newsvendor problem and demonstrate that their approach outperforms the `best-practice benchmark' when evaluated out-of-sample. More recently, \citet{Kannan2020} present three frameworks for integrating machine learning prediction models within a stochastic optimization setting. Their primary contribution is in using the out-of-sample residuals from leave-one-out prediction models to generate scenarios which are then optimized in the context of a sample average approximation program. Their frameworks are flexible and accommodate parametric and nonparametric prediction models, for which they derive convergence rates and finite sample guarantees.

\citet{Bert2020} present a general framework for optimizing a conditional stochastic approximation program whereby the conditional density is estimated through a variety of parametric and nonparametric machine learning methods. They generate locally optimal decision policies within the context of the decision optimization problem and consider the setting where the decision policy affects subsequent realizations of the uncertainty variable. They also consider an empirical risk minimization framework for generating predictive prescriptions and discuss the relative trade-offs of such an approach

Recently, \citet{Elma2020} proposed replacing the prediction-based loss function with a convex surrogate loss function that optimizes prediction variables based on the decision error induced by the prediction. They demonstrate that their `smart predict, then optimize' (SPO) loss function attains Fisher consistency with the least-squares loss function and show through example that optimizing predictions in the context of decision objectives and constraints can lead to improved overall decision error. The SPO loss function however is limited to linear objective functions, and despite convexity can be computationally demanding due to repeatedly solving the decision program.

Our approach is most similar to, and is largely inspired by, the work of \citet{Amos2017} and \citet{Donti2017}. Recall that computing the Jacobian, $\partial  \bz^*/\partial \btheta$, is complicated by the bi-level structure of Program $\eqref{eq:ed1}$. \citet{Amos2017} present an efficient framework for embedding quadratic programs as differentiable layers in a neural network. The author's demonstrate  that for linearly constrained quadratic programs, implicit differentiation of the KKT optimality conditions provides the necessary ingredients for computing the desired gradient, $\partial L/\partial  \btheta$. \citet{Donti2017} present the first direct application of the aforementioned work and propose an end-to-end stochastic programming approach for estimating the parameters of probability density functions in the context of their final task-based loss function. They consider applications from power scheduling and battery storage and demonstrate that their `task-based end-to-end' approach can result in lower out-of-sample decision costs in comparison to traditional maximum likelihood estimation and a black-box neural network.

\subsection{Main Contributions}
While our methodology follows closely to that of \citet{Donti2017} and \citet{Elma2020}, in this paper we provide several notable differences and extensions. 

\begin{enumerate}
\item We consider linear regression prediction models with a downstream quadratic MVO objective function. We demonstrate that when the MVO program is either unconstrained or contains only linear equality constraints then the integrated program can be recast as quadratic program. We discuss the necessary conditions for convexity and provide analytical solutions for the optimal IPO  coefficients, $\btheta^*$. We present conditions for which $\btheta^*$ is an unbiased estimator of $\btheta$ and derive the analytical expression for the variance.  We demonstrate that the IPO coefficients explicitly minimize the tracking error to the unconstrained ex-post optimal MVO portfolio and provide the equivalent minimum-tracking error optimization program.   

\item We conduct three simulation studies based on synthetically generated data. The first simulation compares the out-of-sample performance of the IPO and OLS models under varying degrees of estimation error. We demonstrate that for unconstrained and equality constrained cases, the IPO model can produce consistently lower out-of-sample decision costs. The second simulation demonstrates the computational advantage of the analytical IPO solution over a wide range of asset universe sizes. The third simulation considers linear inequality constrained  MVO program under varying degrees of model misspecification. We propose approximating the non-convex problem with the analytical IPO solution whereby the inequality constraints are ignored.  We demonstrate the computational and performance advantage of the analytical IPO solution, which is on average $100$x - $1000$x times faster than the current state-of-the-art method and produces solutions with lower out-of-sample variance and, in some instances, improved MVO costs.

\item	We perform several historical simulations using global futures data, considering both unconstrained and constrained MVO programs and univariate and multivariate regression models. Out-of-sample numerical results demonstrate that the IPO model can provide lower realized cost and superior economic performance in comparison to the traditional  OLS `predict then optimize' approach.


\end{enumerate}

Finally we note that in this paper, asset mean returns are estimated through linear regression, whereas the asset covariance matrices are estimated by a traditional weighted moving average approach \citep{Bauwens2003, Zum2006}. This is supported by the observation that asset mean returns are both nonstationary and heterogeneous and are therefore likely to be dependent on feature data \citep{Engle1982,Hsu1974,Officer1971}, whereas variance and covariances are typically much more stable and exhibit strong autocorrelation effects \citep{Bollerslev1986,Drees2002,Starica2005}. Moreover, \citet{Chopra1993} and \citet{Best1991} report that MVO portfolio weights are an order of magnitude more sensitive to the estimate of asset mean returns compared to estimates of asset covariances. The choice of linear regression model is deliberate and motivated by the long established history of regression forecasting in the financial literature (see for example, \citep{Fama1992,Fama1993,Fama2015}).  Indeed, asset returns are often characterized as time-varying and reactive, and typically exhibit extremely low signal-to-noise ratios (SNRs) \citep{Israel2020}. As a result, low variance models, like simple linear regression, tend to generalize out-of-sample and are often preferred over models of higher complexity. 

:

\section{Methodology}\label{sec:method}

\subsection{IPO: Mean-Variance Optimization}\label{sec:mvo}
We begin with a brief introduction to mean-variance portfolio optimization. We denote the matrix of (excess) return observations as $\bY=[\by^{(1)},\by^{(2)},...,\by^{(m)} ] \in \mathbb{R}^{m\times d_z}$ and denote the portfolio at time $i$ as $\toi{\bz} \in \mathbb{R}^{d_z}$ . Let $\toi{\bV}  \in \mathbb{R}^{d_z \times d_z}$ denote the time-varying symmetric positive definite covariance matrix of asset returns. The mean variance cost function at time $i$ is given by:
\begin{equation}\label{eq:mvo_cost}
c(\bz, \toi{\by} ) =    -\bz^T \toi{\by} + \frac{\delta}{2} \bz^T \toi{\bV} \bz
\end{equation}
with risk-aversion parameter $\delta \in \mathbb{R}_+$ and denote the optimal portfolio weights as:
\begin{equation}\label{eq:mvo}
\begin{split}
\bz^*( \toi{\by} ) & = \argmin_{\bz \in \mathbb{S}}  -\bz^T \toi{\by} + \frac{\delta}{2} \bz^T \toi{\bV} \bz.
\end{split}
\end{equation}

In reality, we do not know the values $\toi{\by}$ or $\toi{\bV}$ at decision time. In this paper we model the time-varying covariance matrix using a weighted moving average approach and denote the covariance estimate as $\toi{\hat{\bV}}$.  Asset returns are modelled according to the following linear model:
\begin{equation}\label{eq:y_i_general}
\toi{\by} = \bP \diag(\toi{\bx}) \btheta + \toi{\bepsilon}
\end{equation}
with residuals $\toi{\bepsilon} \sim \mathcal{N}(\bm 0, \bSigma )$. Here $\diag(\cdot)$ denotes the usual diagonal operator and $\bP \in \mathbb{R}^{d_y \times d_x}$ controls the regression design with each element $\bP_{jk} \in \{0,1\}$. In particular, we assume that each asset has its own, perhaps unique, set of feature variables. For example, if the feature variables represent price-to-earnings (P/E) and debt-to-equity (D/E) ratios for each asset under consideration, then it would be unrealistic to model a particular assets return as a function of all available P/E and D/E ratios. Indeed, doing so would almost certainly lead to model overfit. Instead, we choose to model asset $j$'s return as a linear function of the P/E and D/E ratios relevant to asset $j$, specifically:
\begin{equation}\label{eq:y_j_general}
\toi{\hat{\by}_j}  = \btheta^T_{\ba(j)} \toi{\bx_{\ba(j)}},
\end{equation}
where $\ba(j)$ denotes the indices of the feature variables relevant to asset $j$. Therefore,
\begin{equation}\label{P}
    \bP_{jk} =
    \begin{cases}
      1, & \text{if}\ k \in \ba(j) \\
      0, & \text{otherwise}
    \end{cases}
  \end{equation}
and for observation $i$, the regression estimate of asset expected returns is given by:
\begin{equation}\label{eq:y_hat_i_general}
\toi{\hat{\by}} = f(\toi{\bx},\btheta) =  \bP \diag(\toi{\bx}) \btheta.
\end{equation}

Given that $\toi{\by}$ and $\toi{\bV}$ are unobservable, it follows that in practice portfolio managers solve the MVO program under the estimation hypothesis:
\begin{equation}\label{eq:mvo_cost_hat}
\begin{split}
\minimize_{ \bz \in \mathbb{S}} c(\bz, \toi{\hat{\by}} )&  = -\bz^T \toi{\hat{\by}} + \frac{\delta}{2} \bz^T \toi{\hat{\bV}} \bz =  -\bz^T \bP \diag(\toi{\bx}) \btheta + \frac{\delta}{2} \bz^T \toi{\hat{\bV}} \bz
\end{split}
\end{equation}

In a `predict, then optimize' parameter estimation, $ \btheta$ would be chosen in order to minimize a prediction loss function $\ell \colon \mathbb{R}^{d_y} \times \mathbb{R}^{d_y} \to \mathbb{R}$, such as least-squares. We denote $\mathbb{E}_D$ as the expectation operator with respect to the training set $ D = \{(  \toi{\bx},  \toi{\by})\}_{i=1}^m$ and choose $\hat{\btheta}$ such that:
\begin{equation} \label{theta_hat}
\begin{split}
    \hat{ \btheta} = \argmin_{  \btheta \in  \bTheta}  \mathbb{E}_D[\ell (f( \toi{\bx},  \btheta),  \toi{\by})],
\end{split}
\end{equation}
A `predict, then optimize' framework, would simply `plug-in' the estimate, $\toi{\hat{\by}} = \bP \diag(\toi{\bx})    \hat{ \btheta} $, into program $\eqref{eq:mvo_cost_hat}$ in order to generate the optimal decisions $ \bz^*(\toi{\hat{\by}})$.

Conversely, in the IPO framework, the objective is to choose $\btheta$ in order to minimize the average MVO cost induced by the optimal decisions $\{\bz^*( \toi{\hat{\by}} )\}_{i=1}^m$. Specifically, we solve the bi-level optimization program $\eqref{eq:ed1}$, presented in discrete form in program $\eqref{eq:mvo_full}$:

\begin{equation} \label{eq:mvo_full}
\begin{split}
\minimize_{ \btheta \in \bTheta} \quad & L(\btheta) = \frac{1}{m} \sum_{i = 1}^m \Big(   -\bz^*(\toi{\hat{\by}}) ^T \toi{\by} + \frac{\delta}{2} \bz^*(\toi{\hat{\by}}) ^T \toi{\bV} \bz^*(\toi{\hat{\by}})  \Big)  \\
\text{subject to }\quad  &     \bz^*(\toi{\hat{\by}})   = \argmin_{\bz \in \mathbb{S}}  -\bz^T \bP \diag(\toi{\bx}) \btheta + \frac{\delta}{2} \bz^T \toi{\hat{\bV}} \bz \quad \forall i=1,...,m.
\end{split}
\end{equation}

Note at this point we have not described the feasible region, $\mathbb{S}$, of the MVO program. In the following subsections we briefly discuss the general case where $\mathbb{S}$ describes a set of linear equality and inequality constraints and formalize the current state-of-the-art neural-network framework. We then discuss two special cases in which an analytical solution to the  MVO problem is possible and derive the relevant theory.

\subsection{Current state-of-the-art methodology} \label{sec:ineq}
We begin  with the general case whereby the feasible region of the MVO program is defined by both linear equality and inequality constraints. Specifically we consider the following MVO program:

\begin{equation}\label{eq:mvo_ineq}
\begin{split}
\minimize_{\bz} \quad &  -\bz^T \toi{\hat{\by}} + \frac{\delta}{2} \bz^T \toi{\hat{\bV}} \bz \\
\text{subject to} \quad &   \bA \bz = \blb\\
\quad & \bG \bz \leq \bh
\end{split}
\end{equation}
where $\bA \in \mathbb{R}^{d_{\text{eq}} \times d_{\bz} }$, $\blb \in \mathbb{R}^{d_{\text{eq}}}$ and $\bG \in \mathbb{R}^{d_{\text{iq}} \times d_{\bz}}$, $\bh \in \mathbb{R}^{d_{\text{iq}}}$ describe the linear equality and inequality constraints, respectively.

 In general, there is no known analytical solution to Program $\eqref{eq:mvo_ineq}$ and instead the solution, $\bz^*$ is obtained through iterative optimization methods. Moreover, because of the inequality constraints, the IPO objective, $L(\btheta)$, is generally not a convex function of $\btheta$. Therefore, in the general case we follow  \citet{Amos2017} and \citet{Donti2017} and compute locally optimal solutions, $\btheta^*$, by restructuring Program  $\eqref{eq:mvo_full}$ as an end-to-end neural network and apply (stochastic) gradient descent. The IPO equivalent neural-network structure is depicted in Figure \ref{fig:network}. In the forward pass, the input layer takes the feature variables $\toi{\bx}$ and passes them to a simple linear layer to produce the estimates, $\toi{\hat{\by}}$. The predictions are then passed to a differentiable quadratic programming layer which, for a given input $\toi{\hat{\by}}$, solves the decision program and returns the optimal portfolio weights $\bz^*(\toi{\hat{\by}}).$  Finally, the quality of the portfolio decisions are evaluated by the MVO cost function, $c(\bz^*(\toi{\hat{\by}}), \toi{\by})$ in the context of the true return values  $\toi{\by}$. We refer the reader to Appendix \ref{sec:app_proof} for more comprehensive implementation details.

\begin{figure}[H]
\centering
\begin{tikzpicture}[
    node distance = 5mm and 7mm,
      start chain = going right,
  alg/.style = {draw, align=center, font=\linespread{0.8}\selectfont, minimum width={width("Backward Passing:  ")+7pt}}
                    ]
\begin{scope}[every node/.append style={on chain, join=by -Stealth}]
\node (n1) [alg] {$\quad \toi{\bx} \quad $};
\node (n2) [alg]  {$\toi{\hat{\by}} = \bP \diag(\toi{\bx})\btheta $};
\node (n3) [alg]  {$\min_{\bz} c( \bz,  \toi{\hat{\by}}  )$};
\node (n4) [alg] { $c(\bz^*, \toi{\by})$  };
\end{scope}

\node[above=of n1]  {Input layer};
\node[above=of n2]  {Linear layer};
\node[above=of n3]  {QP layer};
\node[above=of n4]  {Loss Function};
\end{tikzpicture}

\begin{tikzpicture}[
    node distance = 5mm and 7mm,
      start chain = going left,
  alg/.style = {draw, align=center, font=\linespread{0.8}\selectfont, minimum width={width("Backward Passing:  ")+7pt} }
                    ]
\begin{scope}[every node/.append style={on chain, join=by -Stealth}]
\node (n4) [alg] { $\partial c / \partial \bz^*$   };
\node (n3) [alg]  {$\partial \bz^* / \partial \toi{\hat{\by}} $};
\node (n2) [alg]  {$\partial \toi{\hat{\by}} / \partial \btheta $};
\node (n1) [alg] {$\btheta \leftarrow \btheta - \bg_{\btheta}   $};

\end{scope}

\end{tikzpicture}
\caption{IPO program represented as an end-to-end neural-network with predictive linear layer, differentiable quadratic programming layer and realized cost loss function. }
\label{fig:network}
\end{figure}
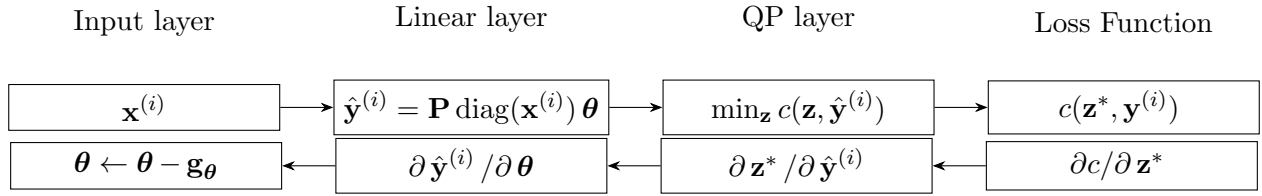

In the general case we compute a locally optimal solution, $\btheta^*$,  by applying (stochastic) gradient descent. Prediction model parameters are updated by backpropogation, with descent direction, $\bg_{\theta}$, estimated over a randomly drawn sample batch, $B$:
$$
\bg_{\theta} = \sum_{i\in B}\Big( \frac{\partial c   }{\partial \btheta} \Big)_{\mid (\bz^*(\toi{\hat{\by}}),\toi{\by} )} \approx \nabla_\theta L.
$$
Note that each iteration of gradient descent therefore requires forward solving and backward differentiating through, at most, $m$ mean-variance optimization programs, $\{ \bz^*(\toi{\hat{\by}}) \}_{i=1}^m$, which in some applications can be computationally expensive to compute. 

\subsection{Special case 1:  $\mathbb{S} = \mathbb{R}^{d_z}$}\label{sec:method_uncon}

We are motivated by \citet{Gould2016} who demonstrate that under special constraint cases, an analytical solution for the gradient and Hessian of a bi-level optimization problem exists. We first consider the case where the  MVO program is unconstrained  and therefore an analytical solution is given by Equation $\eqref{eq:z_star_uncon}$.

\begin{equation}\label{eq:z_star_uncon}
\begin{split}
\bz^*(\toi{\hat{\by}})  & = \frac{1}{\delta}\toinv{\hat{\bV}} \toi{\hat{\by}} =  \frac{1}{\delta} \toinv{\hat{\bV}} \bP \diag(\toi{\bx}) \btheta.
\end{split}
\end{equation}

\begin{prop}\label{prop:l_uncon_uni}
Let $\mathbb{S} = \mathbb{R}^{d_z}$ and $\bTheta = \mathbb{R}^{d_\theta}$. We define

\begin{equation}\label{eq:d_uncon}
\bd(\bx,\by)  = \frac{1}{m\delta} \sum_{i = 1}^m  \Big( \diag(\toi{\bx}) \bP^T \toinv{\hat{\bV}}  \toi{\by} \Big)
\end{equation}
and

\begin{equation}\label{eq:h_uncon}
\bH(\bx)  =  \frac{1}{m \delta} \sum_{i = 1}^m  \Big( \diag(\toi{\bx}) \bP^T \toinv{\hat{\bV}}  \toi{\bV}  \toinv{\hat{\bV}} \bP \diag(\toi{\bx})  \Big).
\end{equation}
Then the IPO program $\eqref{eq:mvo_full}$ is an unconstrained quadratic program (QP) given by:

\begin{equation} \label{eq:l_uncon_uni}
\minimize_{\btheta \in \bTheta} \frac{1}{2}\btheta^T \bH(\bx)  \btheta -  \btheta^T \bd(\bx,\by).
\end{equation}
Furthermore, if there exists an $\toi{\bx}$ such that $\toi{\bx_j} \neq 0 \quad \forall j \in 1,...,d_x$ then $\bH(\bx)  \succ 0$ and therefore program $\eqref{eq:l_uncon_uni}$ is an unconstrained convex quadratic program with unique minimum:
\begin{equation} \label{eq:theta_star}
\btheta^* = \bH(\bx)^{-1}  \bd(\bx,\by).
\end{equation}

\end{prop}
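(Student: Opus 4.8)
The plan is to substitute the closed-form unconstrained solution \eqref{eq:z_star_uncon} directly into the objective $L(\btheta)$ of \eqref{eq:mvo_full}, verify that the result collapses to the quadratic form \eqref{eq:l_uncon_uni}, and then obtain convexity and the minimizer from standard facts about positive-definite quadratic forms. To keep the bookkeeping clean I would first introduce the shorthand $\bM_i = \toinv{\hat{\bV}}\,\bP\diag(\toi{\bx})$, so that \eqref{eq:z_star_uncon} reads $\bz^*(\toi{\hat{\by}}) = \tfrac1\delta \bM_i\btheta$, i.e.\ the optimal decision is a \emph{linear} map in $\btheta$. This is the structural fact that makes the whole reduction possible.

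Next I would carry out the substitution. Using the symmetry of $\toinv{\hat{\bV}}$ (so that $\bM_i^T = \diag(\toi{\bx})\bP^T\toinv{\hat{\bV}}$), the linear summand of $L$ becomes $-\tfrac1\delta\btheta^T\bM_i^T\toi{\by}$ and the quadratic summand becomes $\tfrac{1}{2\delta}\btheta^T\bM_i^T\toi{\bV}\bM_i\btheta$. Averaging over $i=1,\dots,m$ and matching against \eqref{eq:d_uncon}--\eqref{eq:h_uncon}, the linear coefficient is exactly $\bd(\bx,\by)$ and the quadratic coefficient is exactly $\bH(\bx)=\tfrac{1}{m\delta}\sum_i \bM_i^T\toi{\bV}\bM_i$, which establishes \eqref{eq:l_uncon_uni}. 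This step is purely mechanical tracking of transposes and powers of $\delta$; the only care needed is to confirm that the $1/\delta$ factors match the stated definitions.

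The substantive step is positive definiteness. I would write the quadratic form as $\btheta^T\bH(\bx)\btheta = \tfrac{1}{m\delta}\sum_{i=1}^m (\bM_i\btheta)^T\toi{\bV}(\bM_i\btheta)$. Since each $\toi{\bV}\succ 0$ and $\delta>0$, every summand is nonnegative, so $\bH(\bx)\succeq 0$ unconditionally, and the form vanishes \emph{iff} $\bM_i\btheta=0$ for every $i$. To upgrade to strict positivity it therefore suffices to exhibit a single index $i$ for which $\bM_i$ has trivial kernel. Choosing the $i$ guaranteed by hypothesis, $\diag(\toi{\bx})$ is invertible because every entry $\toi{\bx_j}$ is nonzero, and $\toinv{\hat{\bV}}$ is invertible by positive definiteness of $\toi{\hat{\bV}}$; hence $\bM_i\btheta=0$ forces $\bP\diag(\toi{\bx})\btheta=0$, and full column rank of $\bP$ then yields $\btheta=0$, giving $\bH(\bx)\succ 0$. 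I expect this to be the main obstacle: the nonvanishing-feature condition only delivers invertibility of $\diag(\toi{\bx})$, so one must argue separately that $\bP\diag(\toi{\bx})$ is injective — immediate in the univariate design where $\bP$ reduces to (a permutation of) the identity, but in general requiring that $\bP$ itself have full column rank.

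Finally, with $\bH(\bx)\succ 0$ the program \eqref{eq:l_uncon_uni} is a strictly convex unconstrained QP, so its unique minimizer is characterized by the first-order stationarity condition $\nabla_\btheta L = \bH(\bx)\btheta-\bd(\bx,\by)=0$; since $\bH(\bx)$ is invertible this gives $\btheta^*=\bH(\bx)^{-1}\bd(\bx,\by)$, which is \eqref{eq:theta_star}.
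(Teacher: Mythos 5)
Your proposal is correct and follows essentially the same route as the paper's proof: direct substitution of the closed-form solution \eqref{eq:z_star_uncon} into the IPO objective \eqref{eq:mvo_full}, coefficient matching against \eqref{eq:d_uncon}--\eqref{eq:h_uncon}, a quadratic-form positivity argument, and first-order stationarity to obtain \eqref{eq:theta_star}. The only substantive difference is that the paper packages positive definiteness as a standalone lemma ($\bm B^T \bV \bm B \succ 0$ whenever $\bm B$ has full column rank) and applies it without verifying the rank hypothesis, whereas you correctly flag that one must separately argue $\bP \diag(\toi{\bx})$ is injective --- a condition that holds in the univariate design where $\bP$ is (a permutation of) the identity, but necessarily fails for any single observation in the multivariate design where $d_x > d_y$ (there $\mathrm{rank}(\bP) \leq d_y < d_x$, so each summand of $\bH(\bx)$ is singular and positive definiteness can only come from aggregating several observations); your caveat thus identifies a real gap in the paper's own application of its lemma.
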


All proofs are provided in Appendix \ref{sec:app_proof}. We make a few important observations. The first, is that for the realistic case where there exists an $\toi{\bx}$ such that each $\toi{\bx_j}$ are not exactly zero, then the optimal IPO regression coefficients, $\btheta^*$, are unique. Furthermore, we observe that the solution is independent of the risk-aversion parameter. This is intuitive, as when the MVO program is unconstrained, then the risk-aversion parameter simply controls the scale of the resulting solutions $\bz^*(\toi{\hat{\by}})$. 

We note that the solution presented in Equation $\eqref{eq:theta_star}$ requires the action of the inverse of the Hessian: $\bH(\bx)$. In many applications of machine learning, such as computer vision or statistical meta-modelling, it is difficult, if not impossible, to solve the inverse problem without customized algorithms or prior knowledge of the data (see for example \citet{Jones1994}, \citet{Ranjan2011}, \citet{Ongie2020}). In many cases, the dimension of the relevant Hessian is either too large for both forward-mapping and inversion in reasonable compute time or is computationally unstable due to near-singularity. In our IPO framework, we fortunately do not encounter these technical difficulties surrounding the action of the inverse. In most practical settings, the dimension of the Hessian matrix, is on the order of $10$ or $100$, whereas the number of observations, $m$, is on the order of $1000$ or $10000$. The Hessian is therefore likely to be computationally stable and the action of the inverse is computationally tractable. This is validated numerically in Section $\ref{sec:sims_2}$ and we demonstrate the computational advantage of the analytical solution over the iterative descent method. 

Furthermore, while outside of the scope of the current paper, we note that under the QP formulation  $\eqref{eq:l_uncon_uni}$, it is trivial to incorporate both regularization and constraints on $\btheta$. This is demonstrated by Program $\eqref{eq:l_uncon_uni_reg}$:

\begin{equation} \label{eq:l_uncon_uni_reg}
\begin{split}
\minimize_{\bz} \quad &  \frac{1}{2}\btheta^T \bH(\bx)  \btheta -  \btheta^T \bd(\bx,\by) + \Omega(\mid \btheta \mid) \\
\text{subject to} \quad &   \bA_{\btheta} \btheta = \blb_{\btheta}\\
\quad & \bG_{\btheta} \btheta \leq \bh_{\btheta}
\end{split}
\end{equation}
where $\Omega \colon \mathbb{R}^{d_\theta} \to \mathbb{R}$ is a convex regularization function. In most cases, Program $\eqref{eq:l_uncon_uni_reg}$ can be solved to optimality by standard quadratic programming techniques, whereas the incorporation of regularization and constraints in the current state-of-the-art solution is structurally more challenging.

We now discuss the properties of the sampling distribution of the IPO parameter estimate, $\btheta^*$, and derive an estimate of the variance, $\Var{(\btheta^*)}$. Recall, from Equation $\eqref{eq:y_i_general}$ we have: $\toi{\by} \sim \mathcal{N}(\bP \diag(\toi{\bx}) \btheta , \bSigma )$.

\begin{prop}\label{prop:uncon_bias_general}
Let
\begin{equation}\label{eq:du}
 \bd_{\bu}(\bx) = \frac{1}{m\delta} \sum_{i = 1}^m  \Big( \diag(\toi{\bx}) \bP^T \toinv{\hat{\bV}} \bP \diag(\toi{\bx}) \Big),
 \end{equation}
then the optimal IPO estimate, $\btheta^*$, is a biased estimate of $\btheta$ with bias $\bH(\bx)^{-1}\bd_{\bu}(\bx)$.
\end{prop}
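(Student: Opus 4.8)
The plan is to start from the closed-form minimizer supplied by Proposition~\ref{prop:l_uncon_uni}, namely $\btheta^* = \bH(\bx)^{-1}\bd(\bx,\by)$ of Equation~\eqref{eq:theta_star}, and to take the expectation $\mathbb{E}_D$ over the training returns. The key structural observation is that under $\mathbb{E}_D$ the only random object is $\bd(\bx,\by)$: the Hessian $\bH(\bx)$, the design $\bP$, the features $\toi{\bx}$, the covariance estimates $\toinv{\hat{\bV}}$, and the true covariances $\toi{\bV}$ are all treated as fixed, since we condition on the features and on the separately (moving-average) estimated covariances. Hence $\mathbb{E}_D[\btheta^*] = \bH(\bx)^{-1}\,\mathbb{E}_D[\bd(\bx,\by)]$, and the argument reduces to evaluating $\mathbb{E}_D[\bd(\bx,\by)]$.

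First I would invoke the linear model~\eqref{eq:y_i_general}, which gives $\mathbb{E}_D[\toi{\by}] = \bP\diag(\toi{\bx})\btheta$ because the residuals $\toi{\bepsilon}$ are zero-mean. Substituting this into the definition~\eqref{eq:d_uncon} of $\bd(\bx,\by)$ and moving the expectation inside the finite sum by linearity yields
\begin{equation*}
\mathbb{E}_D[\bd(\bx,\by)] = \frac{1}{m\delta}\sum_{i=1}^m \Big( \diag(\toi{\bx})\bP^T \toinv{\hat{\bV}} \bP \diag(\toi{\bx}) \Big)\btheta = \bd_{\bu}(\bx)\,\btheta,
\end{equation*}
where the last equality is just the definition~\eqref{eq:du} of $\bd_{\bu}(\bx)$, with the constant $\btheta$ factored out of the sum. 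Combining this with the previous display gives $\mathbb{E}_D[\btheta^*] = \bH(\bx)^{-1}\bd_{\bu}(\bx)\,\btheta$, so in expectation $\btheta^*$ equals the true coefficient pre-multiplied by $\bH(\bx)^{-1}\bd_{\bu}(\bx)$; this is the claimed bias factor, and it collapses to the identity --- i.e.\ $\btheta^*$ is unbiased --- precisely when $\bH(\bx) = \bd_{\bu}(\bx)$, which by inspection of \eqref{eq:h_uncon} and \eqref{eq:du} holds whenever the covariance is correctly specified, $\toi{\hat{\bV}} = \toi{\bV}$ for every $i$.

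The computation itself is routine linearity of expectation, so the only genuine subtlety --- and the step I would take most care over --- is the conditioning argument that lets $\bH(\bx)$ and the factors $\toinv{\hat{\bV}}$, $\toi{\bV}$ be pulled outside $\mathbb{E}_D$. I would make this explicit by stating that the expectation is taken over the conditional law of the returns given the features and the independently constructed covariance estimates, so that every term other than $\toi{\by}$ is deterministic. I would also note that $\bH(\bx)^{-1}$ is well defined under the nondegeneracy hypothesis of Proposition~\ref{prop:l_uncon_uni} (existence of some $\toi{\bx}$ with all nonzero entries), which guarantees $\bH(\bx)\succ 0$, so that both $\btheta^*$ and its expectation are well posed.
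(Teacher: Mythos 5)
Your proposal is correct and follows essentially the same route as the paper's own proof: substitute the closed form $\btheta^* = \bH(\bx)^{-1}\bd(\bx,\by)$, pull the deterministic factors outside the expectation, use $\mathbb{E}[\toi{\by}] = \bP\diag(\toi{\bx})\btheta$ from the linear model, and identify the resulting sum as $\bd_{\bu}(\bx)\btheta$. Your added remarks on conditioning and on the invertibility of $\bH(\bx)$ only make explicit what the paper leaves implicit, so there is no substantive difference.
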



\begin{cor}\label{cor:uncon_unbias_general}
Let  $\btheta_{\bu}^* = \bd_{\bu}(\bx)^{-1}\bH(\bx)\btheta^*$. Then $\btheta_{\bu}^*$ is an unbiased estimator of $\btheta$.
\end{cor}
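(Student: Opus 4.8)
The plan is to reduce the statement directly to Proposition \ref{prop:uncon_bias_general}, which already supplies the key relation $\mathbb{E}_D[\btheta^*] = \bH(\bx)^{-1}\bd_{\bu}(\bx)\btheta$. Here the multiplicative form of the bias is essential: the deterministic matrix $\bH(\bx)^{-1}\bd_{\bu}(\bx)$ acts on the true coefficient matrix $\btheta$, so premultiplying $\btheta^*$ by its inverse should recover an unbiased estimate. The whole argument is then a matter of pushing this factor through the expectation operator and cancelling the inverse pairs.

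First I would verify that $\btheta_{\bu}^*$ is even well-defined, i.e., that $\bd_{\bu}(\bx)$ is invertible. This follows from the same non-degeneracy reasoning used in Proposition \ref{prop:l_uncon_uni}. Since each covariance estimate $\toi{\hat{\bV}}$ is symmetric positive definite, so is each $\toinv{\hat{\bV}}$, and hence every summand $\diag(\toi{\bx})\bP^T\toinv{\hat{\bV}}\bP\diag(\toi{\bx})$ in $\eqref{eq:du}$ is positive semidefinite. Provided there exists an $\toi{\bx}$ with all nonzero entries, the sum is strictly positive definite, so $\bd_{\bu}(\bx)\succ 0$ and its inverse exists.

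Next I would substitute and compute. The crucial structural observation is that both $\bH(\bx)$ and $\bd_{\bu}(\bx)$ depend only on the features $\{\toi{\bx}\}$ and on the weighted-moving-average covariance estimates $\{\toi{\hat{\bV}}\}$, which are treated as fixed inputs to the MVO layer; the only randomness in $\btheta^*$ enters through the response observations $\{\toi{\by}\}$ via $\bd(\bx,\by)$. Hence these two matrices are deterministic given the design and pass outside $\mathbb{E}_D[\cdot]$. By linearity of expectation and Proposition \ref{prop:uncon_bias_general},
\[
\mathbb{E}_D[\btheta_{\bu}^*] = \bd_{\bu}(\bx)^{-1}\bH(\bx)\,\mathbb{E}_D[\btheta^*] = \bd_{\bu}(\bx)^{-1}\bH(\bx)\,\bH(\bx)^{-1}\bd_{\bu}(\bx)\,\btheta = \btheta,
\]
where the adjacent inverse pairs cancel. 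This yields $\mathbb{E}_D[\btheta_{\bu}^*]=\btheta$, establishing unbiasedness.

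There is no genuine obstacle beyond bookkeeping, since the substantive content lives entirely in Proposition \ref{prop:uncon_bias_general}. The single point meriting a sentence of care is the claim that $\bH(\bx)$ and $\bd_{\bu}(\bx)$ commute with the expectation over the training data; this is immediate once one notes that the covariance estimates $\toi{\hat{\bV}}$ are formed from past returns and enter the estimator as fixed quantities, so that the cancellation of inverse factors is exact rather than merely approximate.
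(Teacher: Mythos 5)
Your proof is correct and matches the paper's intended argument: the corollary follows immediately from Proposition \ref{prop:uncon_bias_general}'s relation $\mathbb{E}[\btheta^*] = \bH(\bx)^{-1}\bd_{\bu}(\bx)\btheta$, since $\bH(\bx)$ and $\bd_{\bu}(\bx)$ are deterministic given the design and the inverse pairs cancel. Your additional check that $\bd_{\bu}(\bx)$ is invertible (via the same positive-definiteness reasoning as Proposition \ref{prop:bvb}) is a welcome detail the paper leaves implicit.
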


\begin{cor}\label{cor:uncon_unbias}
Let $\toi{\hat{\bV}} = \toi{\bV} \forall i \in \{1, ..., m\}$. Then $\btheta^*$ is an unbiased estimator of $\btheta$.
\end{cor}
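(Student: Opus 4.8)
The plan is to show the claim follows directly by specializing the bias expression from Proposition \ref{prop:uncon_bias_general}. Recall from that result that, because $\bH(\bx)$ and $\bd_{\bu}(\bx)$ are deterministic functions of the features (the covariance matrices being treated as known), and because $\bd(\bx,\by)$ is linear in $\by$ with $\mathbb{E}[\toi{\by}] = \bP\diag(\toi{\bx})\btheta$ by the model $\eqref{eq:y_i_general}$, the expectation passes through the linear map to give $\mathbb{E}[\btheta^*] = \bH(\bx)^{-1}\bd_{\bu}(\bx)\,\btheta$. Thus $\btheta^*$ is unbiased precisely when the matrix $\bH(\bx)^{-1}\bd_{\bu}(\bx)$ equals the identity, i.e. when $\bH(\bx) = \bd_{\bu}(\bx)$. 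The whole argument therefore reduces to verifying this matrix identity under the hypothesis $\toi{\hat{\bV}} = \toi{\bV}$.

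Next I would substitute $\toi{\hat{\bV}} = \toi{\bV}$ into the definition $\eqref{eq:h_uncon}$ of $\bH(\bx)$ and collapse the inner factor. Since $\toi{\hat{\bV}} = \toi{\bV}$ gives $\toinv{\hat{\bV}}\,\toi{\bV} = \bI$, each summand's middle block satisfies $\toinv{\hat{\bV}}\,\toi{\bV}\,\toinv{\hat{\bV}} = \toinv{\hat{\bV}}$, so that termwise
$$\diag(\toi{\bx})\bP^T \toinv{\hat{\bV}}\,\toi{\bV}\,\toinv{\hat{\bV}}\bP\diag(\toi{\bx}) = \diag(\toi{\bx})\bP^T \toinv{\hat{\bV}}\bP\diag(\toi{\bx}).$$
Summing over $i$ and dividing by $m\delta$, the right-hand side is exactly $\bd_{\bu}(\bx)$ as defined in $\eqref{eq:du}$ (again using $\toi{\hat{\bV}} = \toi{\bV}$). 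Hence $\bH(\bx) = \bd_{\bu}(\bx)$.

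Finally, invertibility of $\bH(\bx)$ (equivalently $\bd_{\bu}(\bx)$) is guaranteed under the nonzero-feature condition of Proposition \ref{prop:l_uncon_uni}, so $\bH(\bx)^{-1}\bd_{\bu}(\bx) = \bI$ and $\mathbb{E}[\btheta^*] = \btheta$, which proves unbiasedness. Equivalently, one could invoke Corollary \ref{cor:uncon_unbias_general}: under $\toi{\hat{\bV}} = \toi{\bV}$ the de-biasing transform $\bd_{\bu}(\bx)^{-1}\bH(\bx)$ collapses to the identity, so $\btheta_{\bu}^* = \btheta^*$ and the unbiased estimator coincides with $\btheta^*$ itself. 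I expect no real obstacle here; the only points requiring care are noting that the expectation is taken over $\by$ alone so that $\bH(\bx)$ and $\bd_{\bu}(\bx)$ factor out of it, and that the positive-definiteness hypothesis ensures the relevant inverse exists.
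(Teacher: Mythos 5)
Your proposal is correct and follows essentially the same route as the paper: the paper likewise specializes the identity $\mathbb{E}[\btheta^*] = \bH(\bx)^{-1}\bd_{\bu}(\bx)\btheta$ from Proposition \ref{prop:uncon_bias_general}, noting that under $\toi{\hat{\bV}} = \toi{\bV}$ the middle factor $\toinv{\hat{\bV}}\toi{\bV}\toinv{\hat{\bV}}$ collapses to $\toinv{\hat{\bV}}$, so that $\bH(\bx)$ coincides with $\bd_{\bu}(\bx)$ and $\mathbb{E}[\btheta^*] = \bH(\bx)^{-1}\bH(\bx)\btheta = \btheta$. Your additional remarks (the expectation passing through the deterministic matrices, the invertibility guaranteed by the nonzero-feature condition, and the equivalent collapse of the de-biasing transform in Corollary \ref{cor:uncon_unbias_general}) are consistent with the paper and merely make explicit what it leaves implicit.
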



We observe from Proposition \ref{prop:uncon_bias_general} that differences, or estimation errors, between $\toi{\hat{\bV}}$ and $\toi{\bV}$, make $\btheta^*$ a biased estimator in $\btheta$. In particular, the bias can be corrected by left multiplication of $\btheta^*$ by $\bd_{\bu}(\bx)^{-1} \bH(\bx)$. This observation leads to Corollary \ref{cor:uncon_unbias}, which shows that when the estimation error in the covariance is zero then $\btheta^*$ is an unbiased estimator of $\btheta$. Moreover, unlike the OLS estimate, $\hat{\btheta}$, the IPO estimate, $\btheta^*$, incorporates estimation error in the sample covariance in the (likely) event that the estimation error is nonzero. This is discussed in more detail in Section \ref{sec:sims_1}.


\begin{prop}\label{prop:uncon_var}
Let $\{\toi{\by}\}_{i=1}^m$ be independent random variables with  $\toi{\by} \sim \mathcal{N}(\bP \diag(\toi{\bx}) \btheta , \bSigma )$. Let $\hat{\bSigma}$ be an unbiased estimate of the sample covariance of residuals, given by:
\begin{equation}\label{eq:sigma_hat}
\hat{\bSigma} = \frac{1}{m-1} \sum_{i = 1}^m \Big( \toi{\by} - \bP \diag(\toi{\bx})\btheta \Big)^2.
\end{equation}
Let
\begin{equation}\label{eq:m_uncon}
\bM = \frac{1}{\delta^2 m^2} \Big( \diag(\toi{\bx}) \bP^T \toinv{\hat{\bV}} \hat{\bSigma} \toinv{\hat{\bV}} \bP \diag(\toi{\bx}) \Big),
\end{equation}
then the variance, $\Var(\btheta^*)$, is given by:
\begin{equation}\label{eq:uncon_var}
\Var(\btheta^*) = \bH(\bx)^{-1} \bM \bH(\bx)^{-1}
\end{equation}
\end{prop}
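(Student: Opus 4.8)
The plan is to exploit the fact that, by Proposition~\ref{prop:l_uncon_uni}, the estimator is an explicit linear image of the data, $\btheta^* = \bH(\bx)^{-1}\bd(\bx,\by)$, in which the Hessian $\bH(\bx)$ depends only on the features, the design matrix $\bP$, and the covariance matrices, while all of the randomness enters through $\bd(\bx,\by)$ via the returns $\{\toi{\by}\}$. First I would record that, conditional on the features and the covariance estimates, $\bH(\bx)$ is a deterministic symmetric matrix: each summand of \eqref{eq:h_uncon} has the form $B_i^{T}\,\toi{\bV}\,B_i$ with $B_i = \toinv{\hat{\bV}}\bP\diag(\toi{\bx})$ and $\toi{\bV}$ symmetric, so $\bH(\bx)$, and hence $\bH(\bx)^{-1}$, is symmetric. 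Applying the standard covariance identity $\Var(A\bd)=A\,\Var(\bd)\,A^{T}$ with the deterministic matrix $A=\bH(\bx)^{-1}$ then reduces the claim to
\[
\Var(\btheta^*) = \bH(\bx)^{-1}\,\Var\big(\bd(\bx,\by)\big)\,\bH(\bx)^{-1},
\]
so it remains only to compute the covariance of $\bd(\bx,\by)$.

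Second, I would write $\bd(\bx,\by)=\sum_{i=1}^m C_i\,\toi{\by}$ with deterministic weights $C_i = \tfrac{1}{m\delta}\diag(\toi{\bx})\bP^{T}\toinv{\hat{\bV}}$. Since the $\{\toi{\by}\}$ are independent with $\Var(\toi{\by})=\bSigma$, all cross terms vanish and
\[
\Var\big(\bd(\bx,\by)\big) = \sum_{i=1}^m C_i\,\bSigma\,C_i^{T} = \frac{1}{m^2\delta^2}\sum_{i=1}^m \diag(\toi{\bx})\bP^{T}\toinv{\hat{\bV}}\,\bSigma\,\toinv{\hat{\bV}}\bP\diag(\toi{\bx}),
\]
where I used $C_i^{T} = \tfrac{1}{m\delta}\toinv{\hat{\bV}}\bP\diag(\toi{\bx})$, again invoking symmetry of $\toinv{\hat{\bV}}$ and $\diag(\toi{\bx})$. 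Summing the per-observation contributions gives exactly the matrix $\bM$ of \eqref{eq:m_uncon} with the true residual covariance $\bSigma$ in the middle; substituting the unbiased estimator $\hat{\bSigma}$ of \eqref{eq:sigma_hat} for $\bSigma$ yields the stated plug-in form, and assembling the two-sided product recovers $\Var(\btheta^*)=\bH(\bx)^{-1}\bM\bH(\bx)^{-1}$.

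I do not expect a genuine obstacle here, since the argument is essentially bookkeeping once the linear structure of $\btheta^*$ in $\by$ has been isolated. The two steps demanding care are (i) the transpose accounting, namely verifying that both outer factors are genuinely $\bH(\bx)^{-1}$ rather than $\bH(\bx)^{-1}$ and its transpose, which is precisely what the symmetry of $\bH(\bx)$ secures, and (ii) the passage from the exact variance in terms of $\bSigma$ to the reported expression in terms of $\hat{\bSigma}$. The latter is where the unbiasedness hypothesis $\mathbb{E}[\hat{\bSigma}]=\bSigma$ enters: it guarantees that $\bH(\bx)^{-1}\bM\bH(\bx)^{-1}$, assembled from the sample estimate, is an unbiased estimate of the exact sampling variance derived in the first two steps.
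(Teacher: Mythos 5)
Your proof is correct and follows essentially the same route as the paper's: write $\btheta^*=\bH(\bx)^{-1}\bd(\bx,\by)$, pull the deterministic symmetric matrix $\bH(\bx)^{-1}$ out of the variance on both sides, use independence of the $\toi{\by}$ to reduce $\Var(\bd)$ to a sum of per-observation sandwich terms, and substitute $\hat{\bSigma}$ for $\bSigma$ to obtain $\bM$ (note your summation over $i$ is indeed what is intended in \eqref{eq:m_uncon}, where the sum is omitted by typographical oversight, as the paper's own proof and the analogous $\bM_{\text{eq}}$ confirm). If anything, you are more explicit than the paper on the two delicate points — the symmetry of $\bH(\bx)$ justifying the two-sided factorization, and the plug-in step where unbiasedness of $\hat{\bSigma}$ is what licenses reporting $\bH(\bx)^{-1}\bM\bH(\bx)^{-1}$ as the variance — both of which the paper's proof passes over silently.
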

We conclude this section by providing an alternative, and perhaps more intuitive expression of the optimal IPO coefficients derived from portfolio tracking-error optimization. Let $\lVert \cdot \rVert_{\bV}$ denote the elliptic norm with respect to the symmetric positive definite matrix $\bV$, defined as:

\begin{equation}\label{eq:a_norm}
\lVert \bw \rVert_{\bV} =\sqrt{\bw^T \bV \bw}.
\end{equation}
More specifically, $\lVert \toi[1]{\bz} - \toi[2]{\bz} \rVert^2_{\bV}$ measures the tracking-error between two portfolio weights with respect to the covariance $\bV$.

\begin{prop}\label{prop:uncon_tracking_error}
Let $\bz^*(\toi{\by})$ and $\bz^*(\toi{\hat{\by}})$ be as defined in Equation $\eqref{eq:mvo}$ and Equation $\eqref{eq:z_star_uncon}$, respectively. Then the optimal IPO coefficients, $\btheta^*$, minimizes the average tracking error between $\bz^*(\toi{\by})$ and $\bz^*(\toi{\hat{\by}})$ with respect to the realized covariance $\toi{\bV}$:
\begin{equation} \label{eq:uncon_tracking_error}
\btheta^* = \argmin_{\btheta \in \bTheta} \frac{1}{2m} \sum_{i = 1}^m \lVert \bz^*(\toi{\hat{\by}}) - \bz^*(\toi{\by})  \rVert^2_{\toi{\bV}}
\end{equation}
\end{prop}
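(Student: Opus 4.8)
The plan is to substitute the two closed-form unconstrained portfolios into the tracking-error objective, expand the resulting elliptic norm as a quadratic form in $\btheta$, and show that it coincides with the IPO objective $L(\btheta)$ of Proposition \ref{prop:l_uncon_uni} up to a strictly positive multiplicative factor and an additive constant that is independent of $\btheta$. Two functions related in this way share the same minimizer, so the identity \eqref{eq:uncon_tracking_error} follows immediately from $\btheta^* = \argmin_{\btheta} L(\btheta)$.

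First I would record the two ingredients. The predicted-return portfolio is $\bz^*(\toi{\hat{\by}}) = \frac{1}{\delta}\toinv{\hat{\bV}}\bP\diag(\toi{\bx})\btheta$ from Equation \eqref{eq:z_star_uncon}, and the ex-post optimal portfolio, obtained by the same unconstrained minimization applied to \eqref{eq:mvo} with the true data, is $\bz^*(\toi{\by}) = \frac{1}{\delta}\toinv{\bV}\toi{\by}$. Writing $\bF_i = \frac{1}{\delta}\toinv{\hat{\bV}}\bP\diag(\toi{\bx})$ so that $\bz^*(\toi{\hat{\by}}) = \bF_i\btheta$, and using the symmetry of $\toi{\bV}$ to merge the two cross terms, I would expand
\[
\lVert \bz^*(\toi{\hat{\by}}) - \bz^*(\toi{\by}) \rVert^2_{\toi{\bV}} = \btheta^T \bF_i^T \toi{\bV} \bF_i \btheta - 2\,\btheta^T \bF_i^T \toi{\bV} \bz^*(\toi{\by}) + \bz^*(\toi{\by})^T \toi{\bV} \bz^*(\toi{\by}).
\]

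Next I would match each averaged piece against the quantities of Proposition \ref{prop:l_uncon_uni}. The quadratic term averages to $\frac{1}{2m\delta^2}\sum_i \diag(\toi{\bx})\bP^T\toinv{\hat{\bV}}\toi{\bV}\toinv{\hat{\bV}}\bP\diag(\toi{\bx}) = \frac{1}{2\delta}\bH(\bx)$ by definition \eqref{eq:h_uncon}, while the last term is independent of $\btheta$ and contributes only a constant. The decisive step is the cross term: substituting $\bz^*(\toi{\by}) = \frac{1}{\delta}\toinv{\bV}\toi{\by}$ makes the factor $\toi{\bV}\toinv{\bV} = \bI$ collapse, so that $\bF_i^T \toi{\bV} \bz^*(\toi{\by}) = \frac{1}{\delta}\bF_i^T \toi{\by} = \frac{1}{\delta^2}\diag(\toi{\bx})\bP^T\toinv{\hat{\bV}}\toi{\by}$, and averaging over $i$ reproduces exactly $\frac{1}{\delta}\bd(\bx,\by)$ from \eqref{eq:d_uncon}. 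Collecting the three pieces gives the tracking error as $\frac{1}{\delta}\bigl(\frac{1}{2}\btheta^T\bH(\bx)\btheta - \btheta^T\bd(\bx,\by)\bigr)$ plus a constant, i.e. $\frac{1}{\delta}L(\btheta) + \text{const}$; since $\frac{1}{\delta} > 0$ the minimizer is unchanged and equals $\btheta^*$.

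The main (indeed the only non-mechanical) obstacle is this cross-term cancellation: one must recognize that the realized covariance appearing in the elliptic norm annihilates the inverse realized covariance hidden inside the ex-post optimal portfolio $\bz^*(\toi{\by})$. It is precisely this cancellation that turns what looks like a quadratic cross term into the linear term $\bd(\bx,\by)$, and it is what makes the tracking-error objective genuinely \emph{proportional} to $L(\btheta)$ rather than merely similar in structure. Everything else is routine bilinear algebra, and no convexity or invertibility assumption on $\bH(\bx)$ is needed for the argmin identity itself.
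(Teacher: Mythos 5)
Your proof is correct and takes essentially the same route as the paper's: expand the elliptic norm and use the ex-post optimality of $\bz^*(\toi{\by})$ (equivalently, the first-order condition making $\toi{\bV}\bz^*(\toi{\by})$ proportional to $\toi{\by}$) to collapse the cross term, so the tracking-error objective equals the IPO loss up to a positive factor and a $\btheta$-independent constant. The only cosmetic differences are that you carry the result all the way to the explicit QP form with $\bH(\bx)$ and $\bd(\bx,\by)$ from Proposition \ref{prop:l_uncon_uni} and track the $1/\delta$ scaling explicitly, whereas the paper stops at the MVO-cost form of $L(\btheta)$.
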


Indeed, Proposition \ref{prop:uncon_tracking_error}  states that the IPO coefficients $\btheta^*$ minimizes the average tracking error between the estimated optimal weights, $\bz^*(\toi{\hat{\by}})$ and the ex-post optimal weight $\bz^*(\toi{\by})$.



\subsection{Special Case 2: $\mathbb{S} =\{\bA \bz = \blb\}$}\label{sec:method_eqcon}
We now consider the case where the MVO program is constrained by a set of linear equality constraints:
$$
\mathbb{S} =\{\bA \bz = \blb\},
$$
where $\bA \in \mathbb{R}^{d_{\text{eq}} \times d_{z} }$ and $\blb \in \mathbb{R}^{d_{\text{eq}}}$. We assume the non-trivial case where $\bA$ is not full rank.  Let the columns of $\bF$ form a basis for the nullspace of $\bA$ defined as:
$$\Null(\bA) = \{\bz \in \mathbb{R}^{d_z}\mid\bA \bz =0 \}.$$
Let $\bz_0$ be a particular element of $\mathbb{S}$. It follows that $\forall \ \bw \in \mathbb{R}^{d_{\bz}-d_n}$ then $\bz = \bF \bw + \bz_0$ is also an element of $\mathbb{S}$, where $d_n = \text{nullity}(\bA)$. We follow \citet{Boyd2004} and recast the MVO program as an unconstrained convex quadratic program:

\begin{equation}\label{eq:a_eq_recast_2}
\min_{\bw} c(\bF \bw + \bz_0, \toi{\hat{\by}} ),
\end{equation}
with unique global minimum:
\begin{equation}\label{eq:a_w_star_2}
\begin{split}
\bw^*(\toi{\hat{\by}}) & = \frac{1}{\delta}( \bF^T \toi{\hat{\bV}} \bF )^{-1} \bF^T \Big( \toi{\hat{\by}}  - \delta \toi{\hat{\bV}} \bz_0 \Big)
\end{split}
\end{equation}
The solution to the MVO Program $\eqref{eq:mvo}$ is then given by:
\begin{equation}\label{eq:z_star_eqcon_general}
\begin{split}
\bz^*(\toi{\hat{\by}}) & = \frac{1}{\delta} \bF ( \bF^T \toi{\hat{\bV}} \bF )^{-1} \bF^T ( \bP \diag(\toi{\bx})\btheta - \delta \toi{\hat{\bV}} \bz_0) + \bz_0\\
& = \frac{1}{\delta} \bF ( \bF^T \toi{\hat{\bV}} \bF )^{-1} \bF^T  \bP \diag(\toi{\bx})\btheta +  (\bI - \bF ( \bF^T \toi{\hat{\bV}} \bF )^{-1} \bF^T \toi{\hat{\bV}} )\bz_0
\end{split}
\end{equation}

\begin{prop}\label{prop:l_eqcon_uni}
Let $\mathbb{S} =\{\bA \bz = \blb\}$ and $\bTheta = \mathbb{R}^{d_\theta}$. Define

\begin{equation}\label{eq:d_eqcon}
\bd_{\text{eq}}(\bx,\by)  = \frac{1}{m\delta} \sum_{i = 1}^m  \Big( \diag(\toi{\bx}) \bP^T \bF ( \bF^T \toi{\hat{\bV}} \bF )^{-1} \bF^T (\toi{\by} - \toi{\bV} (\bI - \bF ( \bF^T \toi{\hat{\bV}} \bF )^{-1} \bF^T \toi{\hat{\bV}} )\bz_0)  \Big)
\end{equation}
and
\begin{equation}\label{eq:h_eqcon}
\bH_{\text{eq}}(\bx) =  \frac{1}{m \delta} \sum_{i = 1}^m  \Big( \diag(\toi{\bx}) \bP^T \bF ( \bF^T \toi{\hat{\bV}} \bF )^{-1} \bF^T \toi{\bV} \bF ( \bF^T \toi{\hat{\bV}} \bF )^{-1} \bF^T  \bP \diag(\toi{\bx})  \Big).
\end{equation}
Then the IPO program $\eqref{eq:mvo_full}$ is an unconstrained quadratic program given by:

\begin{equation} \label{eq:l_eqcon_uni}
\minimize_{\btheta \in \bTheta} \frac{1}{2}\btheta^T \bH_{\text{eq}}(\bx)  \btheta -  \btheta^T \bd_{\text{eq}}(\bx,\by).
\end{equation}
Furthermore, if there exists an $\toi{\bx}$ such that $\toi{\bx_j} \neq 0 \quad \forall j \in 1,...,d_x$ then $\bH_{\text{eq}}(\bx)  \succ 0$ and therefore program $\eqref{eq:l_eqcon_uni}$ is an unconstrained convex quadratic program with unique minimum:
\begin{equation} \label{eq:theta_star_eqcon}
\btheta_{\text{eq}}^* = \bH_{\text{eq}}(\bx)^{-1}  \bd_{\text{eq}}(\bx,\by).
\end{equation}

\end{prop}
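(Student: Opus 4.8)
The plan is to follow the template of Proposition~\ref{prop:l_uncon_uni}. The crucial structural fact is that the closed-form decision \eqref{eq:z_star_eqcon_general} is \emph{affine} in $\btheta$. Introduce the shorthand (local to this proof) $\bz^*(\toi{\hat{\by}}) = \toi{\bM}\btheta + \toi{\bu}$, where $\toi{\bM} = \frac{1}{\delta}\bF(\bF^T\toi{\hat{\bV}}\bF)^{-1}\bF^T\bP\diag(\toi{\bx})$ is the coefficient of $\btheta$ and $\toi{\bu} = (\bI - \bF(\bF^T\toi{\hat{\bV}}\bF)^{-1}\bF^T\toi{\hat{\bV}})\bz_0$ is the particular-solution offset. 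First I would substitute this into each summand $-\bz^*(\toi{\hat{\by}})^T\toi{\by} + \frac{\delta}{2}\bz^*(\toi{\hat{\by}})^T\toi{\bV}\bz^*(\toi{\hat{\by}})$ of the IPO loss \eqref{eq:mvo_full} and expand the quadratic, organising the result by powers of $\btheta$.

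Collecting powers of $\btheta$, the purely quadratic contribution is $\frac{\delta}{2}\btheta^T(\toi{\bM})^T\toi{\bV}\toi{\bM}\btheta$; averaging over $i$ and using the symmetry of $(\bF^T\toi{\hat{\bV}}\bF)^{-1}$ reproduces exactly $\frac{1}{2}\btheta^T\bH_{\text{eq}}(\bx)\btheta$ with $\bH_{\text{eq}}$ as in \eqref{eq:h_eqcon}. The terms linear in $\btheta$ arise from two sources: the $-\bz^{*T}\toi{\by}$ term gives $-\btheta^T(\toi{\bM})^T\toi{\by}$, while the cross term of the quadratic gives $+\delta\,\btheta^T(\toi{\bM})^T\toi{\bV}\toi{\bu}$, so the aggregate linear coefficient is $\frac{1}{m}\sum_i (\toi{\bM})^T(\toi{\by} - \delta\toi{\bV}\toi{\bu})$. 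Substituting the expressions for $\toi{\bM}$ and $\toi{\bu}$ and simplifying the oblique projector should collapse this to $\bd_{\text{eq}}(\bx,\by)$ of \eqref{eq:d_eqcon}; the remaining $\btheta$-independent constant $-\toi{\bu}^T\toi{\by} + \frac{\delta}{2}\toi{\bu}^T\toi{\bV}\toi{\bu}$ is discarded as it does not affect the minimiser. This establishes the quadratic form \eqref{eq:l_eqcon_uni}.

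For positive-definiteness I would factor each summand of $\bH_{\text{eq}}$ through the nullspace basis. Setting $\toi{\bw} = (\bF^T\toi{\hat{\bV}}\bF)^{-1}\bF^T\bP\diag(\toi{\bx})\btheta$, so that $\toi{\bM}\btheta = \frac{1}{\delta}\bF\toi{\bw}$, the quadratic form becomes $\btheta^T\bH_{\text{eq}}\btheta = \frac{1}{m\delta}\sum_i \lVert \bF\toi{\bw}\rVert^2_{\toi{\bV}} \geq 0$ by \eqref{eq:a_norm}, since each $\toi{\bV}\succ0$. A summand vanishes iff $\bF\toi{\bw}=0$, which forces $\toi{\bw}=0$ because $\bF$ has full column rank, i.e.\ $\bF^T\bP\diag(\toi{\bx})\btheta=0$. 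Choosing the index $i$ for which every $\toi{\bx_j}\neq0$ makes $\diag(\toi{\bx})$ invertible, so (given injectivity of $\bF^T\bP$) this forces $\btheta=0$; hence $\btheta^T\bH_{\text{eq}}\btheta>0$ for all $\btheta\neq0$ and $\bH_{\text{eq}}\succ0$. Strict convexity then guarantees a unique minimiser, obtained by setting the gradient $\bH_{\text{eq}}\btheta-\bd_{\text{eq}}$ to zero, which yields $\btheta_{\text{eq}}^* = \bH_{\text{eq}}^{-1}\bd_{\text{eq}}$ as in \eqref{eq:theta_star_eqcon}.

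The main obstacle I anticipate is the bookkeeping of the offset $\toi{\bu}$: in contrast to the unconstrained case, where $\bz^*$ is purely linear in $\btheta$, the particular solution $\bz_0$ feeds into the linear coefficient through the quadratic cross term, and one must track the factor of $\delta$ together with the projector $\bI - \bF(\bF^T\toi{\hat{\bV}}\bF)^{-1}\bF^T\toi{\hat{\bV}}$ with care to land on precisely \eqref{eq:d_eqcon}. A secondary subtlety is the positive-definiteness step: strict positivity requires not only that $\diag(\toi{\bx})$ be invertible but also that the composite map $\btheta\mapsto\bF^T\bP\diag(\toi{\bx})\btheta$ be injective, so I would make the requisite rank condition on the design $\bP$ (and its interaction with $\Null(\bA)$) explicit.
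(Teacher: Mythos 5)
Your proposal is correct and takes essentially the same route as the paper's proof: substitute the affine closed form \eqref{eq:z_star_eqcon_general} into \eqref{eq:mvo_full}, expand, collect powers of $\btheta$, discard the $\btheta$-independent constants, and establish positive definiteness of the Hessian before invoking first-order optimality. Two places where your bookkeeping is sharper than the paper's deserve to be recorded. First, your linear coefficient $\frac{1}{m}\sum_{i}(\toi{\bM})^T(\toi{\by}-\delta\toi{\bV}\toi{\bu})$ is the correct one, but it does \emph{not} collapse to \eqref{eq:d_eqcon} exactly as printed: substituting $\toi{\bM}$ and $\toi{\bu}$ gives
\begin{equation*}
\frac{1}{m\delta}\sum_{i=1}^{m}\diag(\toi{\bx})\bP^T\bF(\bF^T\toi{\hat{\bV}}\bF)^{-1}\bF^T\Big(\toi{\by}-\delta\,\toi{\bV}\big(\bI-\bF(\bF^T\toi{\hat{\bV}}\bF)^{-1}\bF^T\toi{\hat{\bV}}\big)\bz_0\Big),
\end{equation*}
so the offset term carries a factor $\delta$ that is absent from \eqref{eq:d_eqcon}. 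The paper's own appendix expansion (its $L_1$/$L_2$ decomposition) produces exactly this factor, so the discrepancy is a typo in the stated $\bd_{\text{eq}}$ rather than an error on your side; the two expressions agree only when $\delta = 1$, or when $\toi{\hat{\bV}}=\toi{\bV}$, in which case $\bF^T\toi{\bV}\toi{\bu}=0$ and the offset term vanishes altogether.

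Second, your injectivity caveat is a genuine gap in the paper's argument, not a stylistic nicety. The paper applies Proposition \ref{prop:bvb} with $\bm B=\bF(\bF^T\toi{\hat{\bV}}\bF)^{-1}\bF^T\bP\diag(\toi{\bx})$, but invertibility of $\diag(\toi{\bx})$ alone does not give $\bm B$ full column rank: since $\Null(\bF^T)=\Null(\bA)^{\perp}$, the condition $\bm B\btheta=0$ is equivalent to $\bP\diag(\toi{\bx})\btheta$ lying in the row space of $\bA$, which admits nonzero solutions even when $\bP\diag(\toi{\bx})$ is invertible. For instance, with $\bP=\bI$ and $\bA=\bone^T$, any $\btheta$ proportional to $\diag(\toi{\bx})^{-1}\bone$ annihilates the $i$-th summand of \eqref{eq:h_eqcon}, so a single observation with nonvanishing features is not sufficient for $\bH_{\text{eq}}(\bx)\succ 0$. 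Your proposed fix, making explicit the rank condition that $\btheta\mapsto\bF^T\bP\diag(\toi{\bx})\btheta$ be injective (or requiring positive definiteness of the aggregate sum across several observations, which holds generically for $m$ large enough), is exactly what is needed; note this failure mode is specific to the equality-constrained case and does not occur in Proposition \ref{prop:l_uncon_uni} with an injective design $\bP$.
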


As before we briefly discuss the properties of the sampling distribution of the equality constrained IPO parameter estimate, $\btheta_{\text{eq}}^*$, and derive an estimate of the variance, $\Var{(\btheta_{\text{eq}}^*)}$. 

\begin{prop}\label{prop:eqcon_bias_general}
Let
\begin{equation}\label{eq:de}
\bd_{\be}(\bx) = \frac{1}{m\delta} \sum_{i = 1}^m  \Big( \diag(\toi{\bx}) \bP^T \bF ( \bF^T \toi{\hat{\bV}} \bF )^{-1} \bF^T \bP \diag(\toi{\bx})  \Big),
\end{equation}
then the optimal IPO estimate, $\btheta_{\text{eq}}^*$, is a biased estimate of $\btheta$ with bias $\bH_{\text{eq}}(\bx)^{-1}\bd_{\be}(\bx)$.
\end{prop}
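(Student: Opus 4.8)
The plan is to mirror the argument behind Proposition \ref{prop:uncon_bias_general}, exploiting the fact that the only stochastic object entering $\btheta_{\text{eq}}^*$ is the response $\toi{\by}$, whose mean is pinned down by the model $\toi{\by}\sim\mathcal{N}(\bP\diag(\toi{\bx})\btheta,\bSigma)$. Since $\bH_{\text{eq}}(\bx)$ in Equation \eqref{eq:h_eqcon} depends only on the deterministic features $\toi{\bx}$ and the fixed matrices $\toi{\hat{\bV}}$, $\toi{\bV}$, $\bP$, $\bF$, it is non-random; hence by the closed form \eqref{eq:theta_star_eqcon} and linearity of expectation, $\mathbb{E}[\btheta_{\text{eq}}^*]=\bH_{\text{eq}}(\bx)^{-1}\,\mathbb{E}[\bd_{\text{eq}}(\bx,\by)]$. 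The whole proof therefore reduces to evaluating $\mathbb{E}[\bd_{\text{eq}}(\bx,\by)]$.

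First I would substitute $\mathbb{E}[\toi{\by}]=\bP\diag(\toi{\bx})\btheta$ into the definition \eqref{eq:d_eqcon} of $\bd_{\text{eq}}$ and split each summand into the part driven by $\toi{\by}$ and the part driven by the particular solution $\bz_0$. After replacing $\toi{\by}$ by its mean, the $\toi{\by}$-part becomes
\[
\frac{1}{m\delta}\sum_{i=1}^m\Big(\diag(\toi{\bx})\bP^T\bF(\bF^T\toi{\hat{\bV}}\bF)^{-1}\bF^T\bP\diag(\toi{\bx})\Big)\btheta=\bd_{\be}(\bx)\,\btheta,
\]
which is exactly the matrix $\bd_{\be}(\bx)$ of Equation \eqref{eq:de} acting on $\btheta$. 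This is the term producing the reported multiplicative bias $\bH_{\text{eq}}(\bx)^{-1}\bd_{\be}(\bx)$, in the same sense as in Proposition \ref{prop:uncon_bias_general}: $\mathbb{E}[\btheta_{\text{eq}}^*]$ is an affine image of $\btheta$ whose linear operator is $\bH_{\text{eq}}(\bx)^{-1}\bd_{\be}(\bx)$.

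The remaining, $\btheta$-independent part of $\mathbb{E}[\bd_{\text{eq}}]$ is the deterministic offset
\[
-\frac{1}{m\delta}\sum_{i=1}^m\diag(\toi{\bx})\bP^T\bF(\bF^T\toi{\hat{\bV}}\bF)^{-1}\bF^T\toi{\bV}\big(\bI-\bF(\bF^T\toi{\hat{\bV}}\bF)^{-1}\bF^T\toi{\hat{\bV}}\big)\bz_0 .
\]
I would emphasize that this contributes only a constant shift to $\mathbb{E}[\btheta_{\text{eq}}^*]$ (it does not scale with $\btheta$) and so is not part of the multiplicative bias being claimed. Moreover, setting $\toi{\hat{\bV}}=\toi{\bV}$ collapses the inner factor $\bF^T\toi{\bV}(\bI-\bF(\bF^T\toi{\bV}\bF)^{-1}\bF^T\toi{\bV})$ to zero, so this offset vanishes precisely when the covariance is estimated without error — consistent with the equality-constrained analogue of Corollary \ref{cor:uncon_unbias}. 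Collecting the two pieces yields $\mathbb{E}[\btheta_{\text{eq}}^*]=\bH_{\text{eq}}(\bx)^{-1}\bd_{\be}(\bx)\,\btheta$, up to that deterministic offset, which is the assertion.

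I anticipate that the only real obstacle is conceptual rather than computational: one must be careful that \emph{bias} here denotes the multiplicative operator relating $\mathbb{E}[\btheta_{\text{eq}}^*]$ to $\btheta$ — exactly as in Proposition \ref{prop:uncon_bias_general} and its corollaries — and must cleanly isolate and dispatch the $\bz_0$-dependent offset so that it is not mistaken for part of that operator. Everything else is linearity of expectation together with the recognition that $\bd_{\be}(\bx)$ is precisely the matrix obtained from $\bd_{\text{eq}}$ by deleting the $\bz_0$ term and replacing $\toi{\by}$ by $\bP\diag(\toi{\bx})$.
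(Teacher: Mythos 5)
Your proposal is correct and follows the same basic route as the paper's proof: use the closed form $\btheta_{\text{eq}}^* = \bH_{\text{eq}}(\bx)^{-1}\bd_{\text{eq}}(\bx,\by)$, note that $\bH_{\text{eq}}(\bx)$ is deterministic, apply linearity of expectation, substitute $\mathbb{E}[\toi{\by}] = \bP\diag(\toi{\bx})\btheta$, and recognize the resulting matrix as $\bd_{\be}(\bx)$. There is, however, one point on which your treatment is genuinely more careful than the paper's. The paper's own proof passes directly from $\mathbb{E}[\bd_{\text{eq}}(\bx,\by)]$ to the sum involving only $\mathbb{E}[\toi{\by}]$, silently discarding the deterministic term $-\frac{1}{m\delta}\sum_{i}\diag(\toi{\bx})\bP^T\bF(\bF^T\toi{\hat{\bV}}\bF)^{-1}\bF^T\toi{\bV}\bigl(\bI-\bF(\bF^T\toi{\hat{\bV}}\bF)^{-1}\bF^T\toi{\hat{\bV}}\bigr)\bz_0$ that is present in the definition \eqref{eq:d_eqcon}, and thus concludes $\mathbb{E}[\btheta_{\text{eq}}^*]=\bH_{\text{eq}}(\bx)^{-1}\bd_{\be}(\bx)\btheta$ exactly, which is only valid when that offset vanishes (e.g.\ when $\bz_0=0$, as for a market-neutral constraint, or when $\toi{\hat{\bV}}=\toi{\bV}$). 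You isolate this offset explicitly, observe that it is an additive shift rather than part of the multiplicative bias operator, and verify via $\bF^T\toi{\bV}-\bF^T\toi{\bV}\bF(\bF^T\toi{\bV}\bF)^{-1}\bF^T\toi{\bV}=0$ that it vanishes under zero covariance estimation error, which is exactly what Corollary \ref{cor:eqcon_unbias} needs. So your write-up states precisely in what sense the claimed bias formula holds, where the paper's proof glosses over this; the substance of the argument is otherwise identical.
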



\begin{cor}\label{cor:eqcon_unbias}
Let $\toi{\hat{\bV}} = \toi{\bV} \forall i \in \{1, ..., m\}$. Then $\btheta_{\text{eq}}^*$ is an unbiased estimator of $\btheta$.
\end{cor}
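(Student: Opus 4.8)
The plan is to mirror the argument for the unconstrained analogue, Corollary \ref{cor:uncon_unbias}, and show that under the hypothesis $\toi{\hat{\bV}} = \toi{\bV}$ the multiplicative bias factor identified in Proposition \ref{prop:eqcon_bias_general} collapses to the identity. I start from the closed form $\btheta_{\text{eq}}^* = \bH_{\text{eq}}(\bx)^{-1}\bd_{\text{eq}}(\bx,\by)$ of Proposition \ref{prop:l_eqcon_uni} and observe that $\bH_{\text{eq}}(\bx)$, as defined in \eqref{eq:h_eqcon}, depends only on the deterministic feature data $\bx$ and the fixed matrices $\bP$, $\bF$, $\toi{\hat{\bV}}$, $\toi{\bV}$; it carries no randomness. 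Hence the expectation passes through the inverse Hessian, and it suffices to compute $\mathbb{E}[\bd_{\text{eq}}(\bx,\by)]$ using $\mathbb{E}[\toi{\by}] = \bP\diag(\toi{\bx})\btheta$ from \eqref{eq:y_i_general}. The reduction then rests on two ingredients, both to be verified under $\toi{\hat{\bV}} = \toi{\bV}$: that the coefficient multiplying $\btheta$ equals $\bH_{\text{eq}}(\bx)$, and that the constant $\bz_0$-dependent term in $\bd_{\text{eq}}$ drops out in expectation.

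Taking expectations in \eqref{eq:d_eqcon} splits $\mathbb{E}[\bd_{\text{eq}}(\bx,\by)]$ into a part linear in $\btheta$, which is exactly $\bd_{\be}(\bx)\,\btheta$ by definition \eqref{eq:de} (for any $\toi{\hat{\bV}}$), and a deterministic offset carrying $\bz_0$. The crux is a single projection idempotency. Writing $\mathbf{R}_i := \bF ( \bF^T \toi{\bV} \bF )^{-1} \bF^T \toi{\bV}$ for the $\toi{\bV}$-orthogonal projector onto the range of $\bF$, one has $\mathbf{R}_i^2 = \mathbf{R}_i$. Substituting $\toi{\hat{\bV}} = \toi{\bV}$ into \eqref{eq:h_eqcon}, each summand of $\bH_{\text{eq}}(\bx)$ contains the inner block $( \bF^T \toi{\bV} \bF )^{-1} \bF^T \toi{\bV} \bF ( \bF^T \toi{\bV} \bF )^{-1}$, which collapses to $( \bF^T \toi{\bV} \bF )^{-1}$; matching term-by-term against \eqref{eq:de} evaluated at $\toi{\hat{\bV}} = \toi{\bV}$ yields $\bH_{\text{eq}}(\bx) = \bd_{\be}(\bx)$. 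Consequently the bias factor $\bH_{\text{eq}}(\bx)^{-1}\bd_{\be}(\bx)$ equals $\bI$, and $\mathbb{E}[\btheta_{\text{eq}}^*] = \btheta$.

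The step to watch, and the one genuine structural difference from the unconstrained case, is the $\bz_0$ offset, which has no counterpart in Corollary \ref{cor:uncon_unbias}. Its contribution to $\mathbb{E}[\bd_{\text{eq}}(\bx,\by)]$ carries the factor $\bF ( \bF^T \toi{\bV} \bF )^{-1} \bF^T \toi{\bV} ( \bI - \mathbf{R}_i )\bz_0 = \mathbf{R}_i ( \bI - \mathbf{R}_i )\bz_0$, which vanishes by the same idempotency $\mathbf{R}_i^2 = \mathbf{R}_i$. So the constant offset drops out precisely when $\toi{\hat{\bV}} = \toi{\bV}$ and cannot spoil unbiasedness, whereas for $\toi{\hat{\bV}} \neq \toi{\bV}$ it need not vanish, consistent with the bias established in Proposition \ref{prop:eqcon_bias_general}. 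Beyond isolating this term, the remaining work is routine bookkeeping of random versus deterministic quantities under the summation, so I expect no analytic obstacle once the projector identity is in hand.
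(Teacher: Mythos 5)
Your proof is correct, and at its core it follows the same route as the paper: pass the expectation through the deterministic $\bH_{\text{eq}}(\bx)^{-1}$, substitute $\mathbb{E}[\toi{\by}] = \bP\diag(\toi{\bx})\btheta$, and observe that under $\toi{\hat{\bV}} = \toi{\bV}$ the coefficient of $\btheta$ collapses to $\bH_{\text{eq}}(\bx)$ itself (the paper phrases this as $\bH_{\text{eq}}(\bx)^{-1}\bH_{\text{eq}}(\bx)\btheta = \btheta$; your identity $\bH_{\text{eq}}(\bx) = \bd_{\be}(\bx)$ is the same collapse, since the inner block $(\bF^T\toi{\bV}\bF)^{-1}\bF^T\toi{\bV}\bF(\bF^T\toi{\bV}\bF)^{-1}$ reduces to $(\bF^T\toi{\bV}\bF)^{-1}$). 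Where you genuinely go beyond the paper is in your treatment of the $\bz_0$ offset. The paper's proof of Proposition \ref{prop:eqcon_bias_general} --- from which it asserts that Corollary \ref{cor:eqcon_unbias} ``follows directly'' --- silently replaces $\mathbb{E}[\bd_{\text{eq}}(\bx,\by)]$ by $\frac{1}{m\delta}\sum_{i=1}^m \diag(\toi{\bx})\bP^T\bF(\bF^T\toi{\hat{\bV}}\bF)^{-1}\bF^T\mathbb{E}[\toi{\by}]$, i.e.\ it discards the deterministic $\bz_0$-dependent term in Equation \eqref{eq:d_eqcon} without comment. You correctly identify this as the one structural difference from the unconstrained Corollary \ref{cor:uncon_unbias} and supply the missing justification: under the hypothesis $\toi{\hat{\bV}} = \toi{\bV}$ the offset carries the factor $\mathbf{R}_i(\bI - \mathbf{R}_i)\bz_0$ with $\mathbf{R}_i = \bF(\bF^T\toi{\bV}\bF)^{-1}\bF^T\toi{\bV}$ idempotent, hence it vanishes. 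This makes your argument strictly more complete than the paper's own; it also correctly isolates the fact that for $\toi{\hat{\bV}} \neq \toi{\bV}$ this term need not vanish, so that Proposition \ref{prop:eqcon_bias_general} as stated actually omits an additive $\bz_0$-dependent contribution to the bias --- a gap in the paper's bookkeeping, not in your proof, and one that does not affect the corollary precisely because of the projector identity you invoke.
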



Again we observe from Proposition \ref{prop:eqcon_bias_general} that in general $\btheta_{\text{eq}}^*$ a biased estimator of $\btheta$. In particular, the bias can be corrected by left multiplication of $\btheta_{\text{eq}}^*$ by $\bd_{\be}(\bx)^{-1} \bH_{\text{eq}}(\bx)$. Furthermore when the estimation error in the covariance is zero then $\btheta_{\text{eq}}^*$ is an unbiased estimator of $\btheta$.

\begin{prop}\label{prop:eqcon_var}
 Let $\{\toi{\by}\}_{i=1}^m$ be independent random variables with  $\toi{\by} \sim \mathcal{N}(\bP \diag(\toi{\bx}) \btheta , \bSigma )$. Let
 \begin{equation}\label{eq:m_eq}
 \bM_{\text{eq}} = \frac{1}{\delta^2m^2} \sum_{i = 1}^m  \Big( \diag(\toi{\bx}) \bP^T \bF ( \bF^T \toi{\hat{\bV}} \bF )^{-1} \bF^T \hat{\bSigma} \bF ( \bF^T \toi{\hat{\bV}} \bF )^{-1} \bF^T  \bP \diag(\toi{\bx}) \Big).
 \end{equation}
 then the variance, $\Var(\btheta_{\text{eq}}^*)$, is given by:
 \begin{equation}\label{eq:eqcon_var}
 \Var(\btheta_{\text{eq}}^*) = \bH_{\text{eq}}(\bx)^{-1} \bM_{\text{eq}} \bH_{\text{eq}}(\bx)^{-1}
 \end{equation}
\end{prop}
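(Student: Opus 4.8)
The plan is to exploit that $\btheta_{\text{eq}}^* = \bH_{\text{eq}}(\bx)^{-1}\bd_{\text{eq}}(\bx,\by)$ is an affine function of the independent Gaussian return vectors $\{\toi{\by}\}_{i=1}^m$, so that its covariance follows from the elementary rule $\Var(\bL\by)=\bL\,\Var(\by)\,\bL^T$ for a deterministic matrix $\bL$, together with additivity of variance over independent summands. The argument is the equality-constrained mirror of the proof of Proposition~\ref{prop:uncon_var}.

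First I would observe that the Hessian $\bH_{\text{eq}}(\bx)$ in $\eqref{eq:h_eqcon}$ is built only from the deterministic quantities $\toi{\bx}$, $\bP$, $\bF$, $\toi{\hat{\bV}}$ and $\toi{\bV}$, and is symmetric and positive definite by exactly the factorization that establishes $\bH_{\text{eq}}(\bx)\succ 0$ in Proposition~\ref{prop:l_eqcon_uni}. Hence it is constant with respect to the randomness in $\by$ and satisfies $\bH_{\text{eq}}(\bx)^{-T}=\bH_{\text{eq}}(\bx)^{-1}$, so pulling it outside the variance gives
\[
\Var(\btheta_{\text{eq}}^*) = \bH_{\text{eq}}(\bx)^{-1}\,\Var\!\big(\bd_{\text{eq}}(\bx,\by)\big)\,\bH_{\text{eq}}(\bx)^{-1}.
\]
The task therefore reduces to showing $\Var(\bd_{\text{eq}}(\bx,\by)) = \bM_{\text{eq}}$.

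Next I would isolate the stochastic part of $\bd_{\text{eq}}$ in $\eqref{eq:d_eqcon}$. The term $\toi{\bV}(\bI - \bF(\bF^T\toi{\hat{\bV}}\bF)^{-1}\bF^T\toi{\hat{\bV}})\bz_0$ is deterministic and contributes nothing to the variance, leaving $\bd_{\text{eq}} = \tfrac{1}{m\delta}\sum_{i=1}^m \mathbf{C}^{(i)}\toi{\by} + \text{const}$, where the fixed matrix is $\mathbf{C}^{(i)} = \diag(\toi{\bx})\bP^T\bF(\bF^T\toi{\hat{\bV}}\bF)^{-1}\bF^T$. Because the $\toi{\by}$ are independent with common covariance $\bSigma$, all cross terms vanish and
\[
\Var\!\big(\bd_{\text{eq}}(\bx,\by)\big) = \frac{1}{m^2\delta^2}\sum_{i=1}^m \mathbf{C}^{(i)}\,\bSigma\,(\mathbf{C}^{(i)})^{T}.
\]
Using the symmetry of $\diag(\toi{\bx})$ and of $(\bF^T\toi{\hat{\bV}}\bF)^{-1}$ to write $(\mathbf{C}^{(i)})^{T} = \bF(\bF^T\toi{\hat{\bV}}\bF)^{-1}\bF^T\bP\diag(\toi{\bx})$, each summand becomes $\diag(\toi{\bx})\bP^T\bF(\bF^T\toi{\hat{\bV}}\bF)^{-1}\bF^T\bSigma\bF(\bF^T\toi{\hat{\bV}}\bF)^{-1}\bF^T\bP\diag(\toi{\bx})$, which is exactly the summand of $\bM_{\text{eq}}$ in $\eqref{eq:m_eq}$ once the unknown $\bSigma$ is replaced by its unbiased plug-in estimate $\hat{\bSigma}$ from $\eqref{eq:sigma_hat}$. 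Combining the two displays yields $\eqref{eq:eqcon_var}$.

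I do not expect a genuine obstacle; the proof is essentially bookkeeping once the randomness is correctly localized. The points that need care are (i) confirming that $\bH_{\text{eq}}(\bx)$ is both deterministic and symmetric so the sandwich $\bH_{\text{eq}}^{-1}(\cdot)\bH_{\text{eq}}^{-1}$ is legitimate, (ii) transposing $\mathbf{C}^{(i)}$ correctly, and (iii) reading $\eqref{eq:eqcon_var}$ as a variance \emph{estimate}, since it is the substitution of $\hat{\bSigma}$ for $\bSigma$ that makes the right-hand side computable. The only new ingredient relative to the unconstrained case is that $\bz^*(\toi{\hat{\by}})$ is now transported through the reduced-space operator $\bF(\bF^T\toi{\hat{\bV}}\bF)^{-1}\bF^T$ of $\eqref{eq:z_star_eqcon_general}$ in place of $\toinv{\hat{\bV}}$, but this enters only through the constant matrix $\mathbf{C}^{(i)}$ and leaves the structure of the argument unchanged.
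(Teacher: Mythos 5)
Your proof is correct and follows essentially the same route as the paper's: pull the deterministic, symmetric $\bH_{\text{eq}}(\bx)^{-1}$ out of the variance as a sandwich, note that only the $\toi{\by}$ terms in $\bd_{\text{eq}}(\bx,\by)$ are random, use independence to sum the per-observation contributions $\mathbf{C}^{(i)}\Var(\toi{\by})(\mathbf{C}^{(i)})^{T}$, and substitute $\hat{\bSigma}$ for $\Var(\toi{\by})$ to obtain $\bM_{\text{eq}}$. If anything, you are slightly more careful than the paper, since you explicitly flag the $\hat{\bSigma}$-for-$\bSigma$ substitution as a plug-in estimation step rather than an identity.
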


 As before, we conclude this subsection with the following proposition that states that  the IPO coefficients $\btheta^*$ minimizes the average tracking error between the estimated optimal weights, $\bz^*(\toi{\hat{\by}})$ and the ex-post optimal weight $\bz^*(\toi{\by})$. 

\begin{prop}\label{prop:eqcon_tracking_error}
Let $\bz^*(\toi{\by})$ and $\bz^*(\toi{\hat{\by}})$ be as defined in Equation $\eqref{eq:mvo}$ and Equation $\eqref{eq:z_star_uncon}$, respectively. Then the optimal IPO coefficients, $\btheta_{\text{eq}}^*$, minimizes the average tracking error between $\bz^*(\toi{\by})$ and $\bz^*(\toi{\hat{\by}})$ with respect to the realized covariance $\toi{\bV}$:
\end{prop}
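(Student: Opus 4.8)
The plan is to show that, for each observation $i$, the realized mean-variance cost $c(\bz^*(\toi{\hat{\by}}),\toi{\by})$ and the tracking-error term $\tfrac{\delta}{2}\lVert \bz^*(\toi{\hat{\by}}) - \bz^*(\toi{\by})\rVert^2_{\toi{\bV}}$ differ only by a quantity that does not depend on $\btheta$. Granting this, summing over $i=1,\dots,m$ and dividing by $m$ yields $L(\btheta) = \delta\big(\tfrac{1}{2m}\sum_{i}\lVert \bz^*(\toi{\hat{\by}}) - \bz^*(\toi{\by})\rVert^2_{\toi{\bV}}\big) + \kappa$ for some $\btheta$-independent constant $\kappa$. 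Since $\delta>0$ and $\kappa$ do not affect the minimizer, the two programs share the same $\argmin$, and the claim follows by invoking Proposition \ref{prop:l_eqcon_uni}, which identifies $\argmin_{\btheta\in\bTheta} L(\btheta) = \btheta_{\text{eq}}^*$.

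First I would expand the tracking-error term and subtract it from the cost. Writing $c(\bz,\toi{\by}) = -\bz^T\toi{\by} + \tfrac{\delta}{2}\bz^T\toi{\bV}\bz$ and completing the square, the purely quadratic-in-$\bz$ pieces cancel, leaving $c(\bz,\toi{\by}) - \tfrac{\delta}{2}\lVert \bz - \bz^*(\toi{\by})\rVert^2_{\toi{\bV}} = \bz^T\big(\delta\toi{\bV}\bz^*(\toi{\by}) - \toi{\by}\big) - \tfrac{\delta}{2}\bz^*(\toi{\by})^T\toi{\bV}\bz^*(\toi{\by})$, which I then evaluate at $\bz = \bz^*(\toi{\hat{\by}})$. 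The trailing quadratic in $\bz^*(\toi{\by})$ is manifestly independent of $\btheta$, so everything reduces to controlling the single linear term $\bz^*(\toi{\hat{\by}})^T\big(\delta\toi{\bV}\bz^*(\toi{\by}) - \toi{\by}\big)$.

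The crux of the argument — and the step I expect to be the main obstacle — is showing this linear term is constant in $\btheta$. The $\btheta$-dependence of $\bz^*(\toi{\hat{\by}})$ enters only through the free coordinates $\bw^*(\toi{\hat{\by}})$ via the parametrization $\bz^*(\toi{\hat{\by}}) = \bF\bw^*(\toi{\hat{\by}}) + \bz_0$, so the linear term equals $\bw^*(\toi{\hat{\by}})^T\bF^T\big(\delta\toi{\bV}\bz^*(\toi{\by}) - \toi{\by}\big)$ plus the $\btheta$-independent piece $\bz_0^T\big(\delta\toi{\bV}\bz^*(\toi{\by}) - \toi{\by}\big)$. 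It therefore suffices to verify the orthogonality relation $\bF^T\big(\delta\toi{\bV}\bz^*(\toi{\by}) - \toi{\by}\big) = \bm 0$. This is precisely the first-order stationarity condition for the ex-post portfolio: since $\bz^*(\toi{\by})$ minimizes $c(\bF\bw + \bz_0, \toi{\by})$ over the reduced variable $\bw$, the reduced gradient $\bF^T\big(\delta\toi{\bV}(\bF\bw+\bz_0) - \toi{\by}\big)$ vanishes at the optimum $\bz^*(\toi{\by}) = \bF\bw^*(\toi{\by}) + \bz_0$. Equivalently, one can substitute the closed form of $\bz^*(\toi{\by})$ directly and check that the $\bz_0$ contributions cancel, leaving $\bF^T\toi{\bV}\bz^*(\toi{\by}) = \tfrac{1}{\delta}\bF^T\toi{\by}$.

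With the orthogonality relation established, the $\bw^*(\toi{\hat{\by}})$-dependent part of the linear term vanishes and the per-observation identity holds, completing the argument once I sum over $i$ and apply Proposition \ref{prop:l_eqcon_uni}. I note this is the exact constrained analogue of the unconstrained Proposition \ref{prop:uncon_tracking_error}, with the null-space basis $\bF$ playing the role that the identity plays when $\mathbb{S} = \mathbb{R}^{d_z}$; reducing the obstruction to a stationarity condition is what lets the proof bypass the explicit bookkeeping of $\bH_{\text{eq}}(\bx)$ and $\bd_{\text{eq}}(\bx,\by)$ entirely.
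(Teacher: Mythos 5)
Your proof is correct and is essentially the paper's own approach: the paper proves the unconstrained version (Proposition \ref{prop:uncon_tracking_error}) by expanding the elliptic norm and substituting the ex-post first-order optimality condition relating $\toi{\bV}\bz^*(\toi{\by})$ to $\toi{\by}$, and then simply states that the equality-constrained case ``follows a similar argument.'' Your reduced-space stationarity relation $\bF^T\big(\delta\toi{\bV}\bz^*(\toi{\by}) - \toi{\by}\big) = \bm 0$, combined with the parametrization $\bz^*(\toi{\hat{\by}}) = \bF\bw^*(\toi{\hat{\by}}) + \bz_0$ to kill the $\btheta$-dependent cross term, is exactly that similar argument written out in full.
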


\begin{equation}\label{eq:eqcon_tracking_error}
\begin{split}
\btheta_{\text{eq}}^* & = \argmin_{\btheta} \quad   \frac{1}{2m} \sum_{i = 1}^m \lVert \bz^*(\toi{\hat{\by}}) - \bz^*(\toi{\by})  \rVert^2_{\toi{\bV}} \\
& \text{subject to} \quad   \bA \bz^*(\toi{\hat{\by}}) = \blb \\
& \text{subject to} \quad   \bA \bz^*(\toi{\by}) = \blb
\end{split}
\end{equation}

\section{Simulated experiments}\label{sec:sims}
\subsection{Simulation 1: estimation error in $\hat{\bV}$}\label{sec:sims_1}
\citet{Elma2020} consider the integration of predictive forecasting with downstream optimization problems that have linear cost functions. Their simulated experiments demonstrate that the benefit of the `smart predict, then optimize' (SPO) framework increases as the amount of model misspecification increases. Specifically, model misspecification is introduced by synthetically generating cost vectors that are polynomial functions of the simulated feature data and modelling the relationship as though it is linear. In particular, they demonstrate that a linear forecasting model trained with SPO can outperform traditional prediction models, such as OLS and random forest, and the amount of outperformance increases as the degree of nonlinearity in the ground truth increases.

Here, we demonstrate that, for a mean-variance  decision program, the IPO model can provide lower out-of-sample MVO costs in comparison to a traditional OLS-based `predict, then optimize' model, even when the underlying ground truth is \textit{linear} in the feature variables. In particular, we demonstrate that the OLS model is vulnerable to estimation error in $\toi{\hat{\bV}}$, resulting in sub-optimal decision making and increasing out-of-sample MVO costs as estimation error in $\toi{\hat{\bV}}$ increases. The IPO model, on the other hand, incorporates the impact of estimation in the covariance matrix. The simulated experiment below demonstrates that the IPO model consistently outperforms the OLS model in terms of minimizing the out-of-sample MVO cost. Moreover, the outperformance is shown to be consistent over a wide range of signal-to-noise ratios (SNRs) and asset correlation assumptions commonly observed in financial forecasting. In general we observe that the benefit of the IPO model increases as the estimation error in $\toi{\hat{\bV}}$ increases, even when the underlying ground truth is linear in the feature variables.

We follow an experimental design similar to \citet{Hastie2017}. Asset returns are assumed to be normally distributed, $\toi{\by} \sim \mathcal{N}(\diag(\toi{\bx}) \btheta_0, \bV )$ where $\bV \in \mathbb{R}^{d_z \times d_z}$ has entry $(j, k)$ equal to $\sigma^2 \rho^{|j-k|}$, and $\sigma = 0.0125$ ($20\%$ annualized). Asset mean returns are modelled according to univariate model of the form:
$$
\toi{\by} = \diag(\toi{\bx}) \btheta_0 + \tau \toi{\bepsilon},
$$
where feature data, $\toi{\bx} \sim \mathcal{N}(\bm 0, \bI_{d_x})$, and residuals, $\toi{\bepsilon} \sim \mathcal{N}(\bm 0, \bV)$. The scalar value $\tau$ controls the SNR level, where $\text{SNR} = \Var(f(\bx,\btheta_0))/\Var(\bepsilon)$. We consider asset correlation values in the range of:
$
\rho \in \{0,0.25,0.5,0.75\},
$
and SNR values:
$
\text{SNR} \in \{0.001,0.002,0.003,0.004,0.005,0.01,0.05,0.10\}.
$
Note that it may appear that these SNR values are extremely low. However, in most applications of asset return forecasting, the SNRs are typically found to be much less than $1\%$. Indeed, a moderate sized universe (25 assets) with each asset having SNRs of $1\%$ can generate annualized Sharpe ratios in the low double digits - which is extremely rare - and SNRs of $10\%$ are extremely unlikely at a daily trading frequency.

We introduce estimation error in $\toi{\hat{\bV}}$ by varying the sample size, $s = \text{res}*d_z$, used  for estimation. We set the number of assets, $d_z = 10$, and consider resolutions, $\text{res} \in \{5, 10, 20\}$, thus giving covariance sample sizes of $s \in \{50, 100, 200\}$. In all experiments we set the risk aversion parameter $\delta = 1$.

The simulation process can be described as follows:
\begin{enumerate}
\item Generate the ground truth coefficients: $\btheta_0 \sim \mathcal{N}(\bm 0, \bI_{d_\theta} )$.
\item Generate feature variables: $\toi{\bx} \sim \mathcal{N}(\bm 0, \bI_{d_x})$.
\item Generate $2000$ return observations: $\toi{\by} \sim \mathcal{N}(\diag(\toi{\bx}) \btheta,\tau \toi{\bepsilon})$, where $\tau$ is chosen to meet the desired SNR.
\item Divide the sample data into two equally sized disjoint data sets: in-sample and out-of-sample.
\item Generate estimates $\toi{\hat{\bV}}$ using the chosen sample size, $s$.
\item Estimate the optimal OLS and IPO coefficients on the in-sample data.
\item Generate the optimal out-of-sample MVO decisions, $\bz^*(\hat{\by})$, using the covariance estimates $\toi{\hat{\bV}}$ and corresponding optimal regression coefficients for predicting $\toi{\hat{\by}}$.
\item Evaluate several performance metrics (described below) on the out-of-sample data.
\item Repeat steps $1$-$8$ a total of $100$ times and average the results.
\end{enumerate}

\textbf{Performance metrics:}
Let $\btheta_0$ be the ground truth and $\btheta$ denote an estimated (OLS or IPO) regression coefficient. Let $\bV$ denote the true asset covariance and let $\{\toi{\by}\}_{i=1}^m$ denote the realized return observations.
 \begin{itemize}
 \item \textbf{MVO Cost}: Let $\bz^*(\hat{\by})$ be as defined in Equation $\eqref{eq:mvo}$. The out-of-sample MVO cost is then given by:
 \begin{equation} \label{eq:cost_sim_1}
 c(\bz^*(\hat{\by}), \toi{\by} ) =   -\bz^*(\hat{\by})^T \toi{\by} + \frac{\delta}{2} \bz^*(\hat{\by})^T \bV \bz^*(\hat{\by})
 \end{equation}
 \item \textbf{Proportion of variance explained:} a measure of return prediction accuracy on defined as:
 $$
 \text{PVE}(\btheta) = 1 - \mathbb{E}[(\toi{\by} - \diag(\toi{\bx}) \btheta)^2]/\Var{(\toi{\by})}.
 $$
\end{itemize}

We consider the case where the  MVO program contains equality constraints. In particular, we enforce that the sum of the weights must  be equal to one: $\mathbb{S} =\{\bz^T\bone  = 1\}$. Figures \ref{fig:sims_1_mvo_200} and \ref{fig:sims_1_pve_200} report the average and $95\%$-ile range of the out-of-sample MVO costs and PVE values, respectively, as a function of the SNR. Here, the covariance resolution is set to $20$ and therefore the expected estimation error in $\toi{\hat{\bV}}$ is relatively low. As a result, we observe that the difference in both out-of-sample MVO cost and PVE is negligible, with the IPO model producing marginally lower MVO costs and the OLS model producing marginally higher PVE, as expected. Observe that even in the most optimistic case where estimation error in $\toi{\hat{\bV}}$ is low and the ground truth relationship is linear, there is no adverse repercussions in using the IPO model. Furthermore, in order to effectively eliminate estimation error we require a covariance resolution on the order of $20$; which in practical terms implies that for a $100$ asset portfolio we require a sample size of $2000$ return observations. In many forecasting applications a sample size of this magnitude would be impractical and would potentially interfere with the observed time-varying dependency of asset volatilities and correlations  \citep{Engle1982,Bollerslev1986,Bollerslev1988,Bollerslev1990, Bollersev1994}. Furthermore, as estimation error increases, we observe that for the majority of relevant SNRs, the IPO model produces a lower realized out-of-sample MVO cost. In particular Figures \ref{fig:sims_1_mvo_100} and \ref{fig:sims_1_mvo_50} demonstrate a statistically significant reduction in out-of-sample MVO costs as the covariance resolution decreases to $10$ and $5$, respectively. Interestingly, Figures \ref{fig:sims_1_pve_100} and \ref{fig:sims_1_pve_50} demonstrate that the IPO model produces lower realized MVO costs, despite providing lower average prediction accuracy, as measured by PVE. Note that this finding is consistent with the results presented in \citep{Elma2020}. Finally, we observe in Figures \ref{fig:sims_1_mvo_100} and \ref{fig:sims_1_mvo_50} that the benefit of the IPO model is greatest when the ground truth correlation values, $\rho$, are closest to zero. Indeed this is intuitive as the extent of covariance estimation error in both magnitude and sign is largest when correlation values approach zero.

\begin{figure}[H]
  \centering
  \begin{subfigure}[b]{0.35\linewidth}
    \includegraphics[width=\linewidth , trim={00mm 0cm 0cm 0cm},clip]{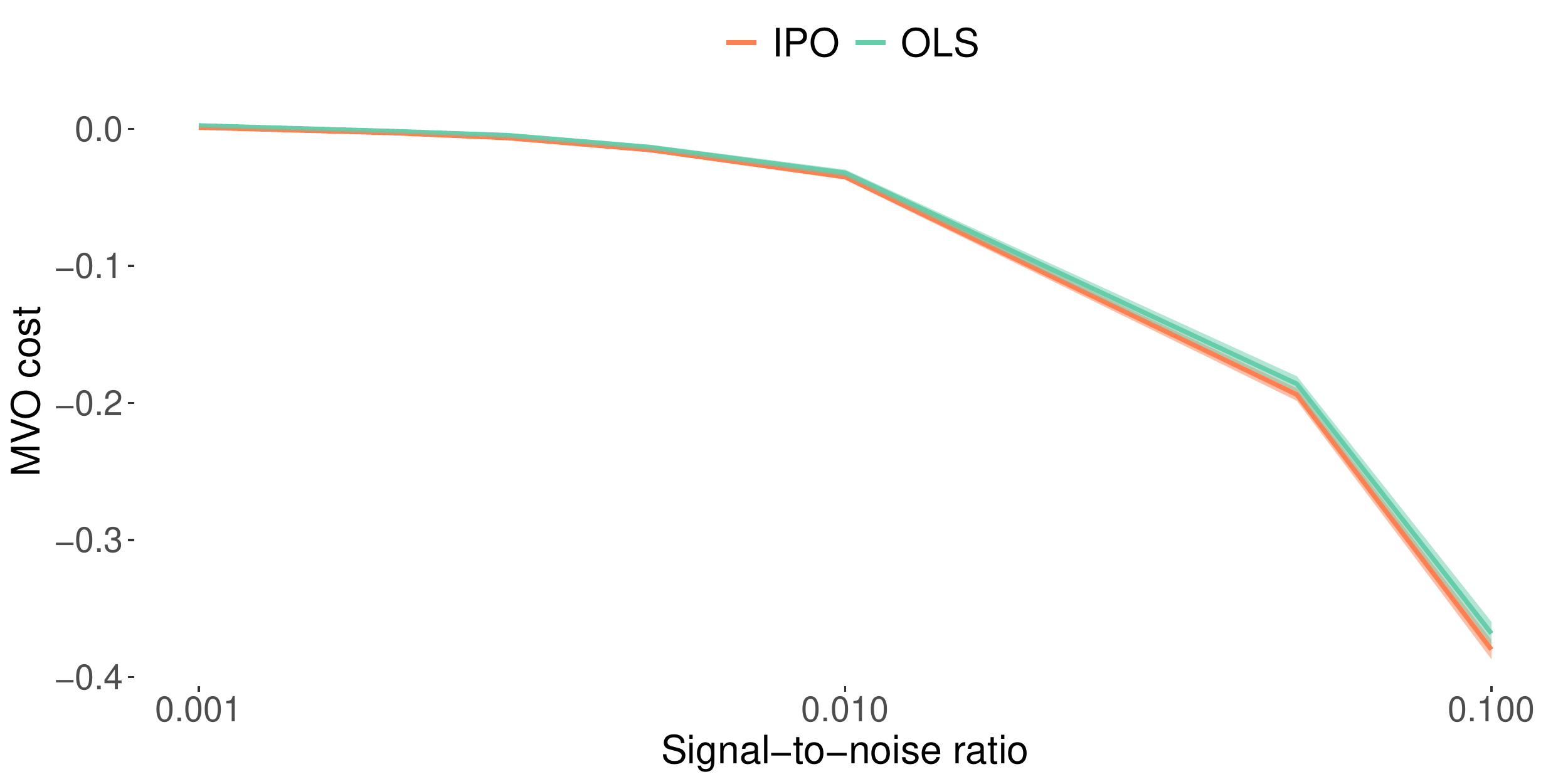}
    \caption{$\rho = 0, \text{res} = 20$.}
  \end{subfigure}
 \begin{subfigure}[b]{0.35\linewidth}
    \includegraphics[width=\linewidth , trim={00mm 0cm 0cm 0cm},clip]{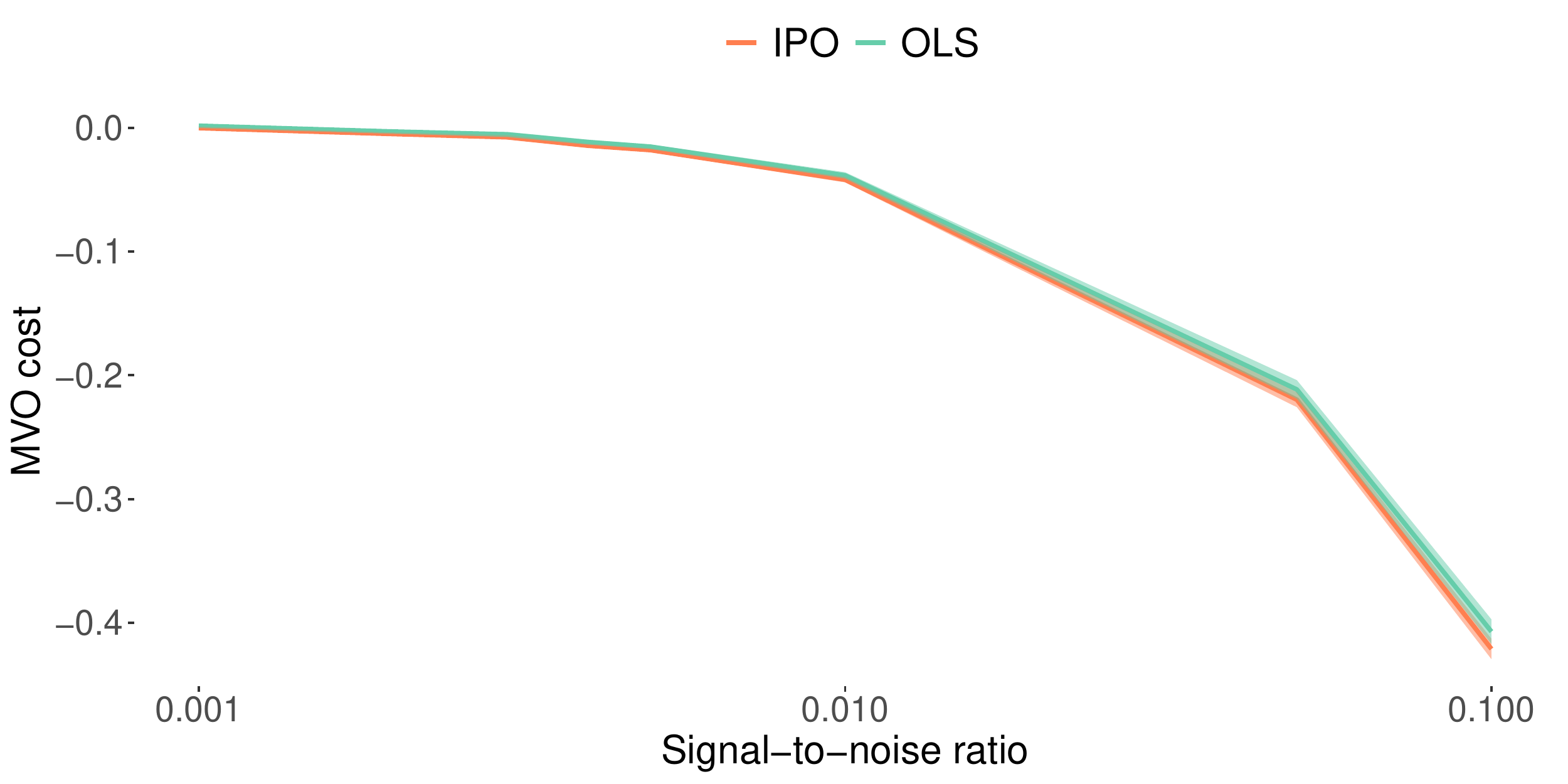}
    \caption{$\rho = 0.25, \text{res} = 20$.}
  \end{subfigure}
  \begin{subfigure}[b]{0.35\linewidth}
    \includegraphics[width=\linewidth , trim={00mm 0cm 0cm 0cm},clip]{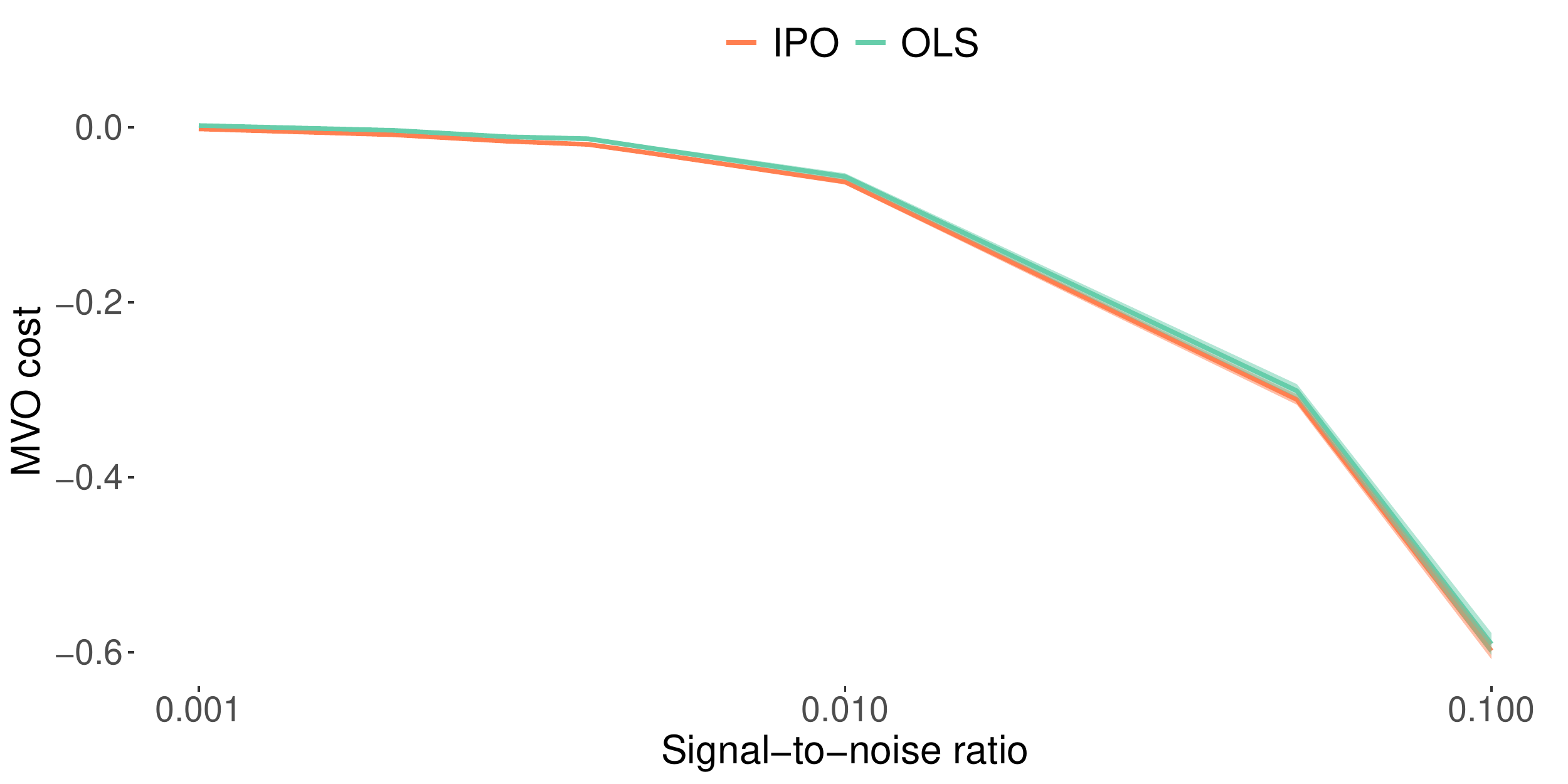}
    \caption{$\rho = 0.50, \text{res} = 20$.}
  \end{subfigure}
  \begin{subfigure}[b]{0.35\linewidth}
    \includegraphics[width=\linewidth , trim={00mm 0cm 0cm 0cm},clip]{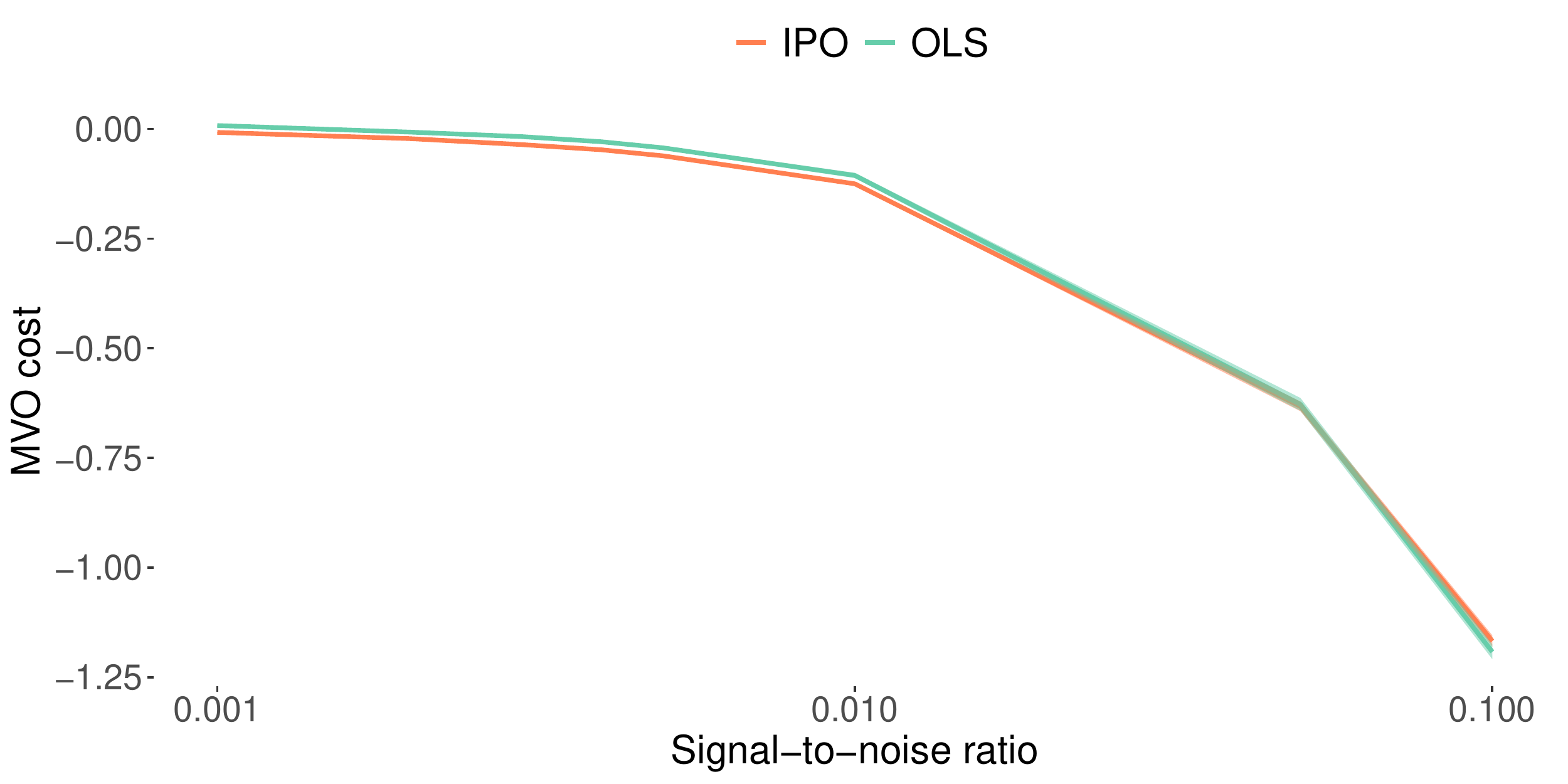}
    \caption{$\rho = 0.75, \text{res} = 20$.}
  \end{subfigure}
  \caption{Out-of-sample MVO cost for IPO and OLS as of function of return signal-to-noise ratios.}
  \label{fig:sims_1_mvo_200}
\end{figure}

\begin{figure}[H]
  \centering
  \begin{subfigure}[b]{0.35\linewidth}
    \includegraphics[width=\linewidth , trim={00mm 0cm 0cm 0cm},clip]{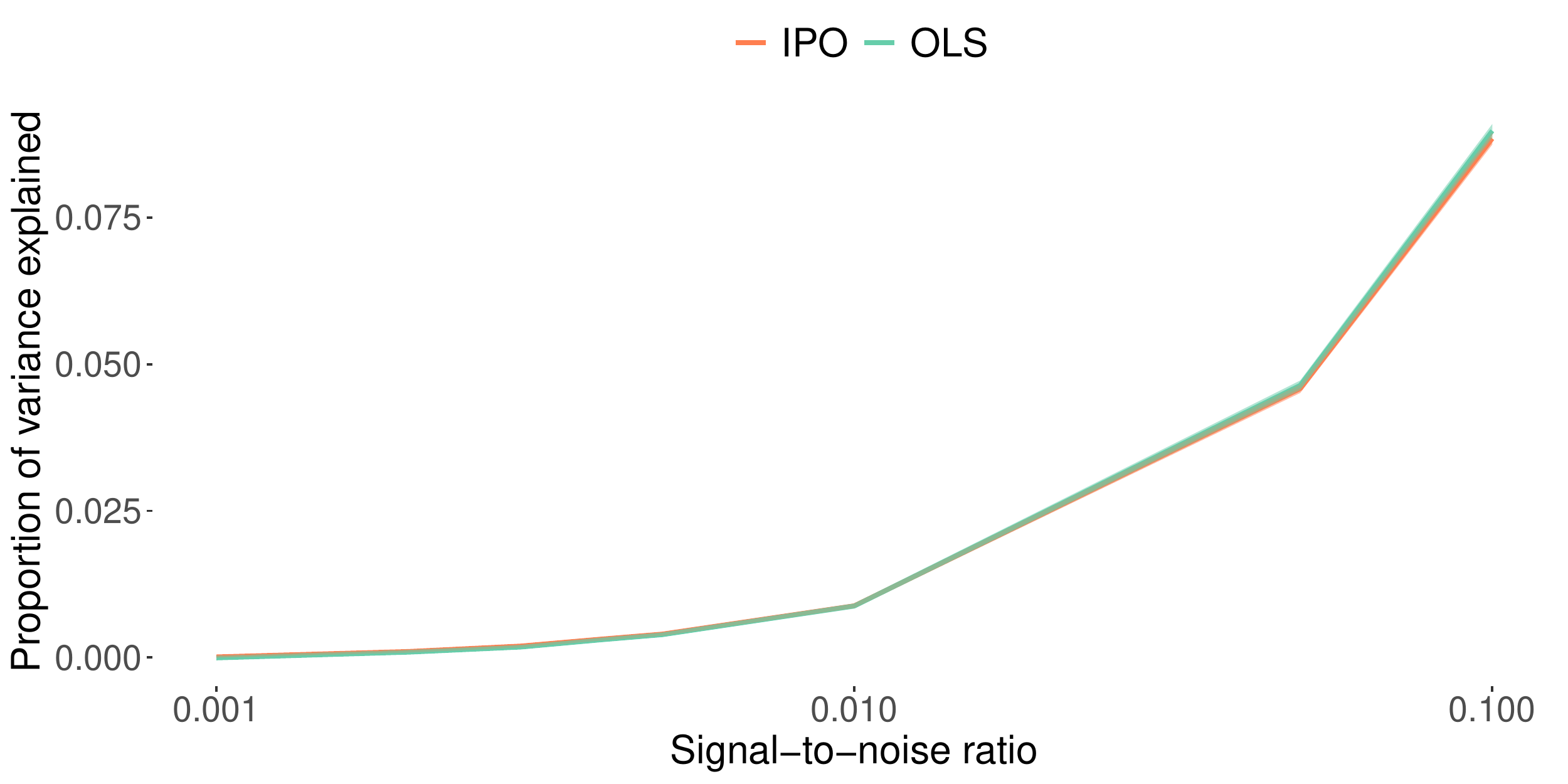}
    \caption{$\rho = 0, \text{res} = 20$.}
  \end{subfigure}
 \begin{subfigure}[b]{0.35\linewidth}
    \includegraphics[width=\linewidth , trim={00mm 0cm 0cm 0cm},clip]{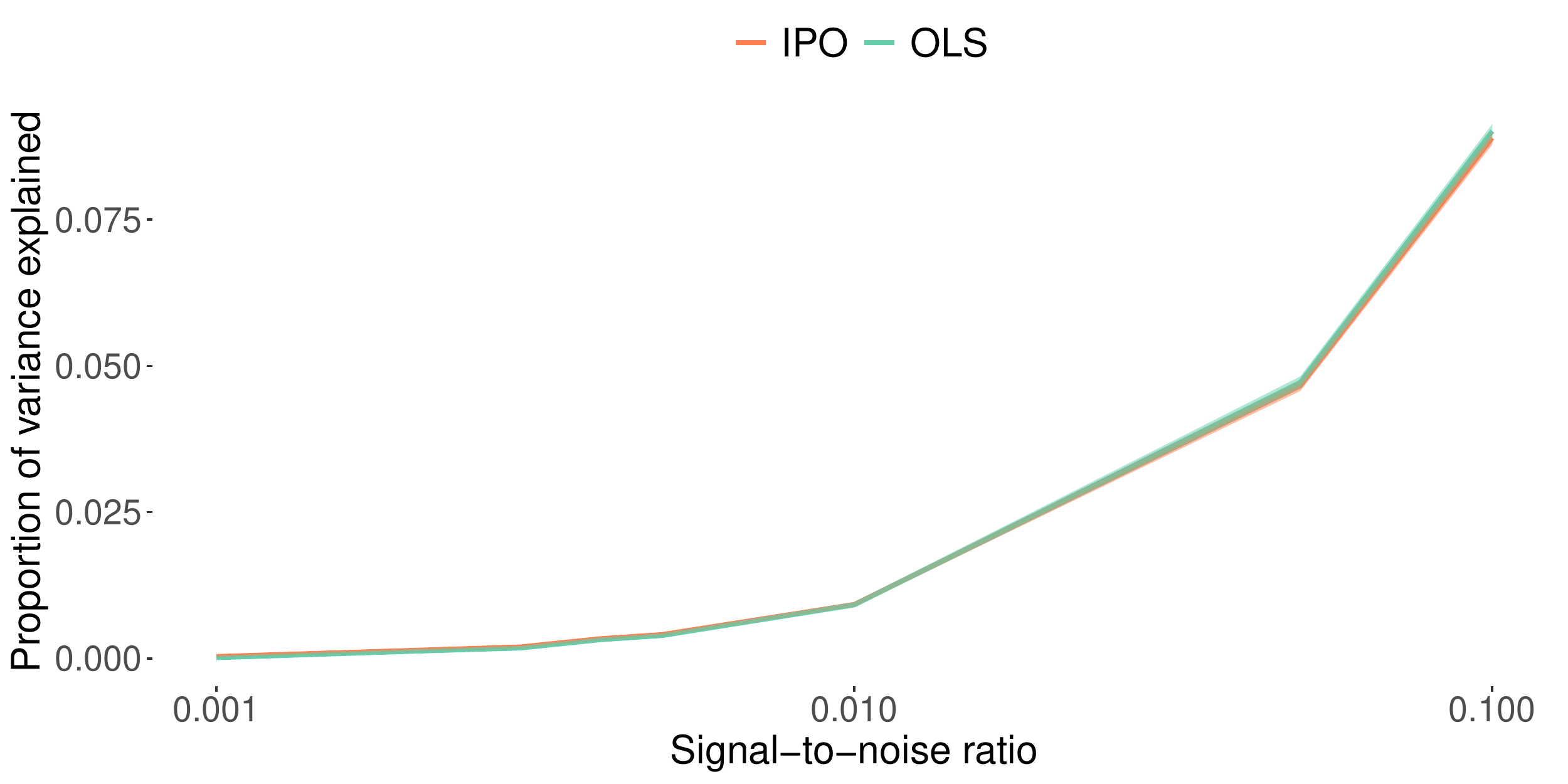}
    \caption{$\rho = 0.25, \text{res} = 20$.}
  \end{subfigure}
  \begin{subfigure}[b]{0.35\linewidth}
    \includegraphics[width=\linewidth , trim={00mm 0cm 0cm 0cm},clip]{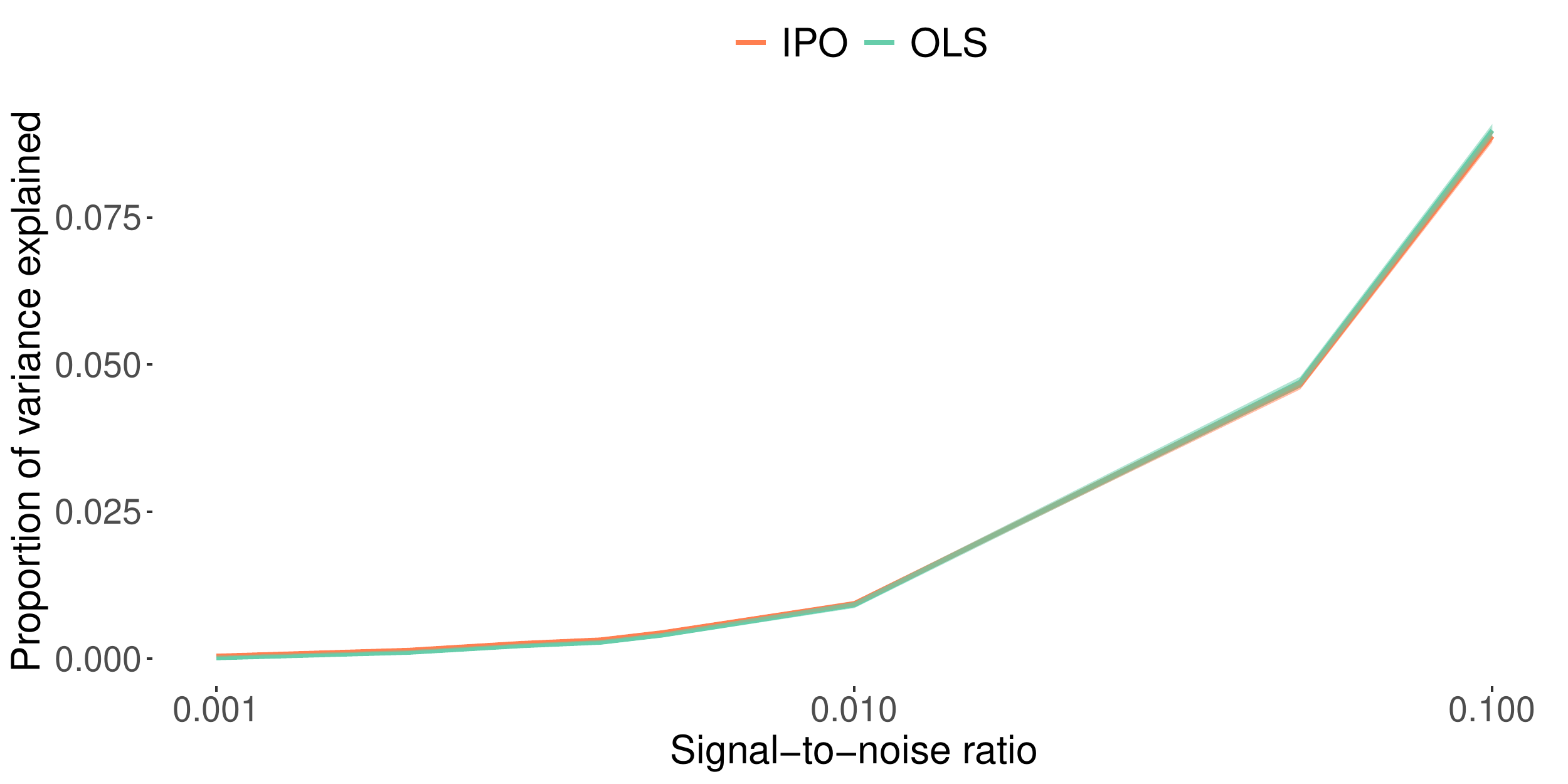}
    \caption{$\rho = 0.50, \text{res} = 20$.}
  \end{subfigure}
  \begin{subfigure}[b]{0.35\linewidth}
    \includegraphics[width=\linewidth , trim={00mm 0cm 0cm 0cm},clip]{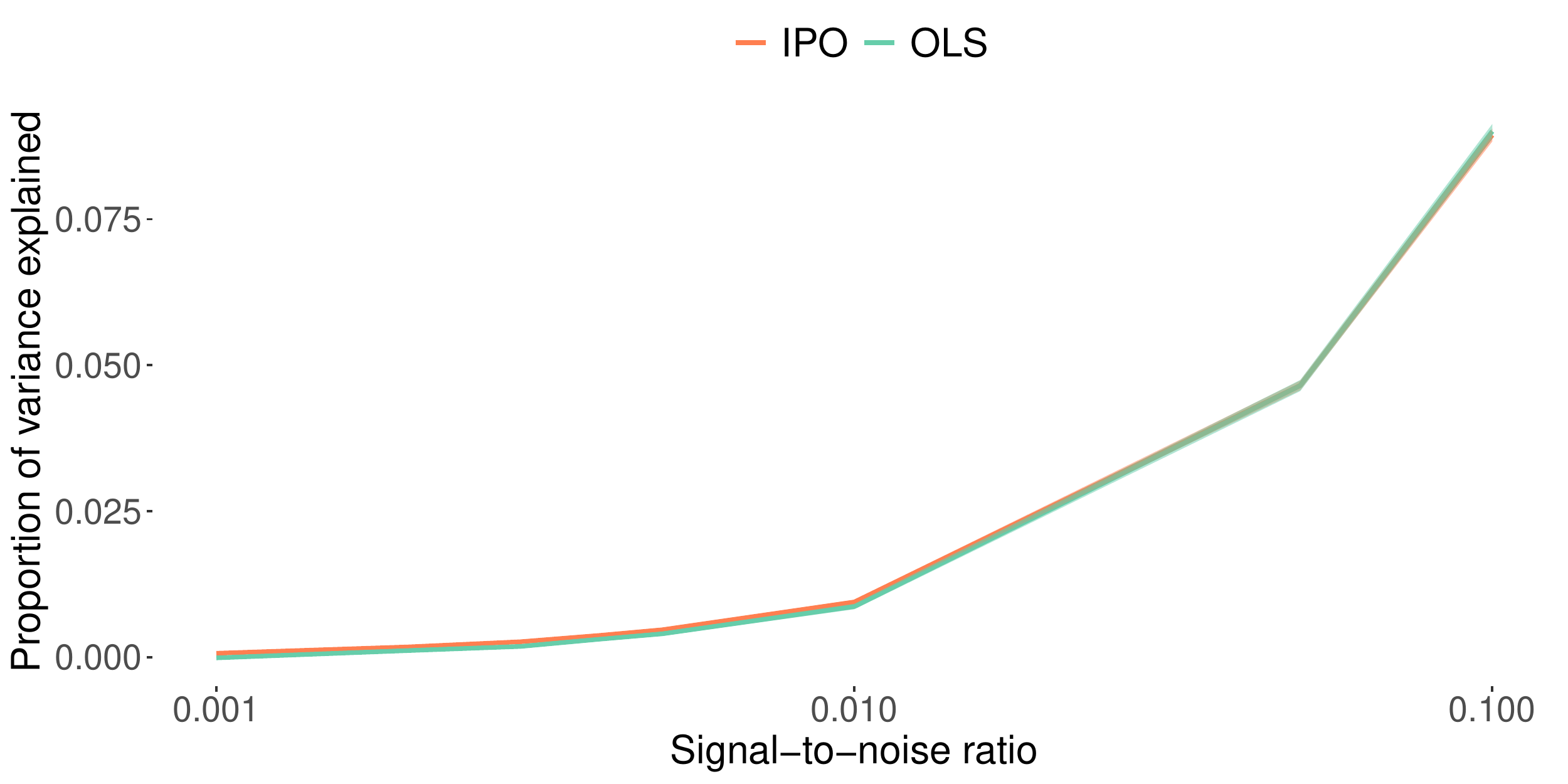}
    \caption{$\rho = 0.75, \text{res} = 20$.}
  \end{subfigure}
  \caption{Out-of-sample PVE for IPO and OLS as of function of return signal-to-noise ratios.}
  \label{fig:sims_1_pve_200}
\end{figure}

\begin{figure}[H]
  \centering
  \begin{subfigure}[b]{0.35\linewidth}
    \includegraphics[width=\linewidth , trim={00mm 0cm 0cm 0cm},clip]{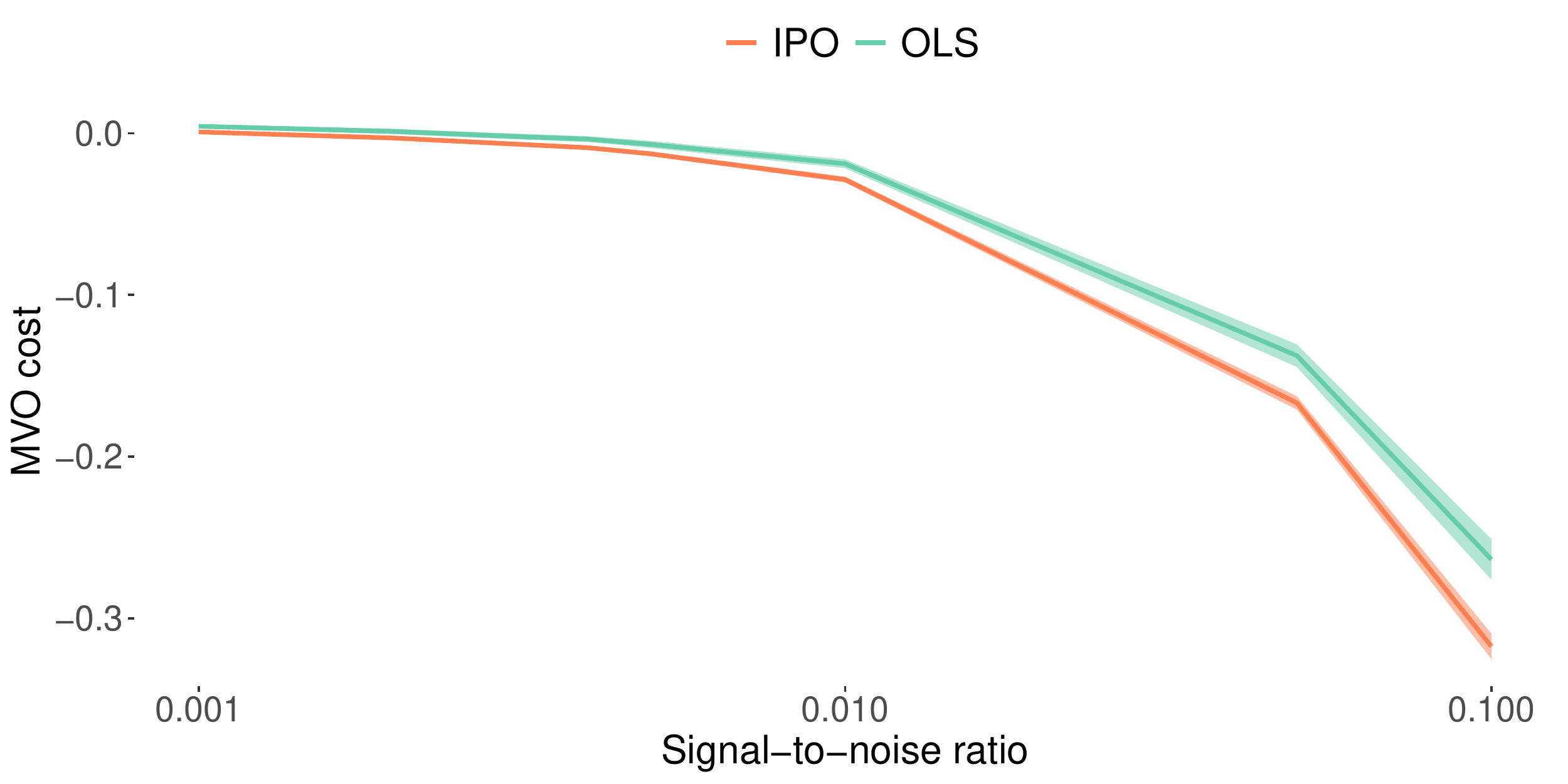}
    \caption{$\rho = 0, \text{res} = 10$.}
  \end{subfigure}
 \begin{subfigure}[b]{0.35\linewidth}
    \includegraphics[width=\linewidth , trim={00mm 0cm 0cm 0cm},clip]{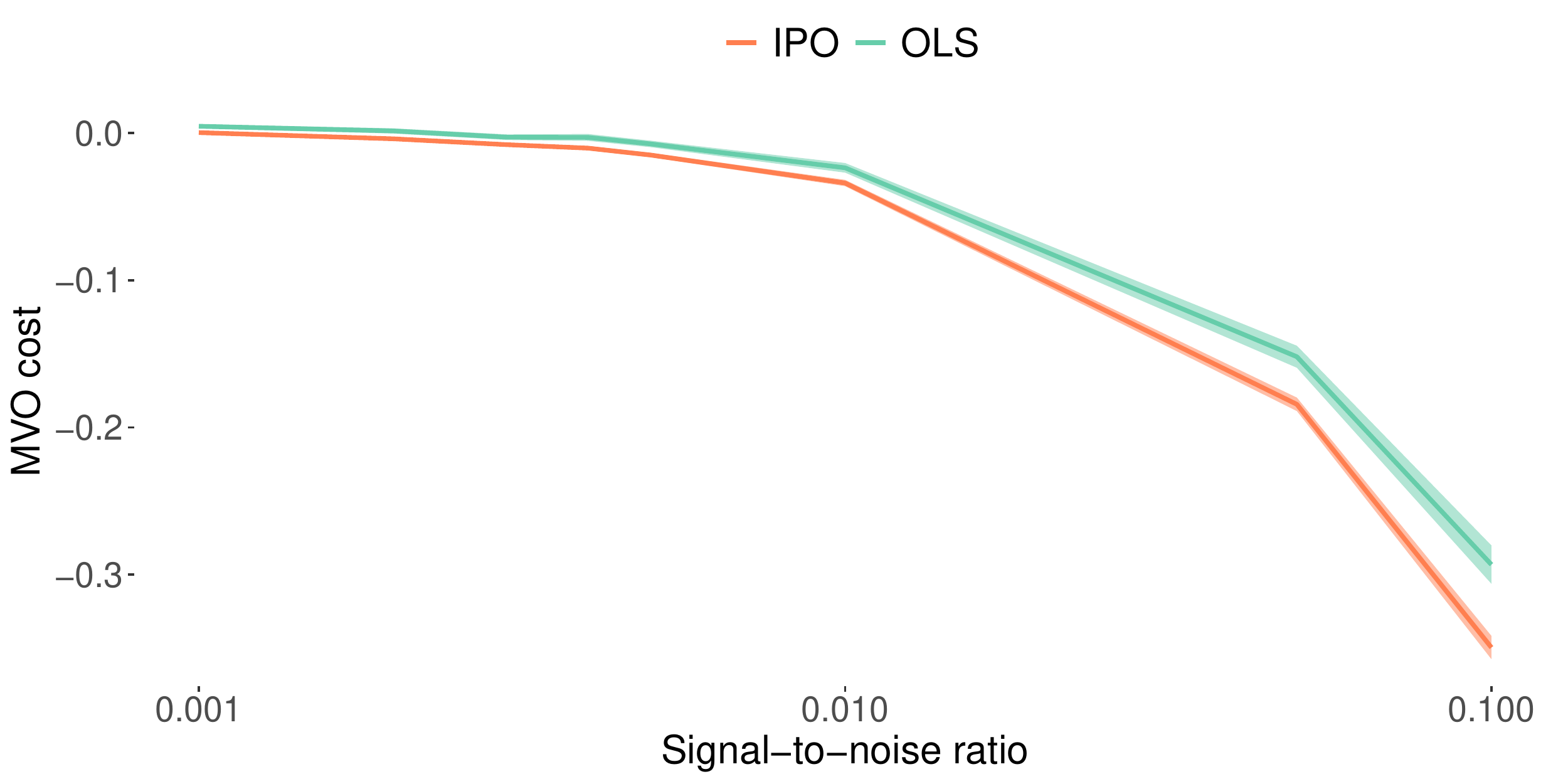}
    \caption{$\rho = 0.25, \text{res} = 10$.}
  \end{subfigure}
  \begin{subfigure}[b]{0.35\linewidth}
    \includegraphics[width=\linewidth , trim={00mm 0cm 0cm 0cm},clip]{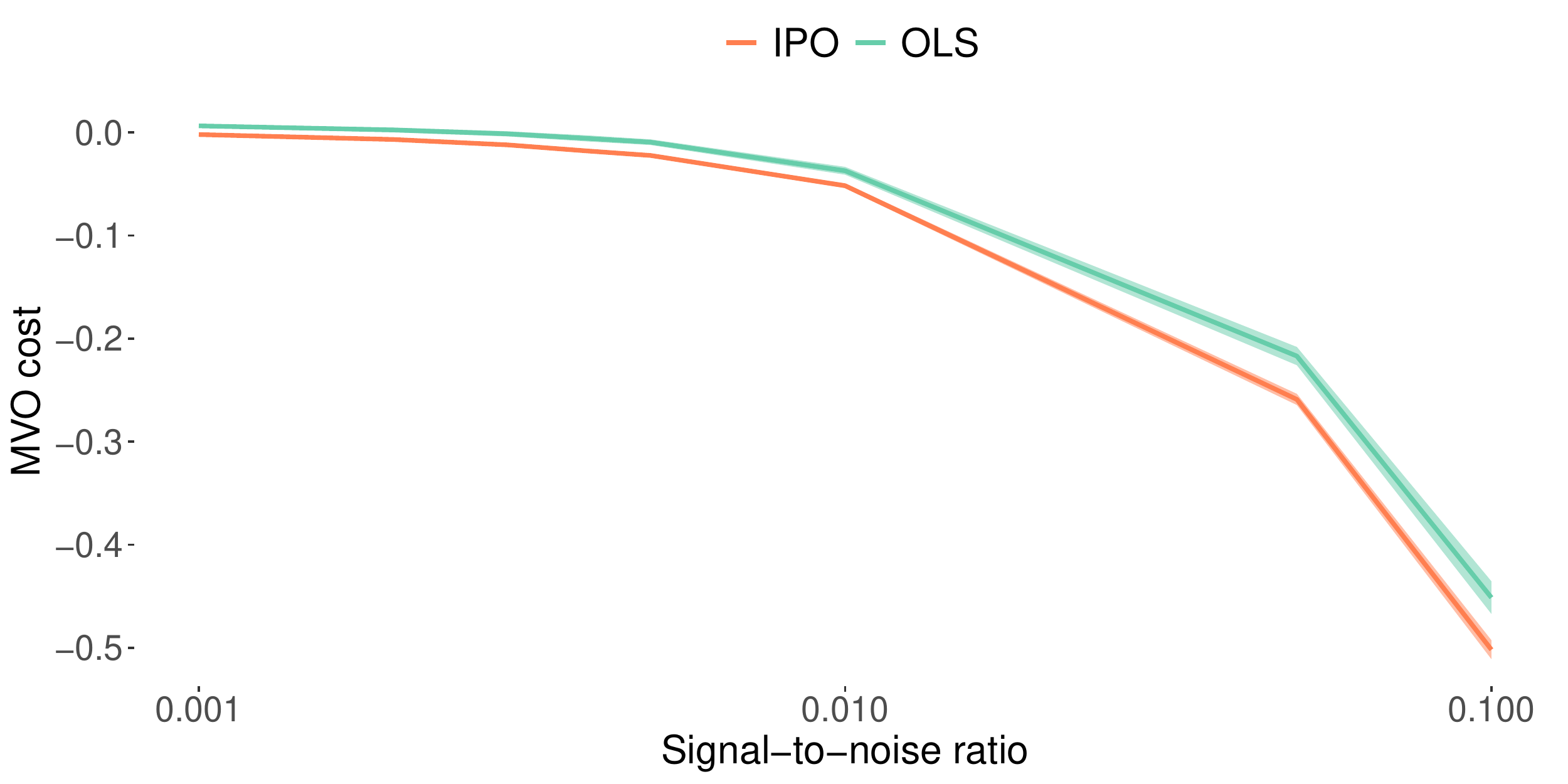}
    \caption{$\rho = 0.50, \text{res} = 10$.}
  \end{subfigure}
  \begin{subfigure}[b]{0.35\linewidth}
    \includegraphics[width=\linewidth , trim={00mm 0cm 0cm 0cm},clip]{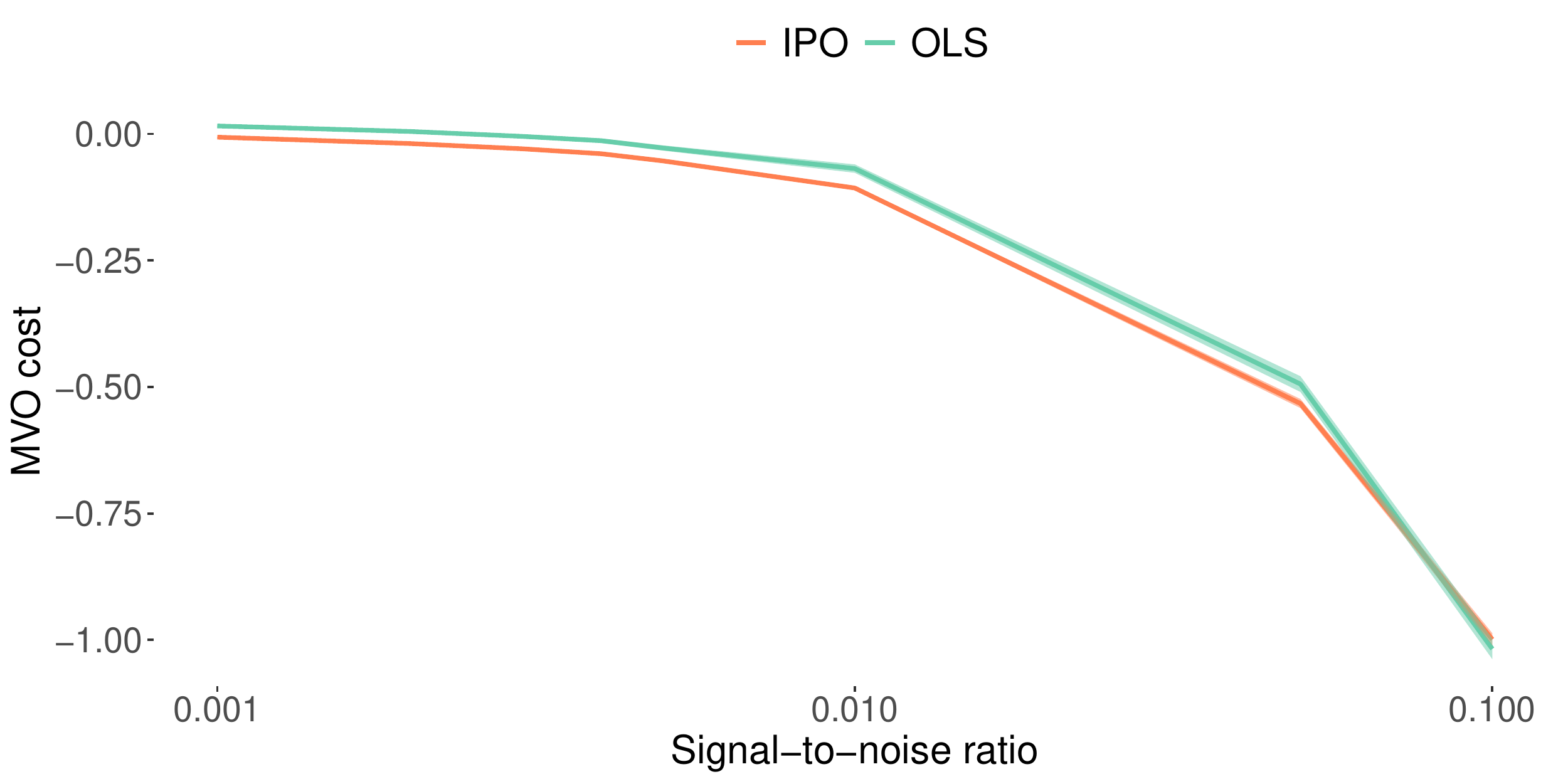}
    \caption{$\rho = 0.75, \text{res} = 10$.}
  \end{subfigure}
  \caption{Out-of-sample MVO cost for IPO and OLS as of function of return signal-to-noise ratios.}
  \label{fig:sims_1_mvo_100}
\end{figure}

\begin{figure}[H]
  \centering
  \begin{subfigure}[b]{0.35\linewidth}
    \includegraphics[width=\linewidth , trim={00mm 0cm 0cm 0cm},clip]{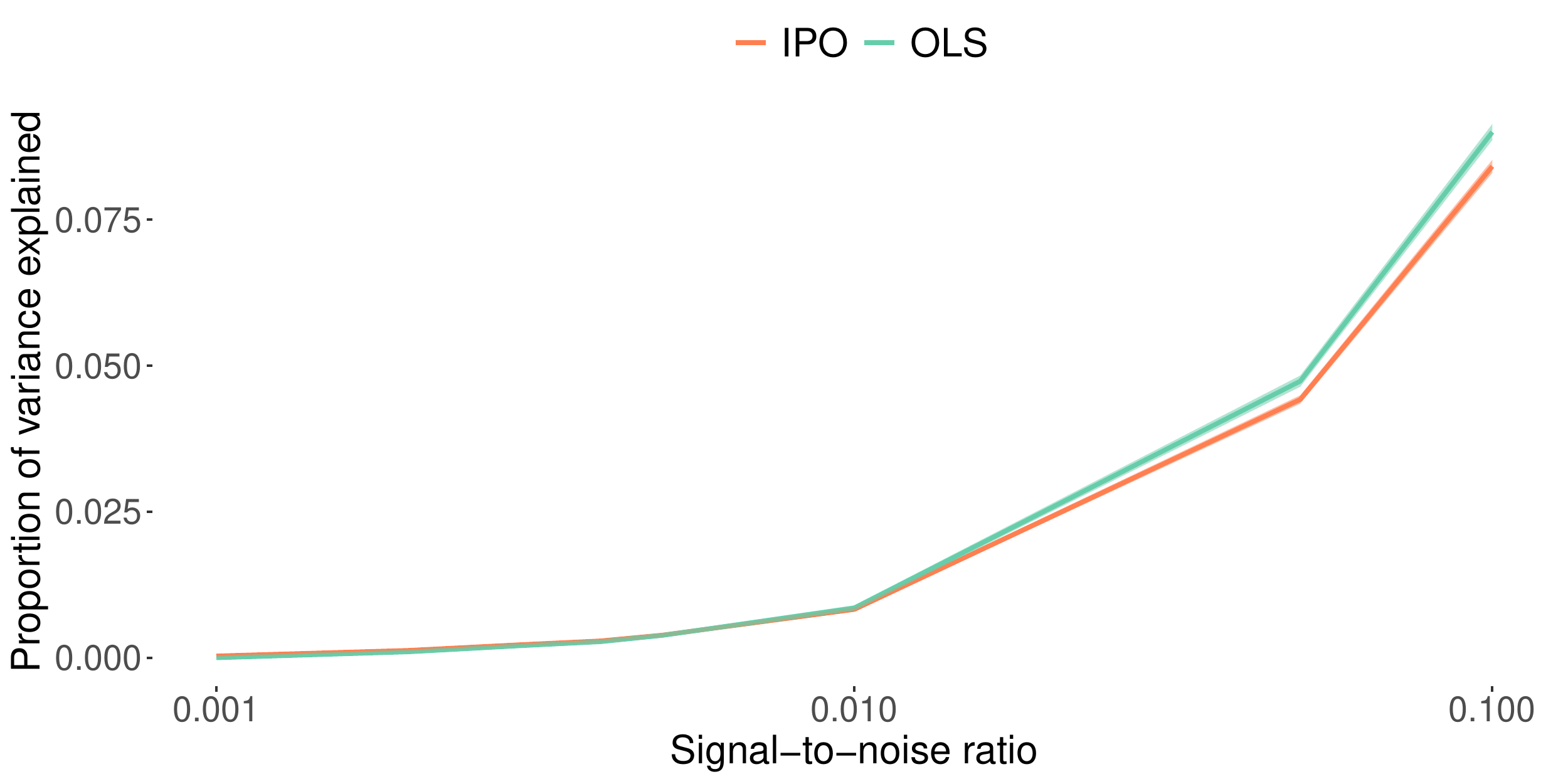}
    \caption{$\rho = 0, \text{res} = 10$.}
  \end{subfigure}
 \begin{subfigure}[b]{0.35\linewidth}
    \includegraphics[width=\linewidth , trim={00mm 0cm 0cm 0cm},clip]{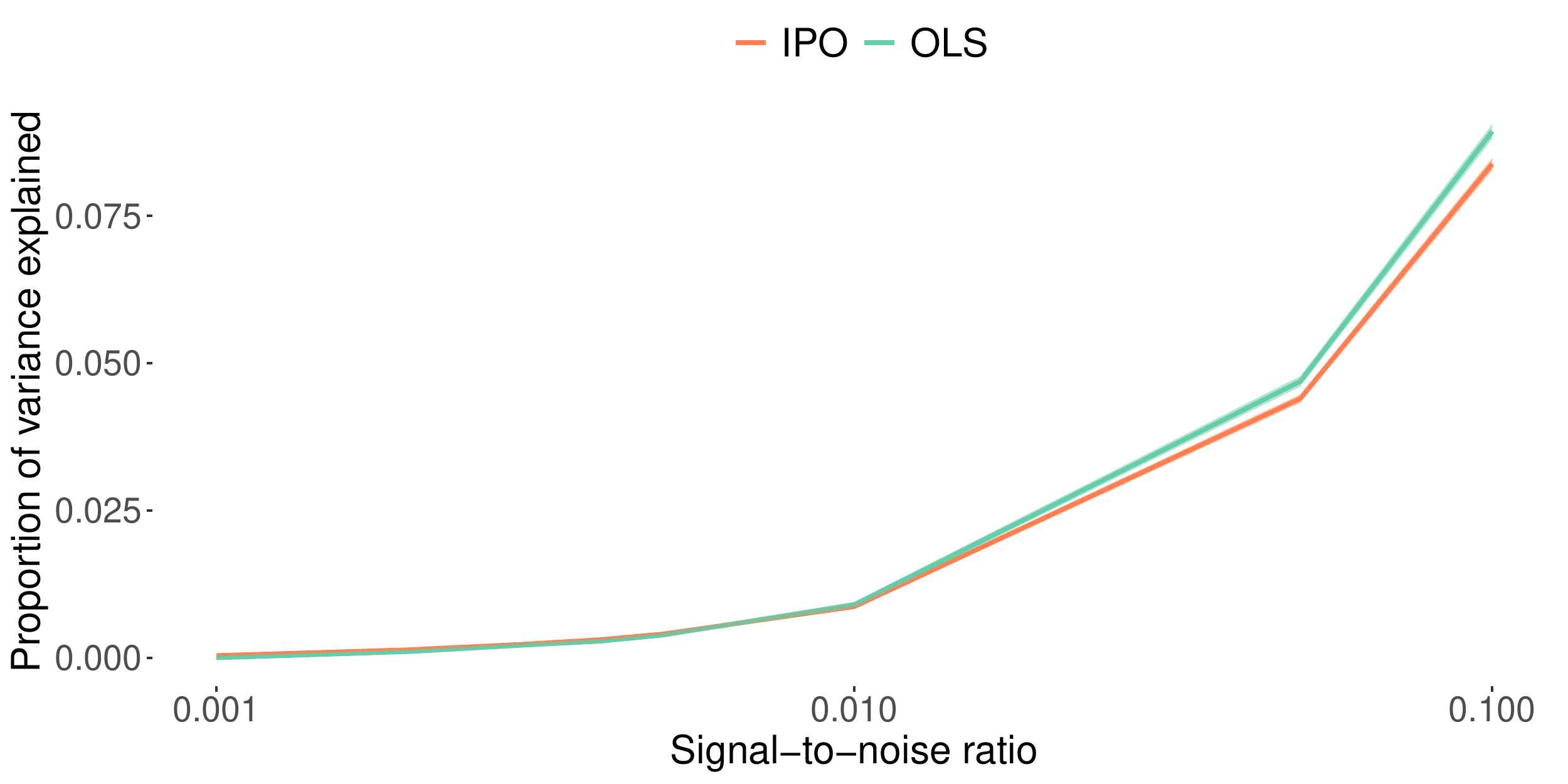}
    \caption{$\rho = 0.25, \text{res} = 10$.}
  \end{subfigure}
  \begin{subfigure}[b]{0.35\linewidth}
    \includegraphics[width=\linewidth , trim={00mm 0cm 0cm 0cm},clip]{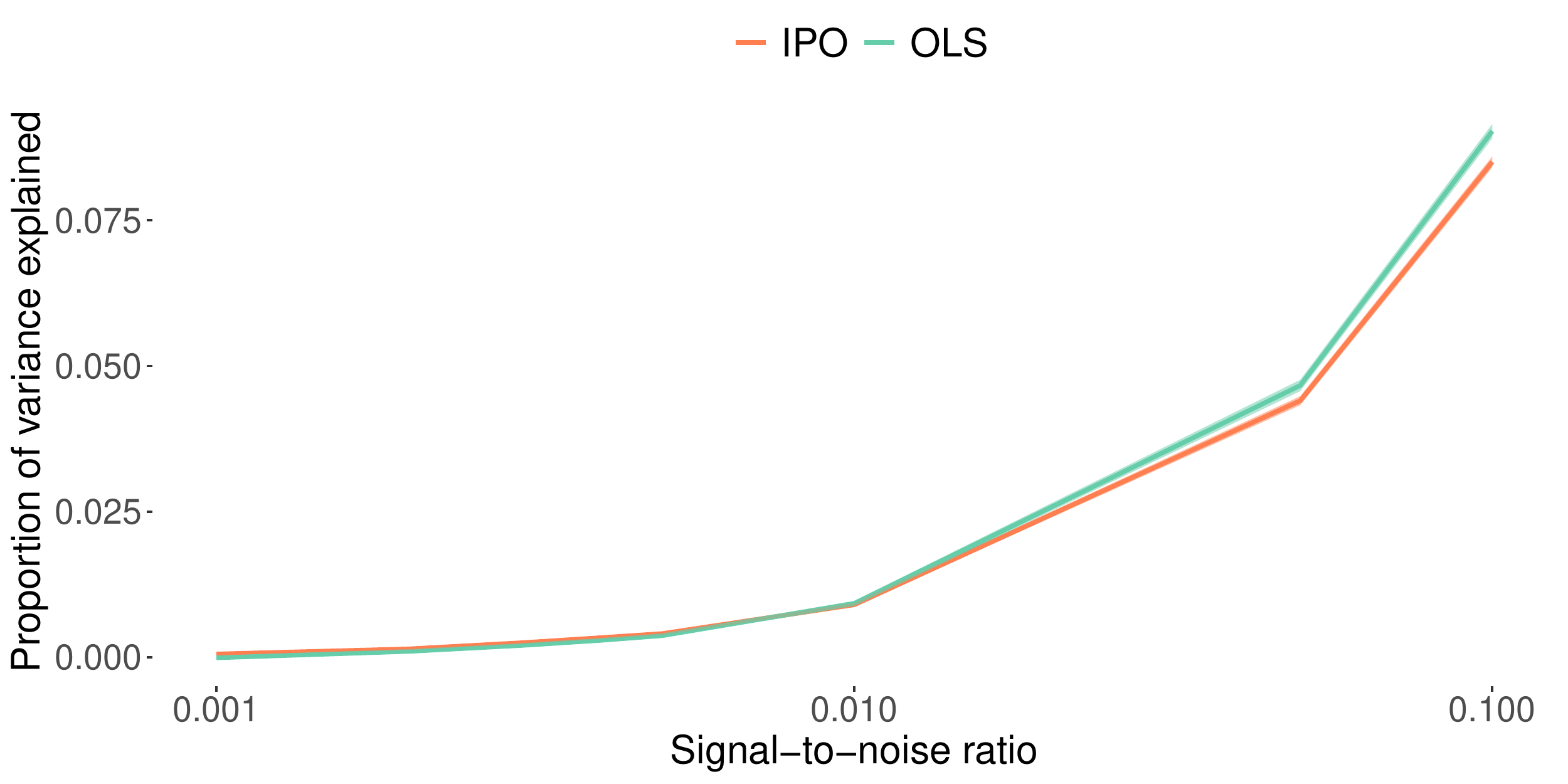}
    \caption{$\rho = 0.50, \text{res} = 10$.}
  \end{subfigure}
  \begin{subfigure}[b]{0.35\linewidth}
    \includegraphics[width=\linewidth , trim={00mm 0cm 0cm 0cm},clip]{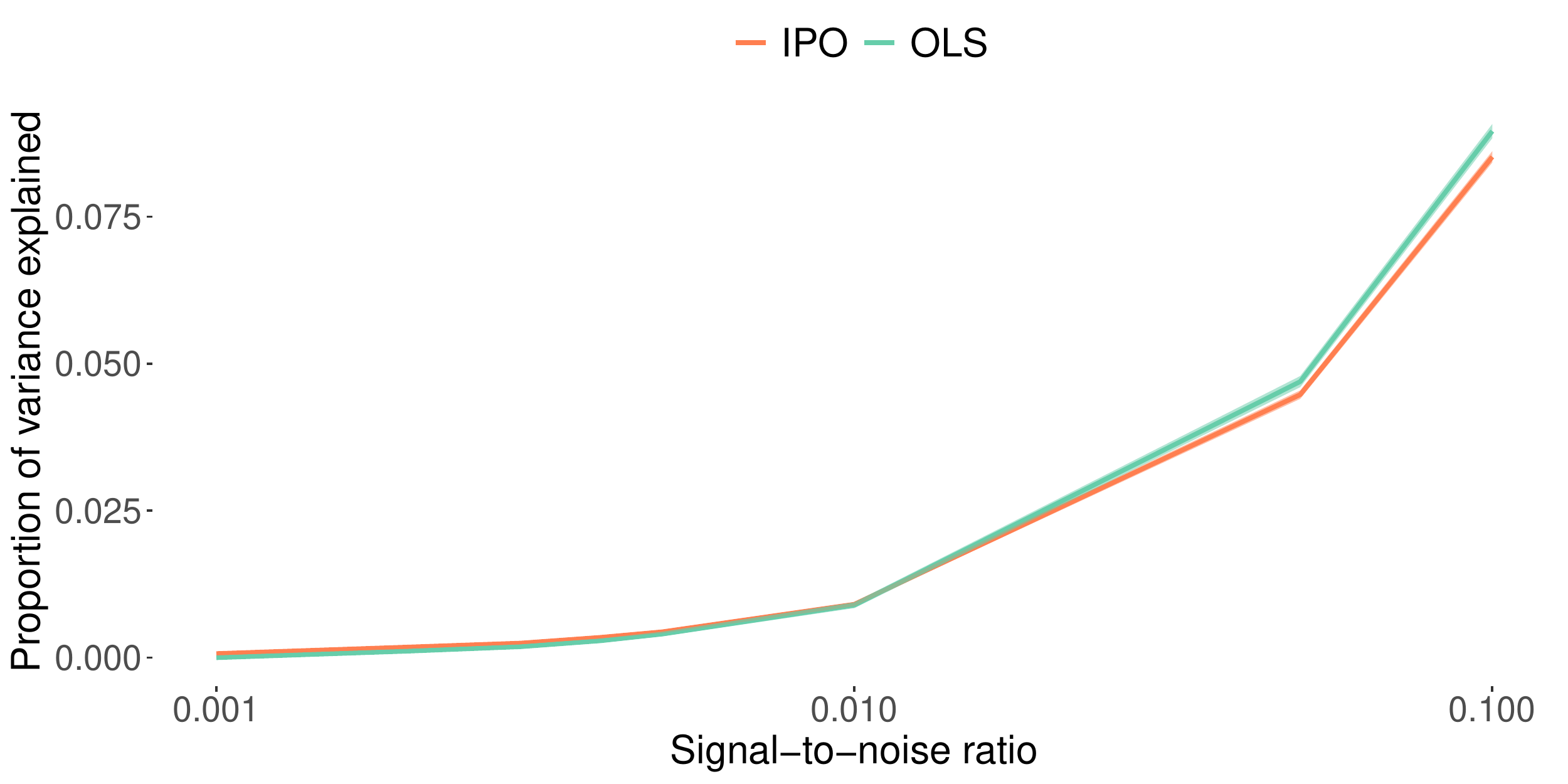}
    \caption{$\rho = 0.75, \text{res} = 10$.}
  \end{subfigure}
  \caption{Out-of-sample PVE for IPO and OLS as of function of return signal-to-noise ratios.}
  \label{fig:sims_1_pve_100}
\end{figure}

\begin{figure}[H]
  \centering
  \begin{subfigure}[b]{0.35\linewidth}
    \includegraphics[width=\linewidth , trim={00mm 0cm 0cm 0cm},clip]{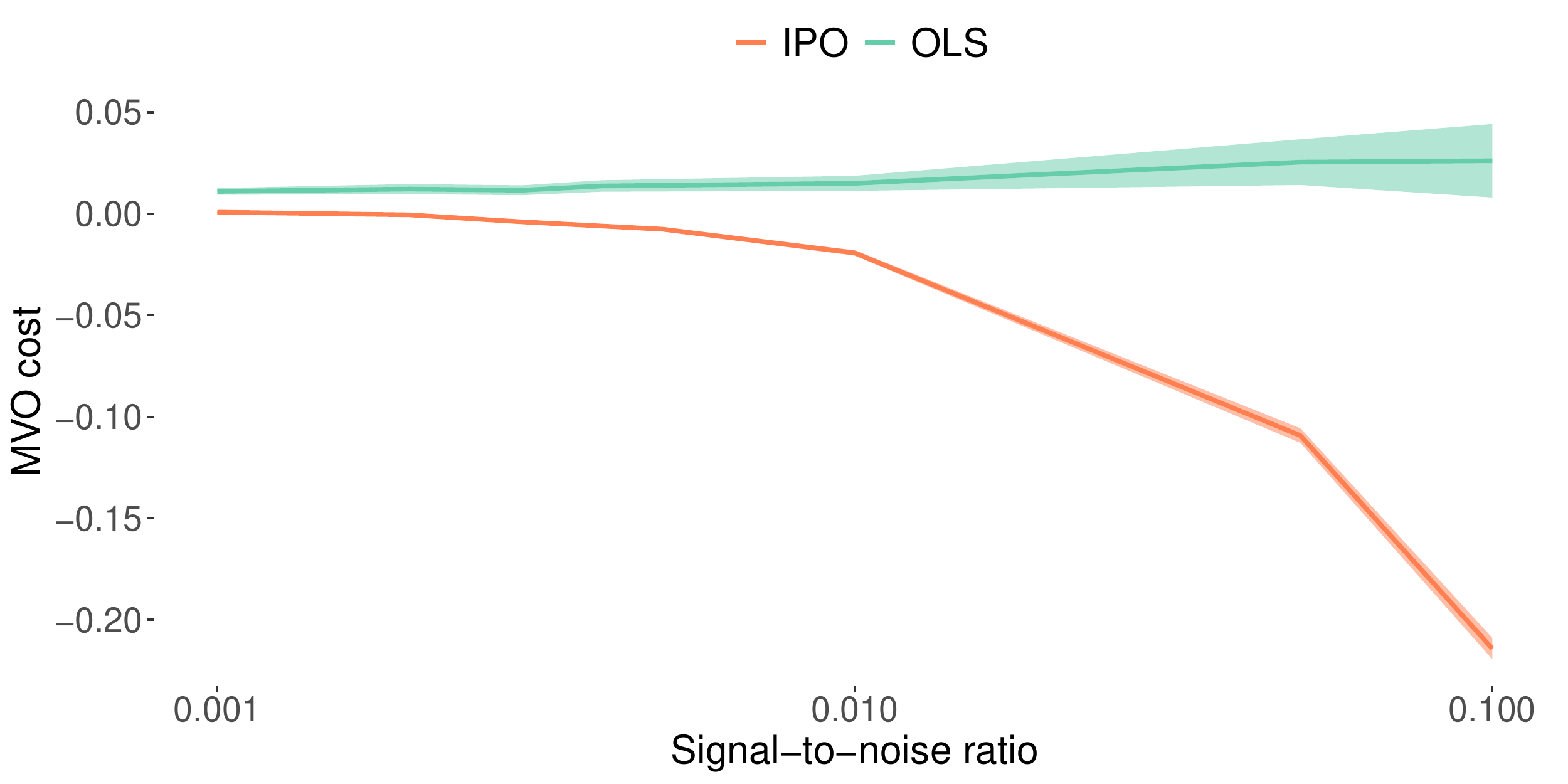}
    \caption{$\rho = 0, \text{res} = 5$.}
  \end{subfigure}
 \begin{subfigure}[b]{0.35\linewidth}
    \includegraphics[width=\linewidth , trim={00mm 0cm 0cm 0cm},clip]{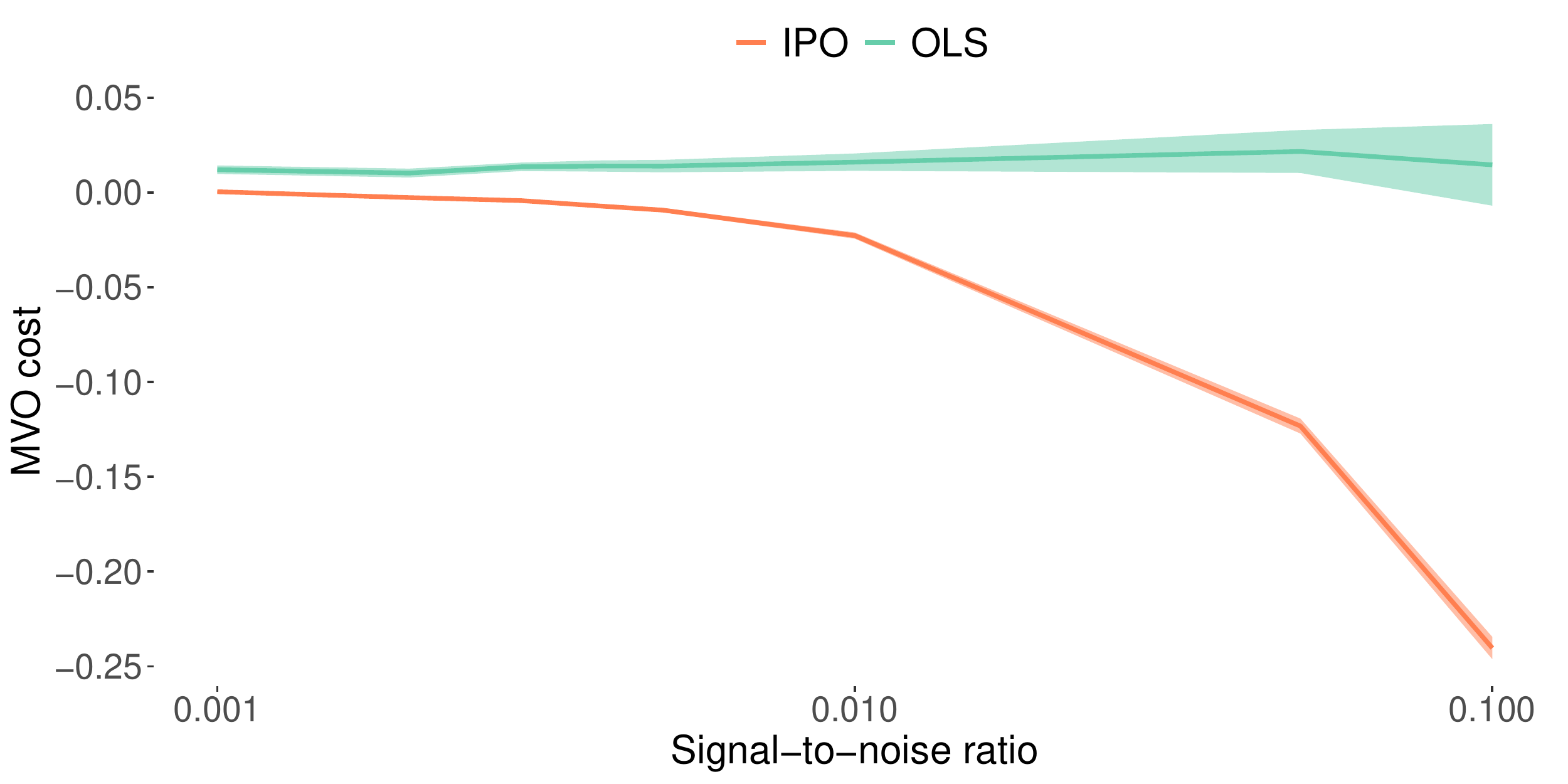}
    \caption{$\rho = 0.25, \text{res} = 5$.}
  \end{subfigure}
  \begin{subfigure}[b]{0.35\linewidth}
    \includegraphics[width=\linewidth , trim={00mm 0cm 0cm 0cm},clip]{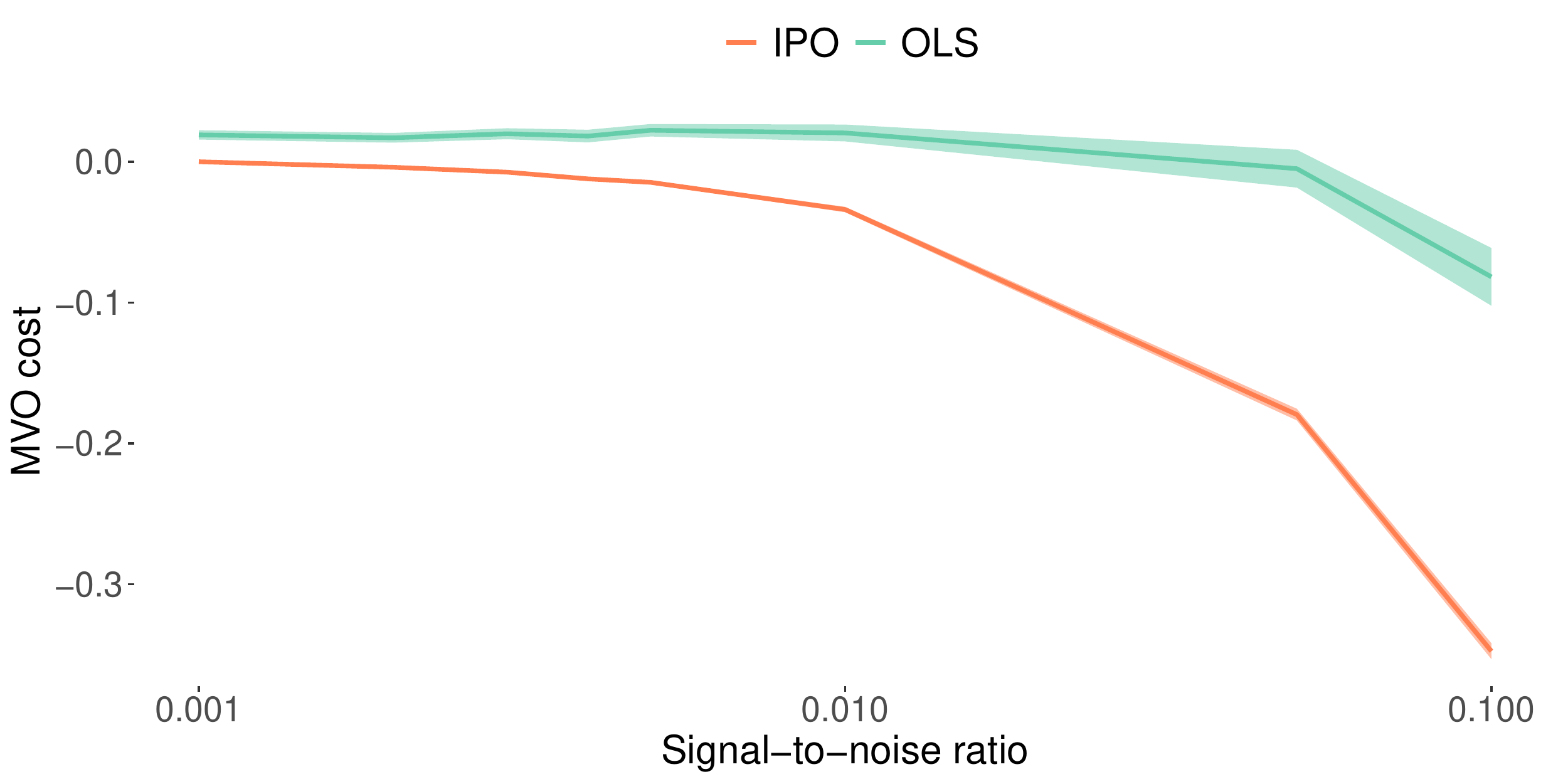}
    \caption{$\rho = 0.50, \text{res} = 5$.}
  \end{subfigure}
  \begin{subfigure}[b]{0.35\linewidth}
    \includegraphics[width=\linewidth , trim={00mm 0cm 0cm 0cm},clip]{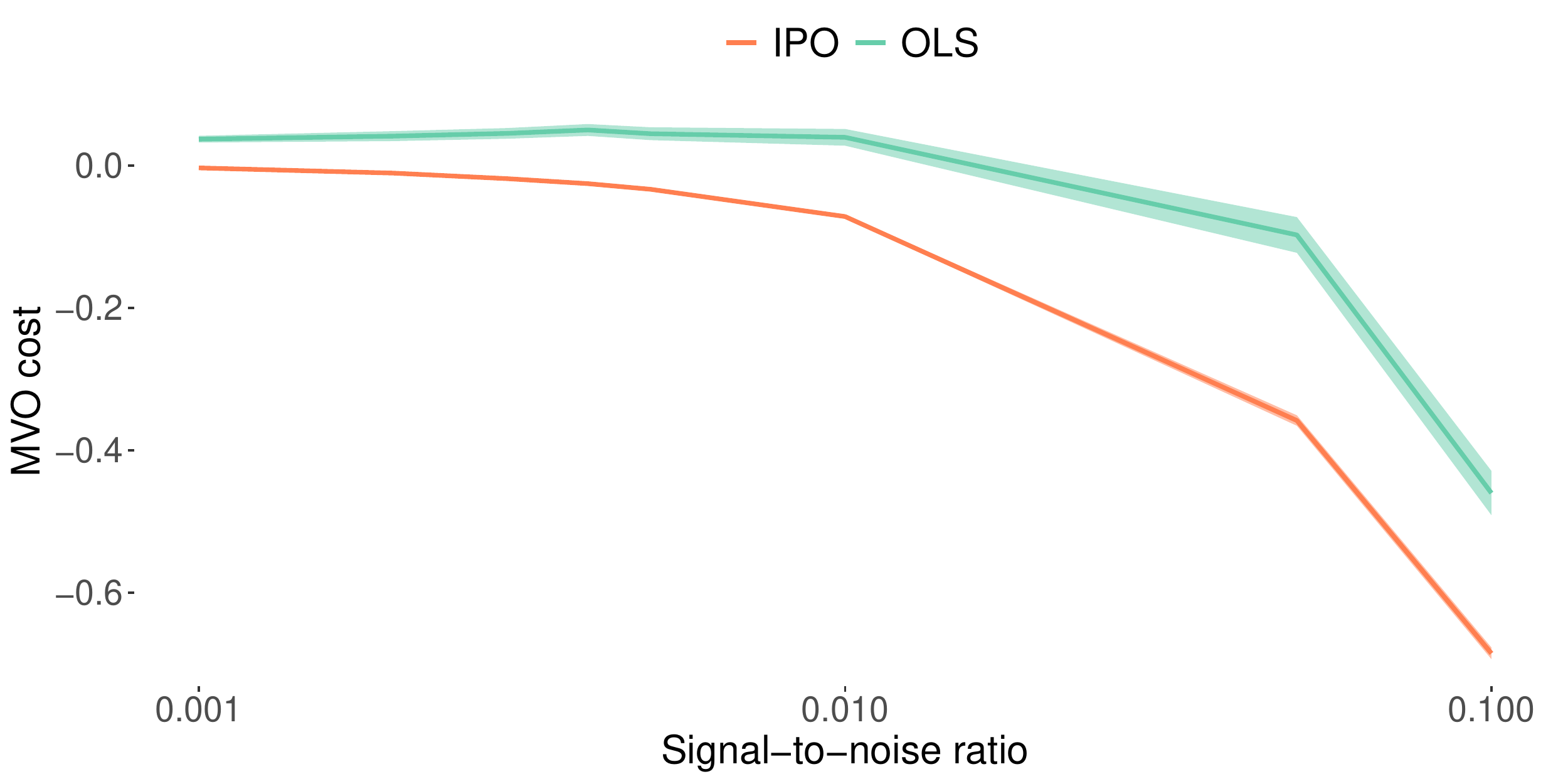}
    \caption{$\rho = 0.75, \text{res} = 5$.}
  \end{subfigure}
  \caption{Out-of-sample MVO cost for IPO and OLS as of function of return signal-to-noise ratios.}
  \label{fig:sims_1_mvo_50}
\end{figure}

\begin{figure}[H]
  \centering
  \begin{subfigure}[b]{0.35\linewidth}
    \includegraphics[width=\linewidth , trim={00mm 0cm 0cm 0cm},clip]{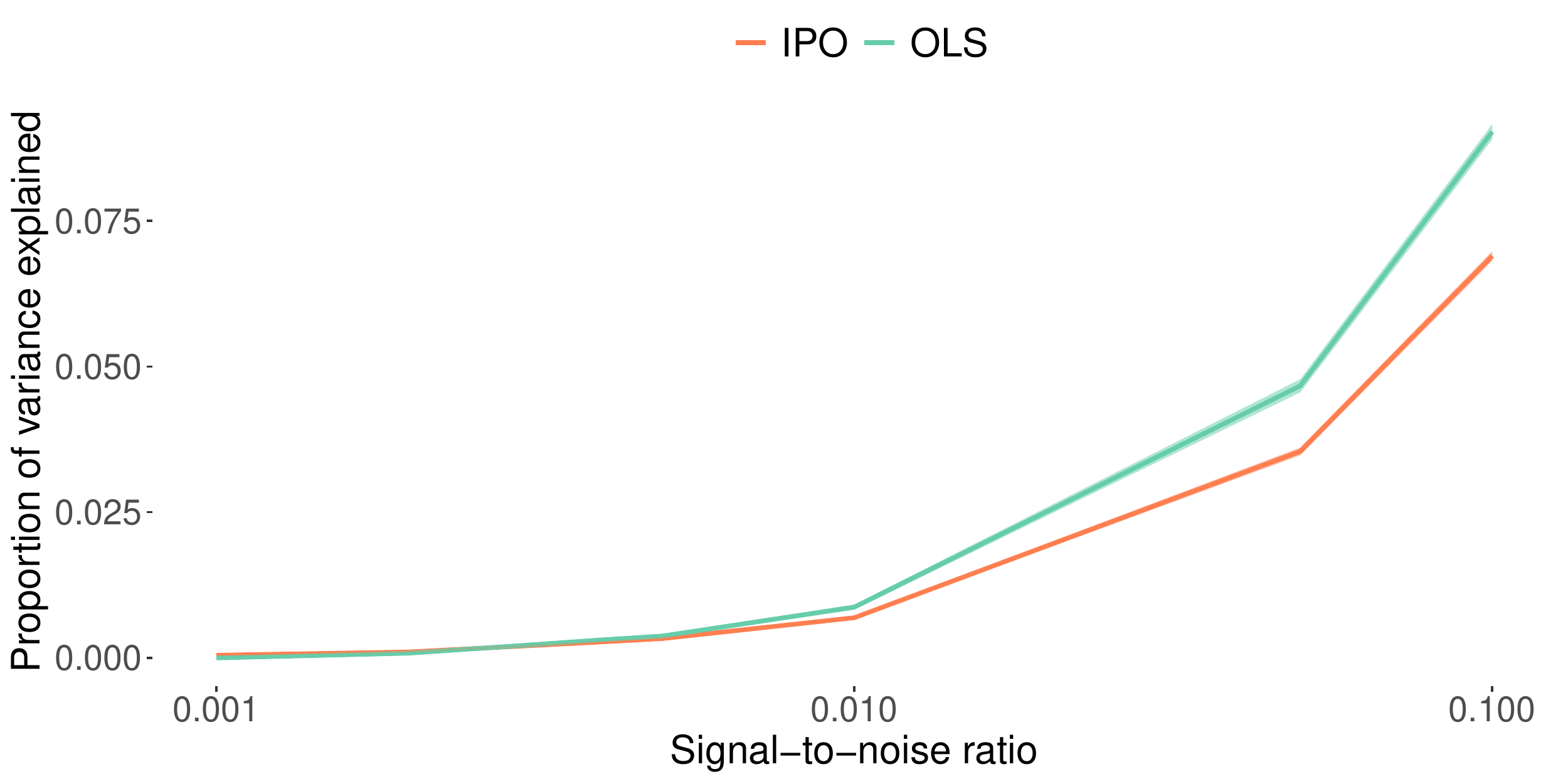}
    \caption{$\rho = 0, \text{res} = 5$.}
  \end{subfigure}
 \begin{subfigure}[b]{0.35\linewidth}
    \includegraphics[width=\linewidth , trim={00mm 0cm 0cm 0cm},clip]{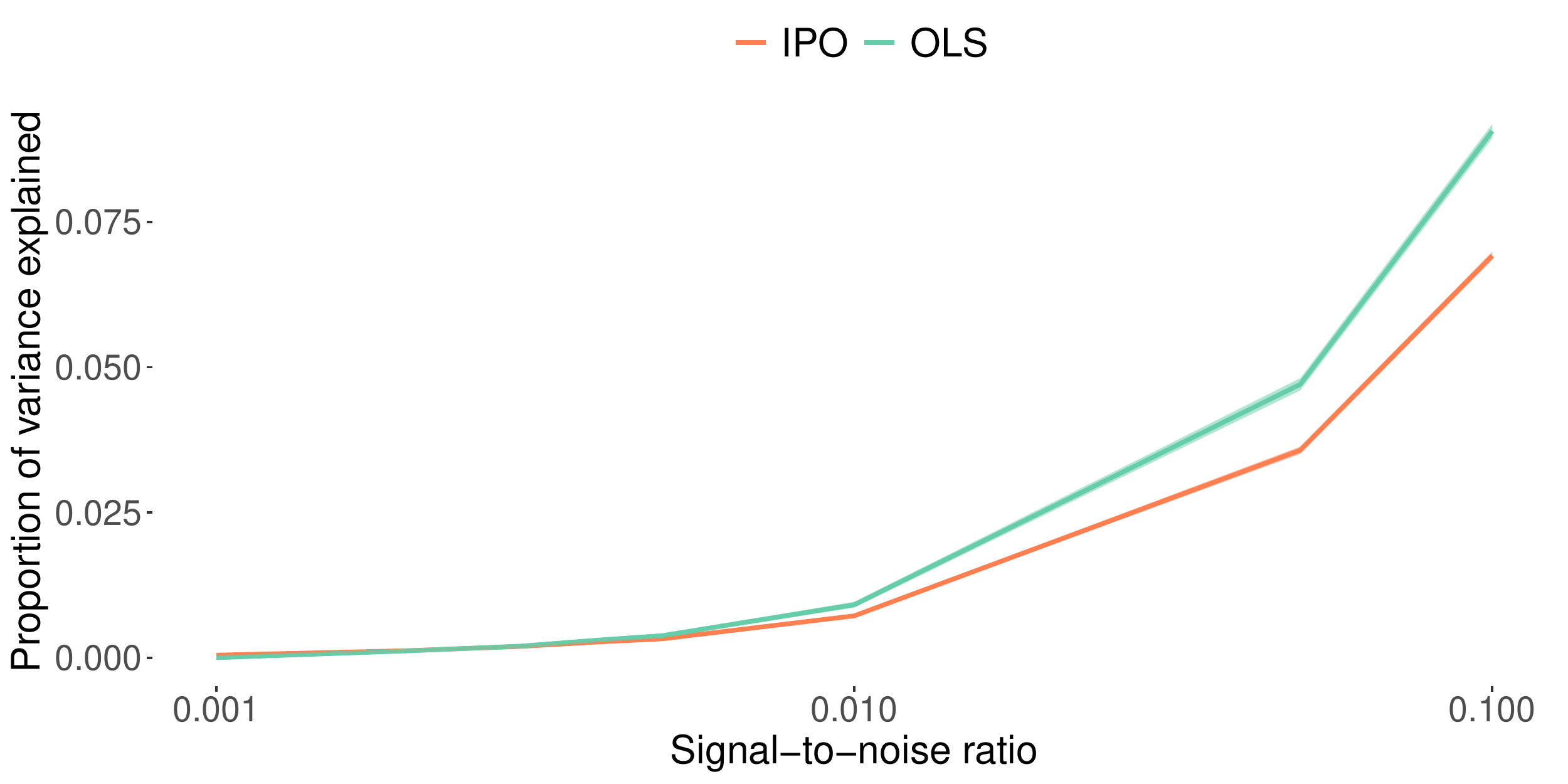}
    \caption{$\rho = 0.25, \text{res} = 5$.}
  \end{subfigure}
  \begin{subfigure}[b]{0.35\linewidth}
    \includegraphics[width=\linewidth , trim={00mm 0cm 0cm 0cm},clip]{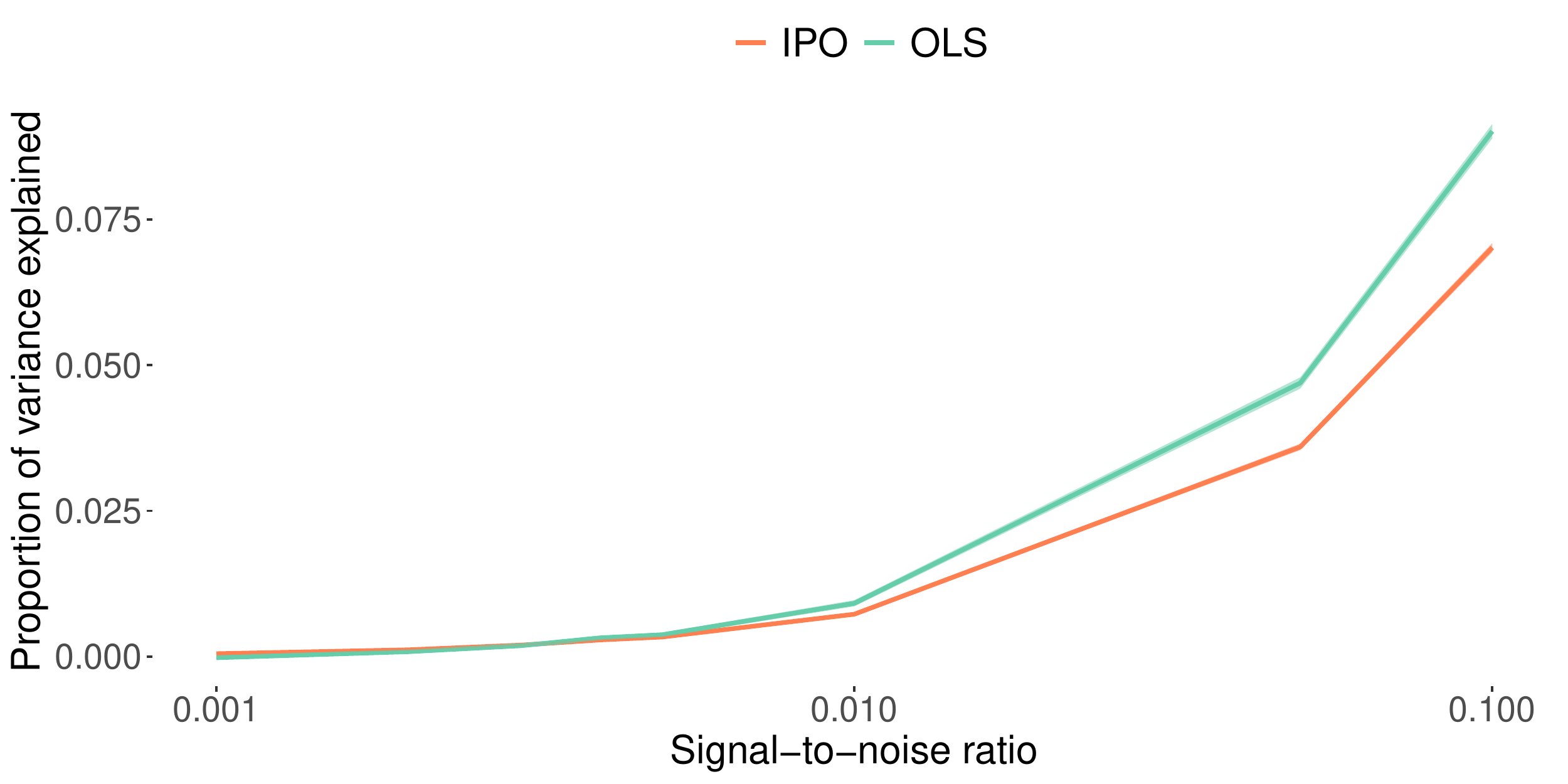}
    \caption{$\rho = 0.50, \text{res} = 5$.}
  \end{subfigure}
  \begin{subfigure}[b]{0.35\linewidth}
    \includegraphics[width=\linewidth , trim={00mm 0cm 0cm 0cm},clip]{z_eqcon_cov_10_rho_0_75_pv.pdf}
    \caption{$\rho = 0.75, \text{res} = 5$.}
  \end{subfigure}
  \caption{Out-of-sample PVE cost for IPO and OLS as of function of return signal-to-noise ratios.}
  \label{fig:sims_1_pve_50}
\end{figure}


\subsection{Simulation 2: computational efficiency}\label{sec:sims_2}

Here, we compare the computational efficiency of the analytical IPO solution with the current state-of-the-art method based on implicit differentiation and iterative gradient descent, from here on denoted as IPO-GRAD. Note that the IPO-GRAD implementation is optimized such that the matrix factorization (Equation $\eqref{eq:diff}$), required to compute gradient, is performed once at the initialization of the algorithm. The IPO-GRAD coefficients are initialized by drawing from the standard normal distribution and the algorithm terminates when $\lVert \partial L /\partial \btheta \rVert < 10^{-6}$.

We generate synthetic asset returns, following the procedure outlined in Section \ref{sec:sims_1}, with $\rho = 0$, ${\text{SNR} = 0.005}$ and varying the number of assets, $d_z \in \{25, 50, 100, 250\}$. Each asset is assumed to have $3$ unique features, and therefore $d_\theta = 3d_z$.  Tables \ref{table:sims_2_uncon} and \ref{table:sims_2_eqcon}, report the time, in seconds, taken by each method to compute the optimal regression coefficients for the unconstrained and equality constrained cases, respectively. For the IPO-GRAD method we also report the number of iterations of gradient descent. For each portfolio size, we report the average and $95\%$-ile range over $100$ instances of simulated data. Observe that for problems with $100$ or fewer assets, the computation time required to compute the optimal IPO coefficients analytically is comparable to the computation time required to compute the optimal OLS coefficients. In contrast, the IPO-GRAD method typically requires over $100$ iterations of gradient descent and is anywhere from $10$x - $1000$x slower than the corresponding IPO method. We note that for problems of larger scale, the analytical IPO solution remains tractable and is on average $6$x faster than the IPO-GRAD method.


\begin{table}[h]
\centering
\begin{tabular}{l l l l l}
\hline
No. Assets  & OLS & IPO & IPO-GRAD & Iterations\\
\hline
25 & 0.029 & 0.071 & 4.333 & 178\\
 & (0.028,0.032) & (0.07,0.08) & (3.966,5.076) & (164,210)\\
\hline
50 & 0.247 & 0.429 & 6.557 & 186\\
 & (0.209,0.253) & (0.342,0.447) & (6.032,7.278) & (173,207)\\
\hline
100 & 0.545 & 1.7 & 17.642 & 200\\
 & (0.491,0.638) & (1.495,1.837) & (16.03,21.301) & (183,247)\\
\hline
250 & 2.89 & 17.961 & 123.975 & 208.5\\
 & (2.75,3.335) & (17.546,18.092) & (114.008,165.094) & (193,279)\\
\hline
\end{tabular}
\caption{Time in seconds for computing the optimal OLS, IPO and IPO-GRAD coefficients for an unconstrained MVO problem. Results are averaged over $100$ instances of simulated data.}
\label{table:sims_2_uncon}
\end{table}

\begin{table}[h]
\centering
\begin{tabular}{l l l l l}
\hline
No. Assets  & OLS & IPO & IPO-GRAD & Iterations\\
\hline
25 & 0.029 & 0.088 & 4.664 & 176\\
 & (0.028,0.032) & (0.085,0.094) & (4.333,5.587) & (163,211)\\
\hline
50 & 0.247 & 0.473 & 7.389 & 188\\
 & (0.171,0.259) & (0.383,0.543) & (6.696,8.227) & (172,208)\\
\hline
100 & 0.549 & 2.025 & 19.711 & 200\\
 & (0.492,0.669) & (1.855,2.161) & (18.034,23.449) & (183,241)\\
\hline
250 & 2.815 & 22.378 & 129.348 & 208\\
 & (2.71,3.315) & (21.8,22.511) & (119.684,174.607) & (193,280)\\
\hline
\end{tabular}
\caption{Time in seconds for computing the optimal OLS, IPO and IPO-GRAD coefficients for an equality constrained MVO problem. Results are averaged over $100$ instances of simulated data.}
\label{table:sims_2_eqcon}
\end{table}

\subsection{Simulation 3: inequality constrained IPO}\label{sec:sims_3}
We now consider the more general case whereby the feasible region of the MVO program is defined by inequality constraints. In general, an analytical solution to the IPO Program $\eqref{eq:mvo_full}$ in the presence of lower-level inequality constraints is not possible. Furthermore, Program $\eqref{eq:mvo_full}$ is not convex in $\btheta$. As a result, the current state-of-the-art approach (IPO-GRAD), described in Section \ref{sec:ineq}, is recommended in order to obtain locally optimal solutions.

The IPO-GRAD solution, however, is challenging for several reasons. First, in contrast to the traditional OLS approach, estimating the IPO coefficients by iterative methods can be computationally expensive. Specifically, at each iteration of gradient descent the IPO-GRAD method must solve at most $m$ constrained quadratic programs, where $m$ is the total number of training observations. Computation time therefore scales linearly with the number of training examples, $m$, and the number of iterations of gradient descent, $n$.  Moreover, when solved by interior-point methods, convex quadratic programs have worst-case time complexity on the order of $\mathcal{O}(d_z^3)$ \citep{Goldfarb1991}, and  therefore the worst-case complexity for IPO-GRAD is on the order of $\mathcal{O}(mnd_z^3)$. Fortunately, in most practical settings,  quadratic programs are solved in substantially fewer iterations than their worst-case bound \citep{Boyd2004}. Nonetheless, most real-world portfolio optimization problems involve portfolio sizes on the order of $10$ or $100$  and are trained using thousands of training observations. Estimating prediction model parameters by IPO-GRAD can therefore be computationally expensive. Secondly, because the inequality constrained IPO problem is not convex, we have no guarantee that any particular local solution is globally optimal. Moreover, if is difficult to estimate $\Var(\btheta)$ and compute confidence intervals by standard parametric methods, and instead expensive nonparametric bootstrap methods are required.

As a heuristic, we are interested in determining the out-of-sample efficacy of the analytical IPO solutions, presented in Sections \ref{sec:method_uncon} - \ref{sec:method_eqcon}, applied to the inequality constrained problem. Specifically, we compute the IPO optimal coefficients analytically by dropping the inequality constraints in the lower-level MVO problem. The realized policy, $\bz^*(\toi{\hat{\by}})$, however, enforces the inequality constraints in the out-of-sample evaluation period.

We generate synthetic asset returns, following the procedure outlined in Section \ref{sec:sims_1}, with $\rho = 0$, ${\text{SNR} = 0.005}$ and $d_z  = 10$. Each asset is assumed to have $3$ unique features ( $d_\theta = 3d_z$). The inequality constraints are standard box-constraints of the form:

$$
-\gamma \leq \bz_j \leq \gamma, \quad \forall j \in \{1, ..., d_z\},
$$
and we consider several values of $\gamma \in \{0.05, 0.10, 0.25, 0.50, 0.75, 1, 2, 5, 10\}$. We also vary the risk aversion parameter $\delta \in \{1, 5, 10, 25 \}$. Finally, asset mean returns are generated according to linear and nonlinear polynomial models of the form:
$$
\toi{\by} = \sum_{q = 1}^p \bP \diag(\toi{\bx}^q) \btheta_q + \tau \toi{\bepsilon},
$$
with $p \in \{1, 2, 4\}$.

Figures \ref{fig:sims_3_poly_1} - \ref{fig:sims_3_poly_4} compare the out-of-sample MVO cost for the IPO and IPO-GRAD methods as of function of the box constraint value, $\gamma$, and risk-aversion parameter, $\delta$. For each value of $\gamma, \delta$ and $p$, we report the mean and $95\%$-ile range over $30$ instances of simulated data.  First, we would expect the out-of-sample performance of the IPO and IPO-GRAD methods to converge as  $\gamma$ increases. Furthermore as $\delta$, increases, the point (along $\gamma$) at which the two solutions converge will naturally decrease. This effect is purely a consequence of the inequality constraints being non-active when either $\gamma$ and/or $\delta$ are sufficiently large.

In Figure \ref{fig:sims_3_poly_1} asset returns are generated according to a linear ground truth model $(p = 1)$. In all cases we observe that the IPO-GRAD does provide improved out-of-sample MVO costs when $\gamma$ is sufficiently small $(\gamma <0.5)$. However, for moderate and large values of $\gamma$, the IPO-GRAD method provides no improvement in out-of-sample MVO costs in comparison to the IPO method. Furthermore, in Figures \ref{fig:sims_3_poly_2} and \ref{fig:sims_3_poly_4}, asset returns are generated according to a quadratic $(p=2)$ and quartic $(p=4)$ ground truth model, respectively. We observe that over practically every value of $\gamma$ and $\delta$, the IPO method provides an equivalent, if not improved, out-of-sample MVO costs in comparison to the IPO-GRAD method. We note that, while not explicitly shown here,  the IPO-GRAD method produces lower in-sample (training) MVO costs over every experiment instance, and is potentially overfitting the training data. Moreover, we note that in all experiments, the variance of the out-of-sample MVO costs generated by the IPO-GRAD method is substantially larger than that of the IPO method. The lack of convexity and uniqueness of solution in the IPO-GRAD formulation, along with the likelihood of model overfit, provides a potential explanation for this effect.

Finally, Table \ref{table:sims_3} reports the average time (in seconds)  and $95\%$-ile range, taken by each method to compute the optimal regression coefficients. The results are averaged over all $360$ instances of simulated data. For the IPO-GRAD method we also report the number of iterations of gradient descent. We observe that the IPO-GRAD method typically requires around $60$ iterations of gradient descent and is on average $100$x - $1000$x slower than the corresponding IPO method. Note that the computation times reported here are for a relatively small portfolio and, given the computational complexity described above, we would expect the IPO method to provide an even larger computational advantage on medium and large sized portfolios. We therefore conclude that in the presence of inequality constraints, the IPO heuristic is a compelling alternative to the more computationally expensive IPO-GRAD solution.


\begin{figure}[H]
  \centering
  \begin{subfigure}[b]{0.35\linewidth}
    \includegraphics[width=\linewidth , trim={00mm 0cm 0cm 0cm},clip]{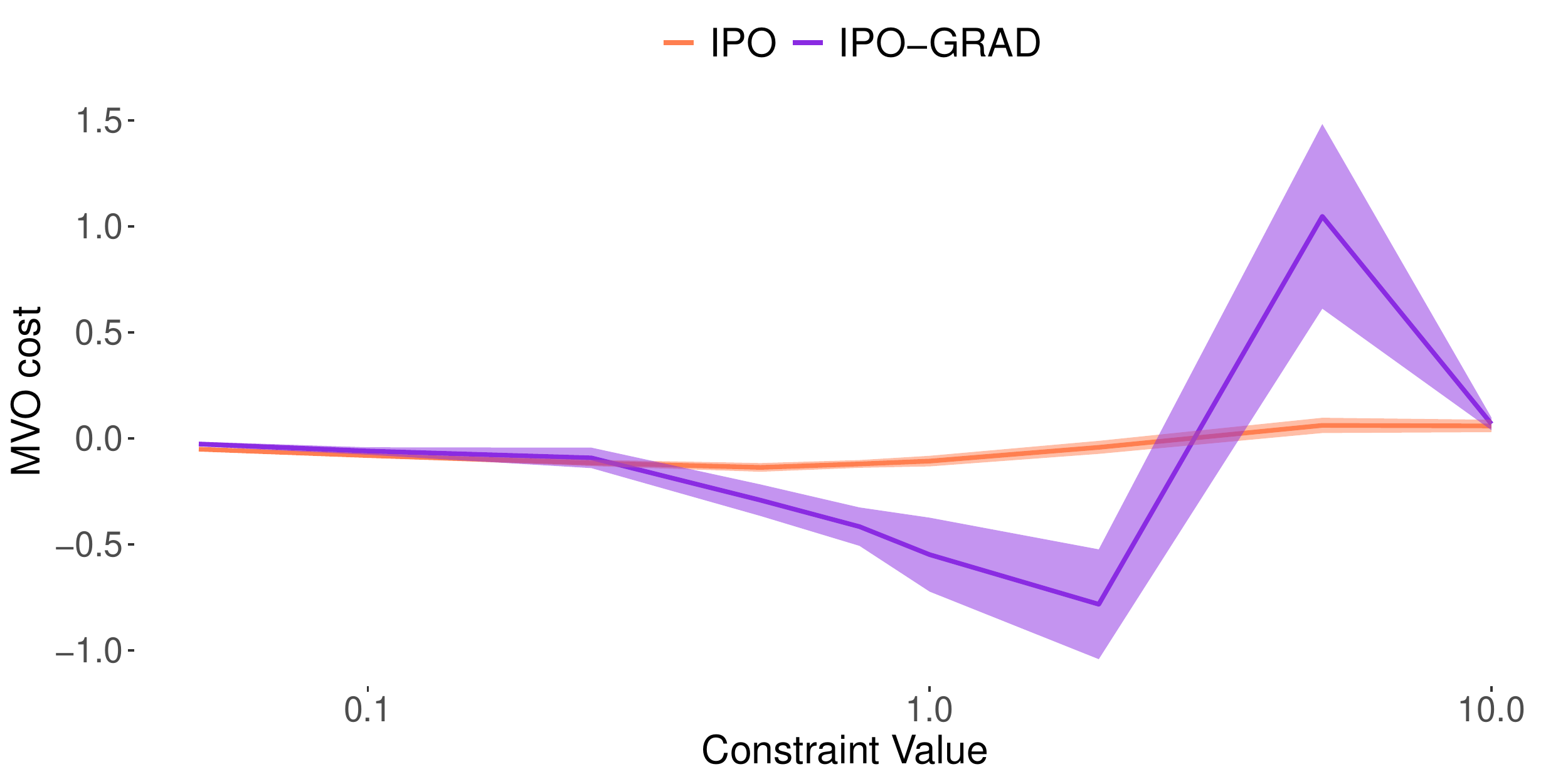}
    \caption{$\delta = 1$, $p = 1$.}
  \end{subfigure}
 \begin{subfigure}[b]{0.35\linewidth}
    \includegraphics[width=\linewidth , trim={00mm 0cm 0cm 0cm},clip]{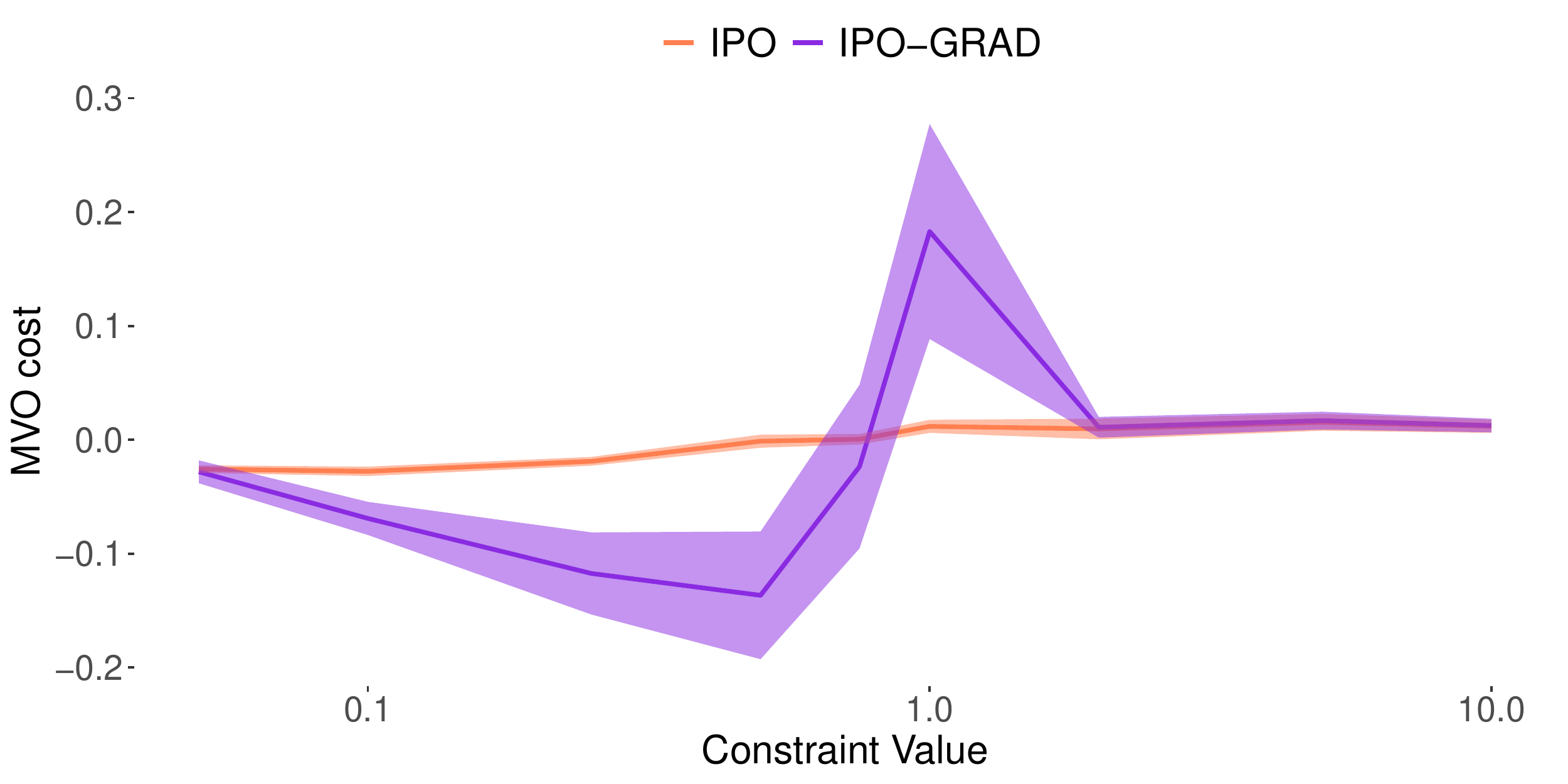}
    \caption{$\delta = 5$, $p = 1$.}
  \end{subfigure}
  \begin{subfigure}[b]{0.35\linewidth}
    \includegraphics[width=\linewidth , trim={00mm 0cm 0cm 0cm},clip]{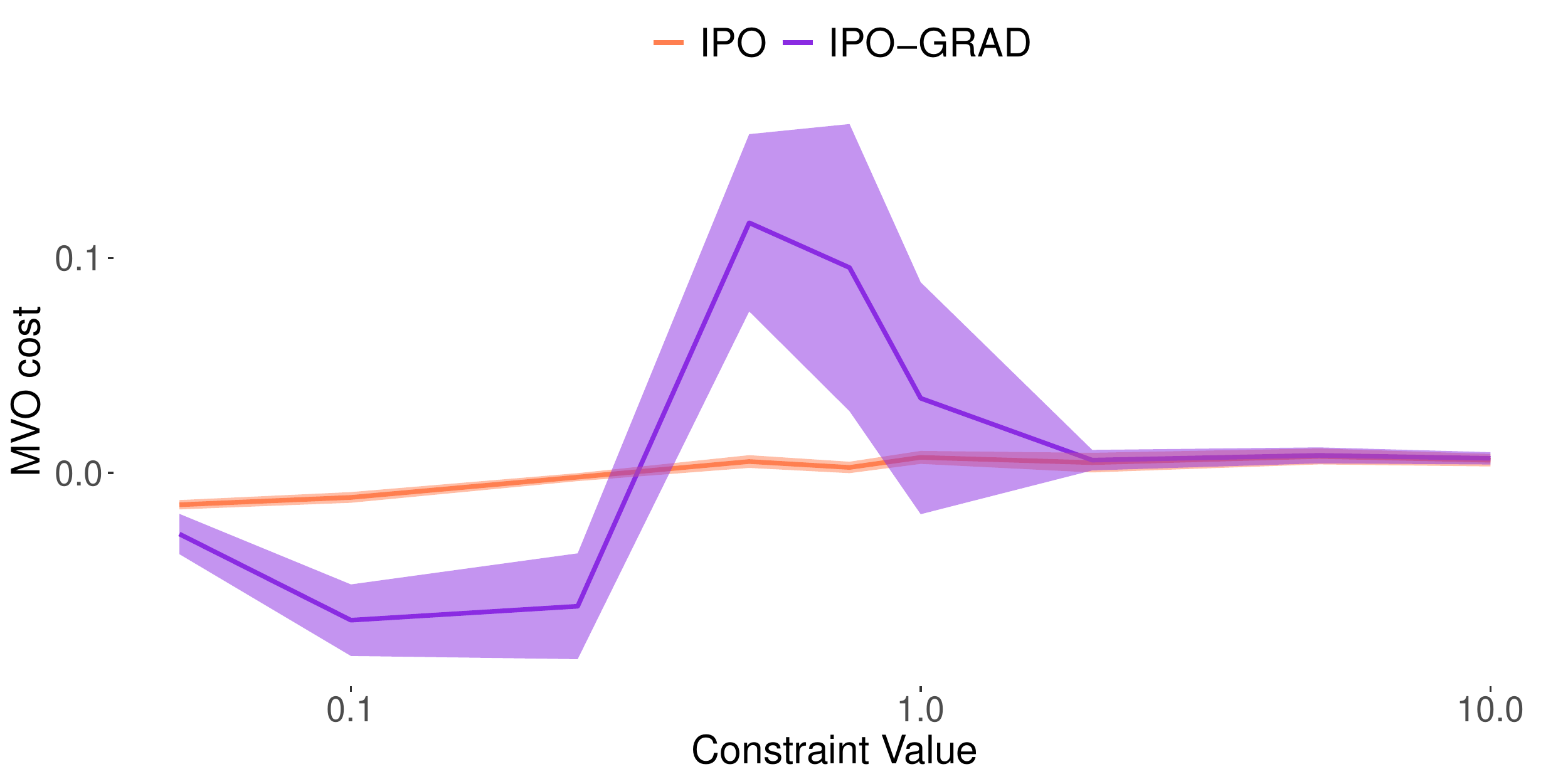}
    \caption{$\delta = 10$, $p = 1$.}
  \end{subfigure}
  \begin{subfigure}[b]{0.35\linewidth}
    \includegraphics[width=\linewidth , trim={00mm 0cm 0cm 0cm},clip]{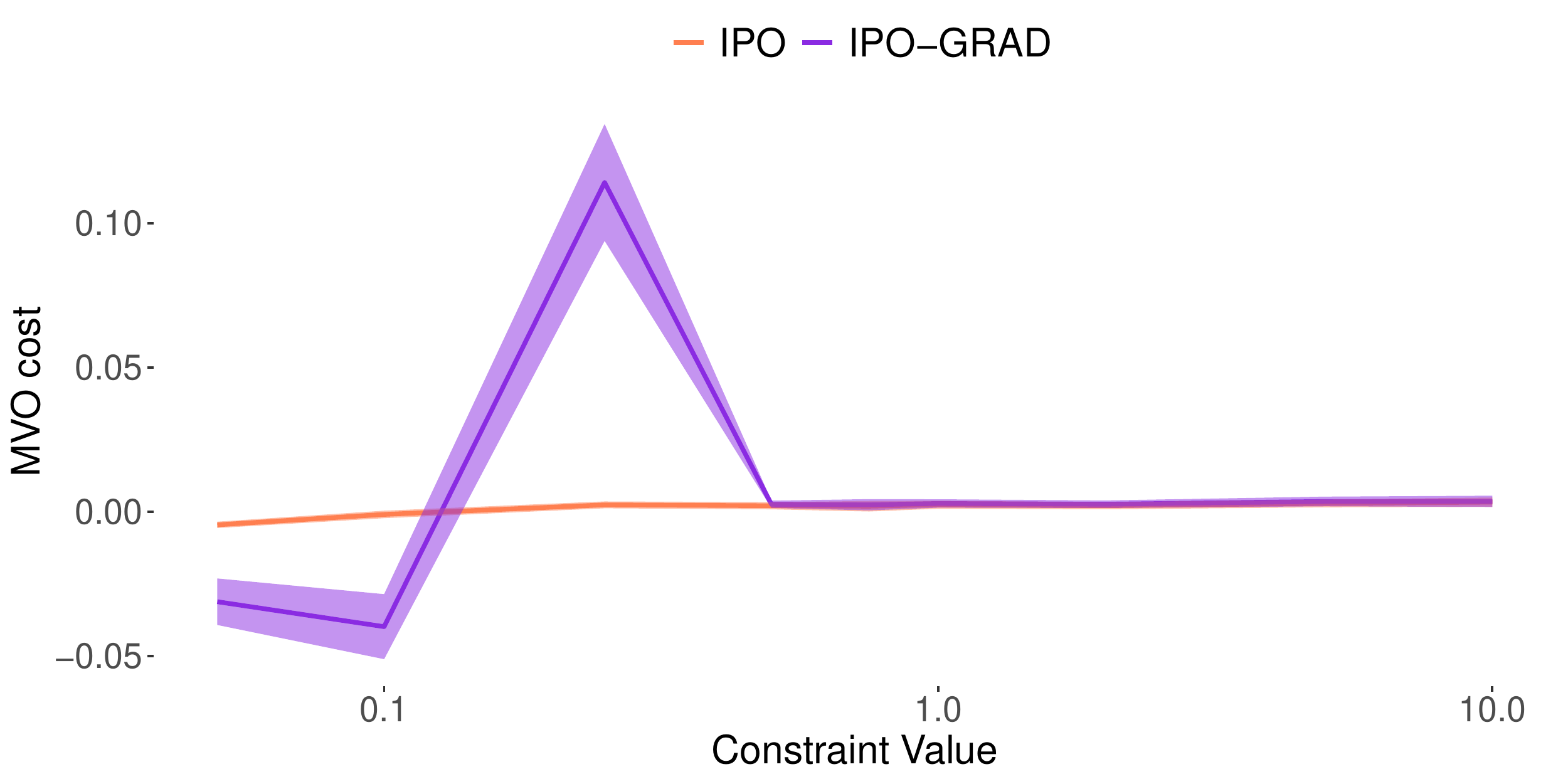}
    \caption{$\delta = 25$, $p = 1$.}
  \end{subfigure}
  \caption{Out-of-sample MVO costs as of function of box constraint value with $p = 1$.}
  \label{fig:sims_3_poly_1}
\end{figure}

\begin{figure}[H]
  \centering
  \begin{subfigure}[b]{0.35\linewidth}
    \includegraphics[width=\linewidth , trim={00mm 0cm 0cm 0cm},clip]{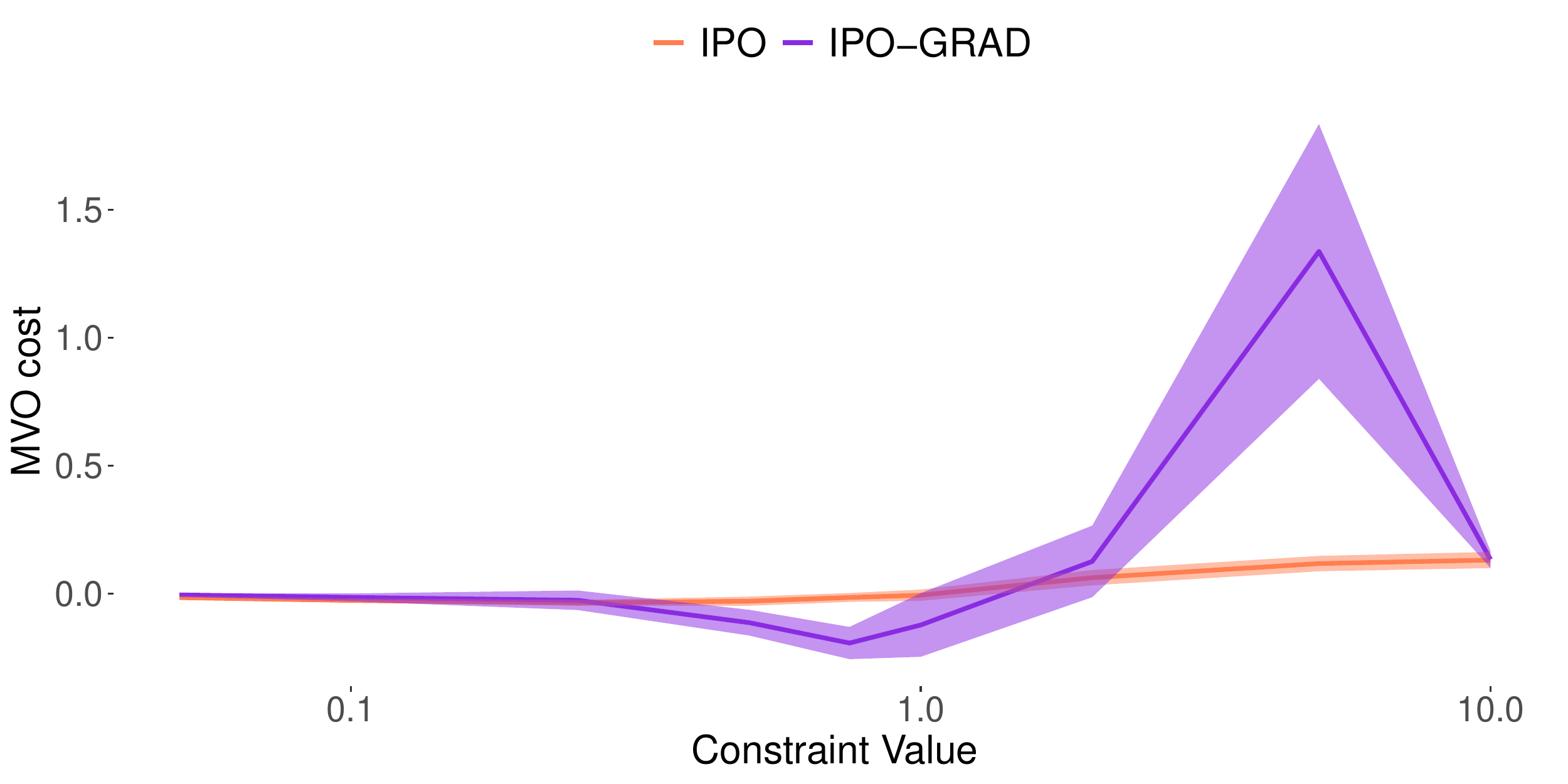}
    \caption{$\delta = 1$, $p = 2$.}
  \end{subfigure}
 \begin{subfigure}[b]{0.35\linewidth}
    \includegraphics[width=\linewidth , trim={00mm 0cm 0cm 0cm},clip]{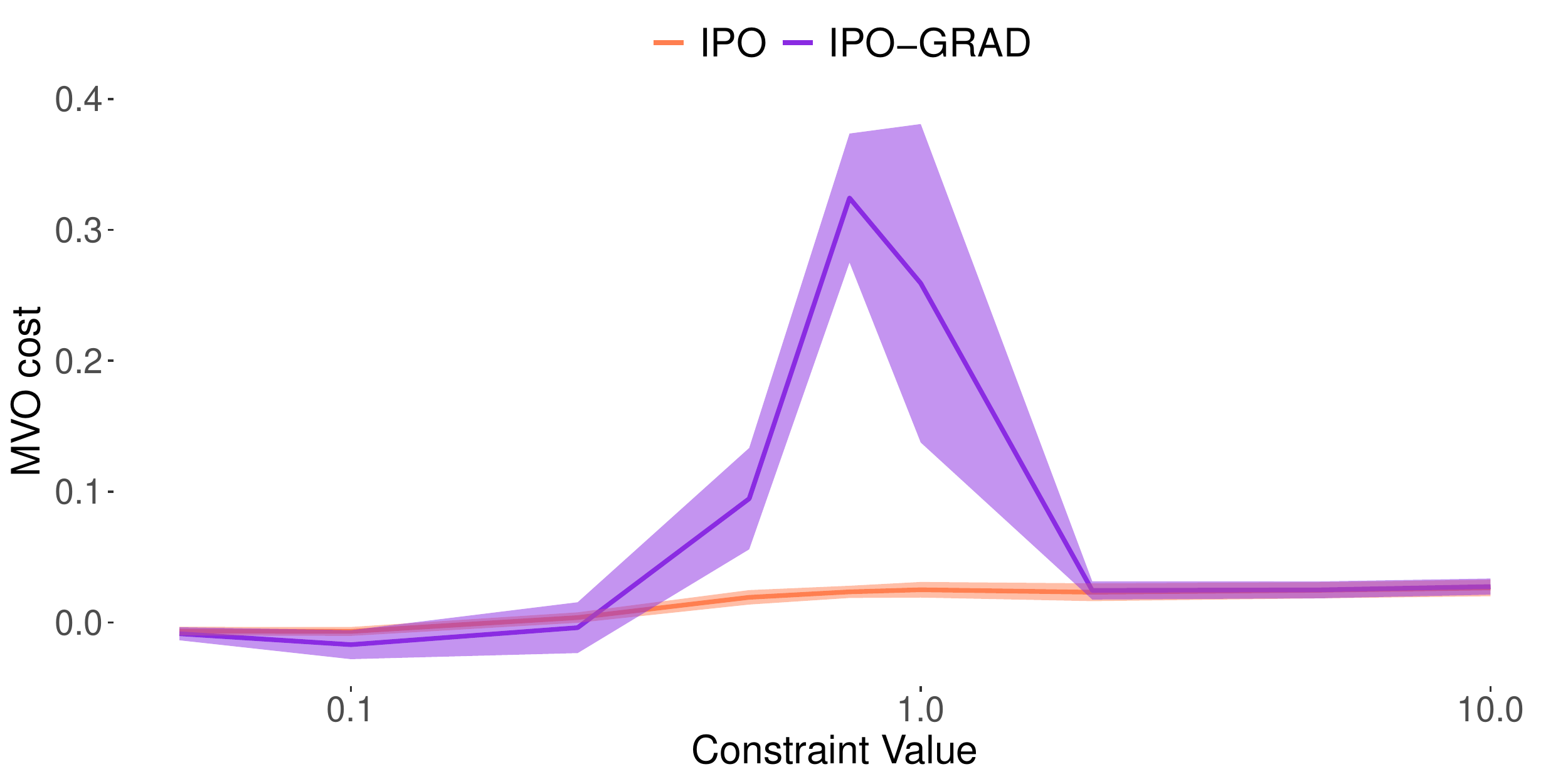}
    \caption{$\delta = 5$, $p = 2$.}
  \end{subfigure}
  \begin{subfigure}[b]{0.35\linewidth}
    \includegraphics[width=\linewidth , trim={00mm 0cm 0cm 0cm},clip]{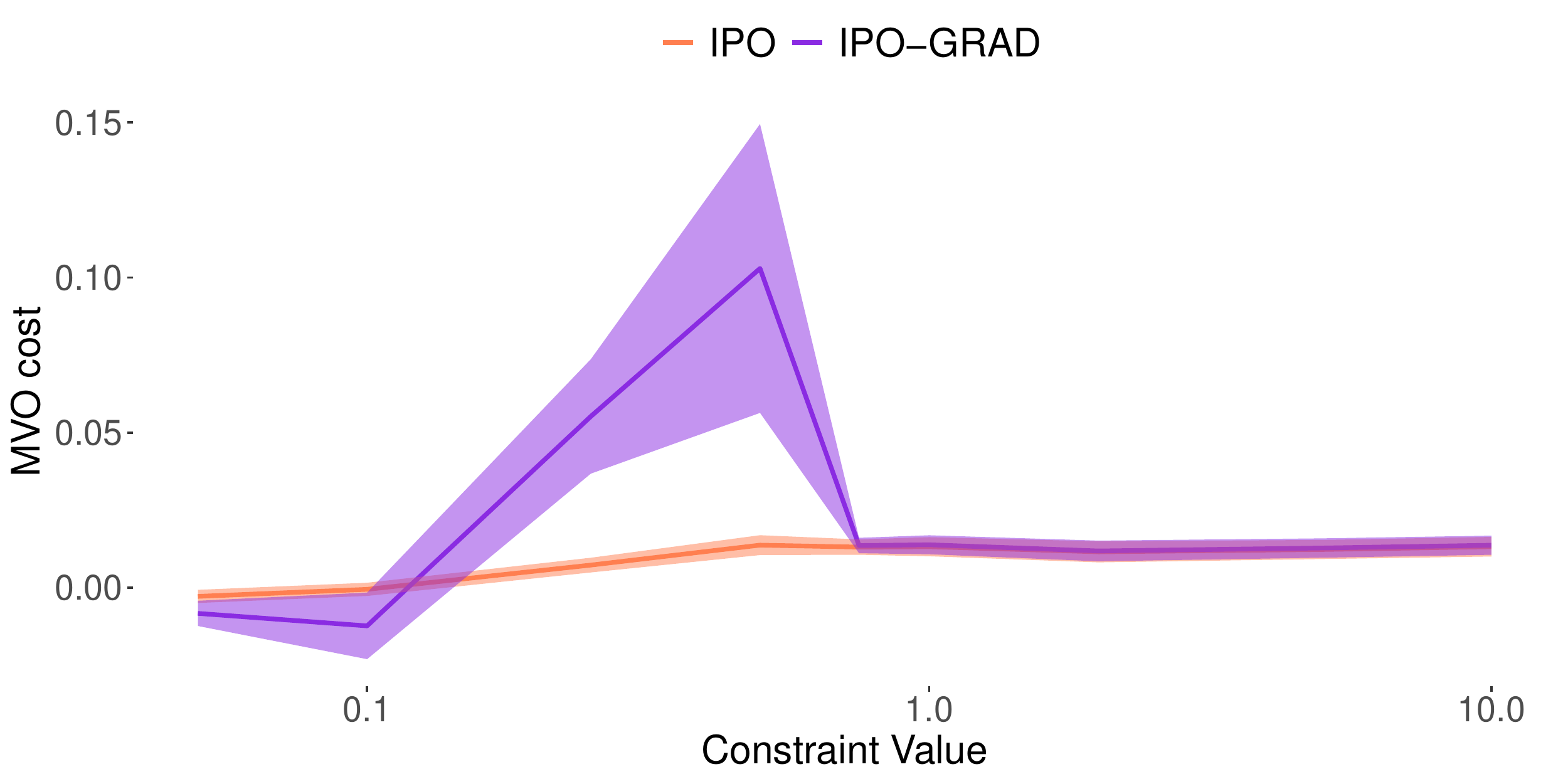}
    \caption{$\delta = 10$, $p = 2$.}
  \end{subfigure}
  \begin{subfigure}[b]{0.35\linewidth}
    \includegraphics[width=\linewidth , trim={00mm 0cm 0cm 0cm},clip]{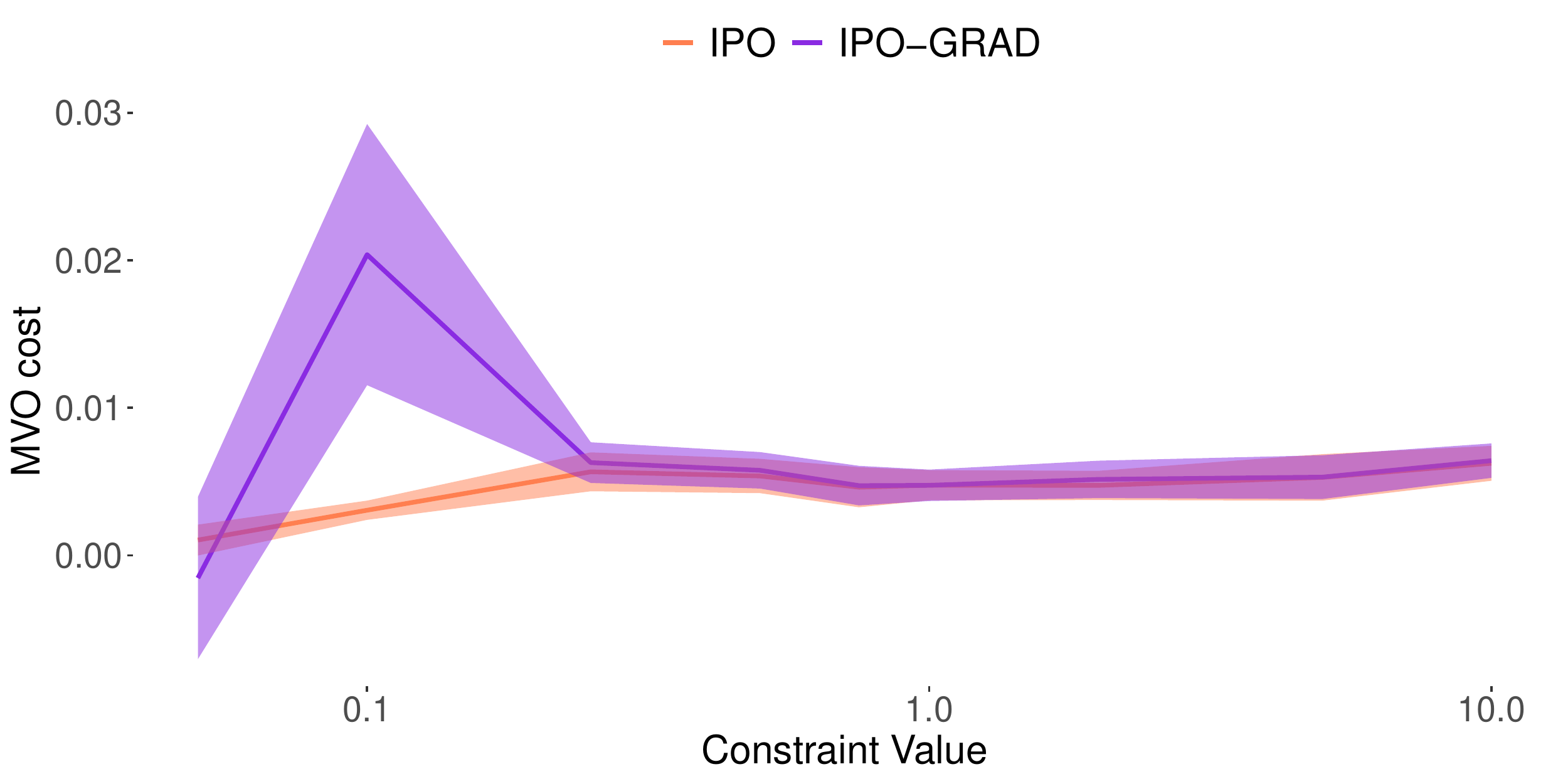}
    \caption{$\delta = 25$, $p = 2$.}
  \end{subfigure}
  \caption{Out-of-sample MVO costs as of function of box constraint value with $p = 2$.}
  \label{fig:sims_3_poly_2}
\end{figure}

\begin{figure}[H]
  \centering
  \begin{subfigure}[b]{0.35\linewidth}
    \includegraphics[width=\linewidth , trim={00mm 0cm 0cm 0cm},clip]{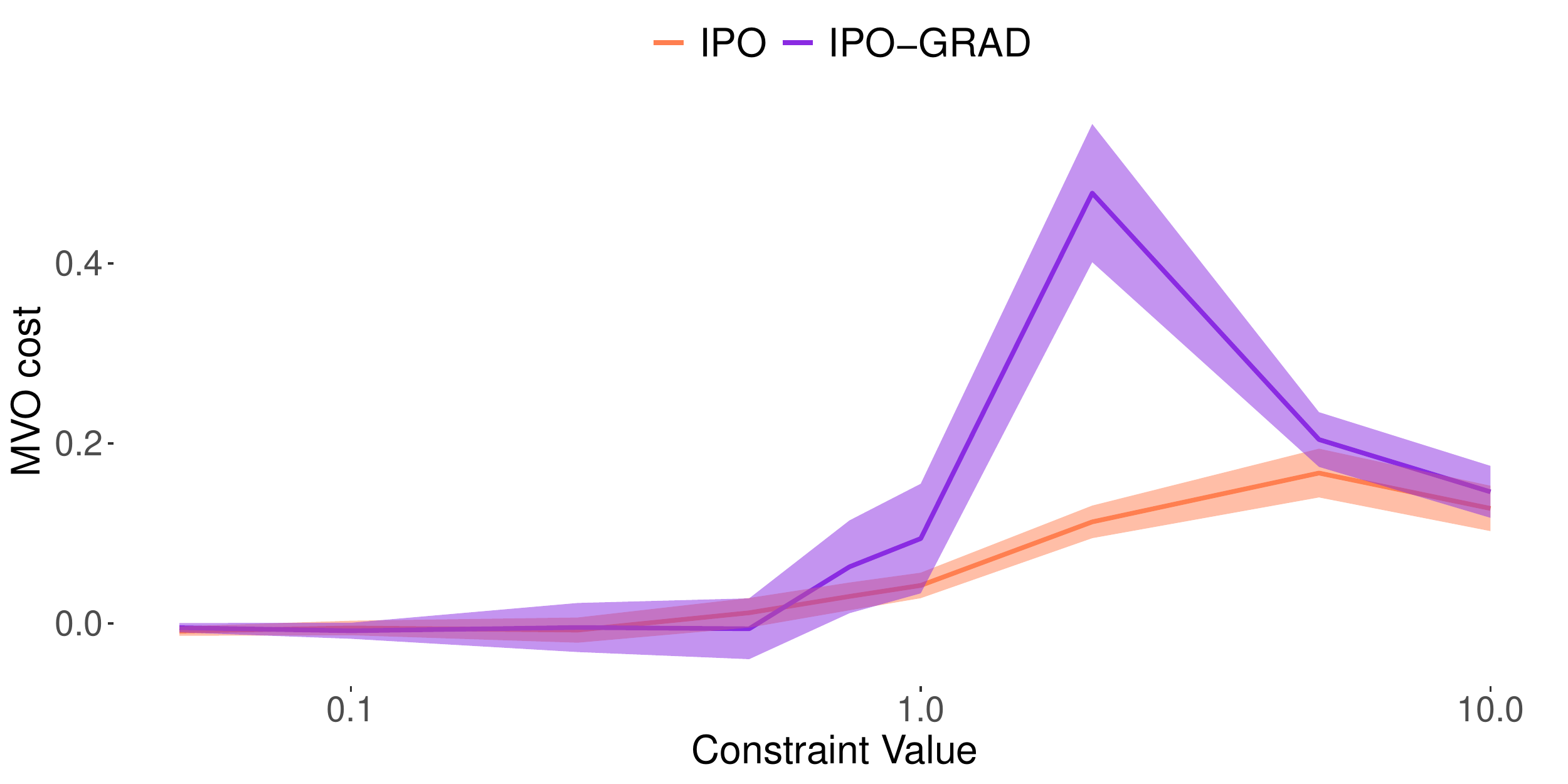}
    \caption{$\delta = 1$, $p = 4$.}
  \end{subfigure}
 \begin{subfigure}[b]{0.35\linewidth}
    \includegraphics[width=\linewidth , trim={00mm 0cm 0cm 0cm},clip]{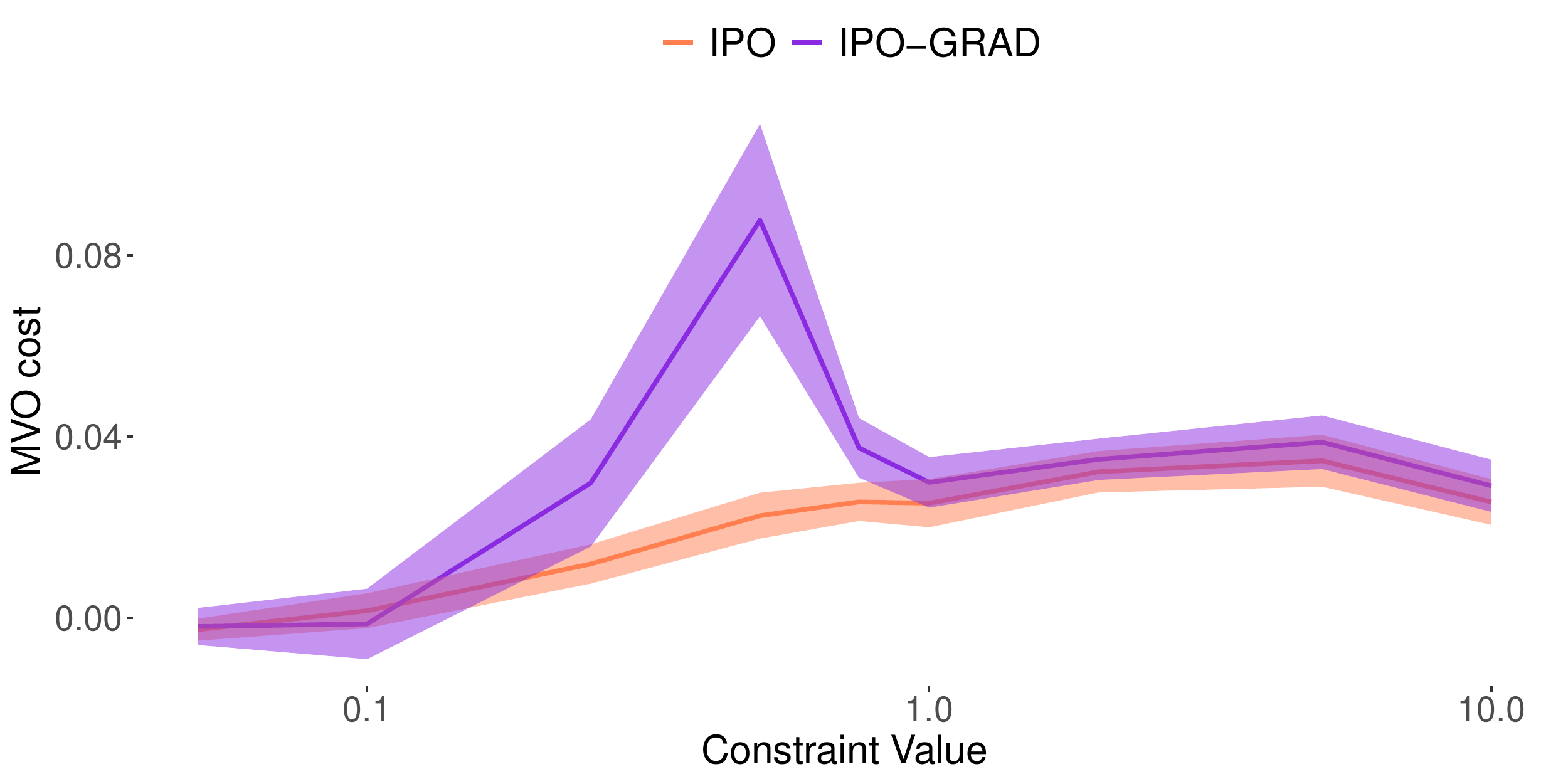}
    \caption{$\delta = 5$, $p = 4$.}
  \end{subfigure}
  \begin{subfigure}[b]{0.35\linewidth}
    \includegraphics[width=\linewidth , trim={00mm 0cm 0cm 0cm},clip]{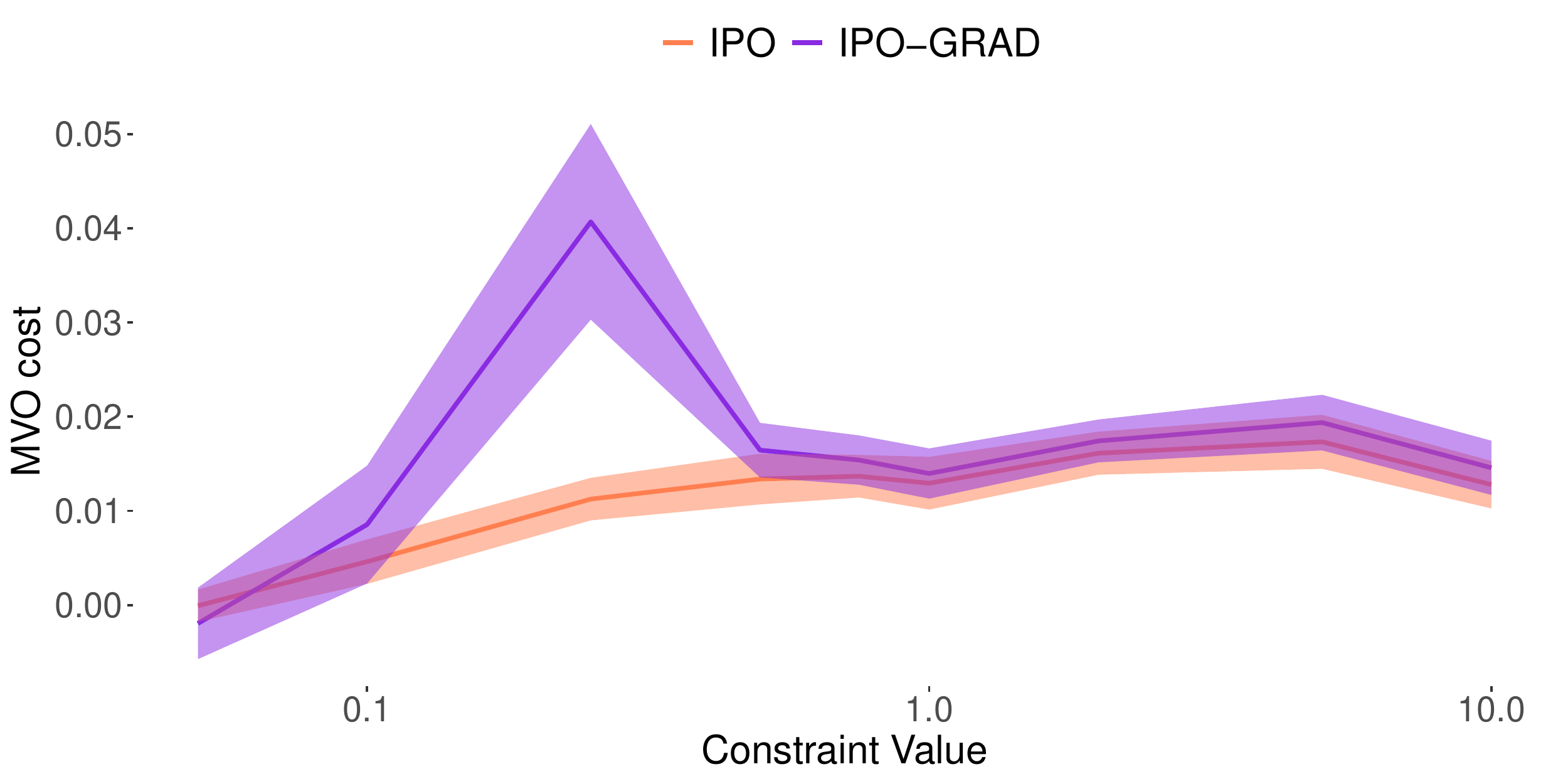}
    \caption{$\delta = 10$, $p = 4$.}
  \end{subfigure}
  \begin{subfigure}[b]{0.35\linewidth}
    \includegraphics[width=\linewidth , trim={00mm 0cm 0cm 0cm},clip]{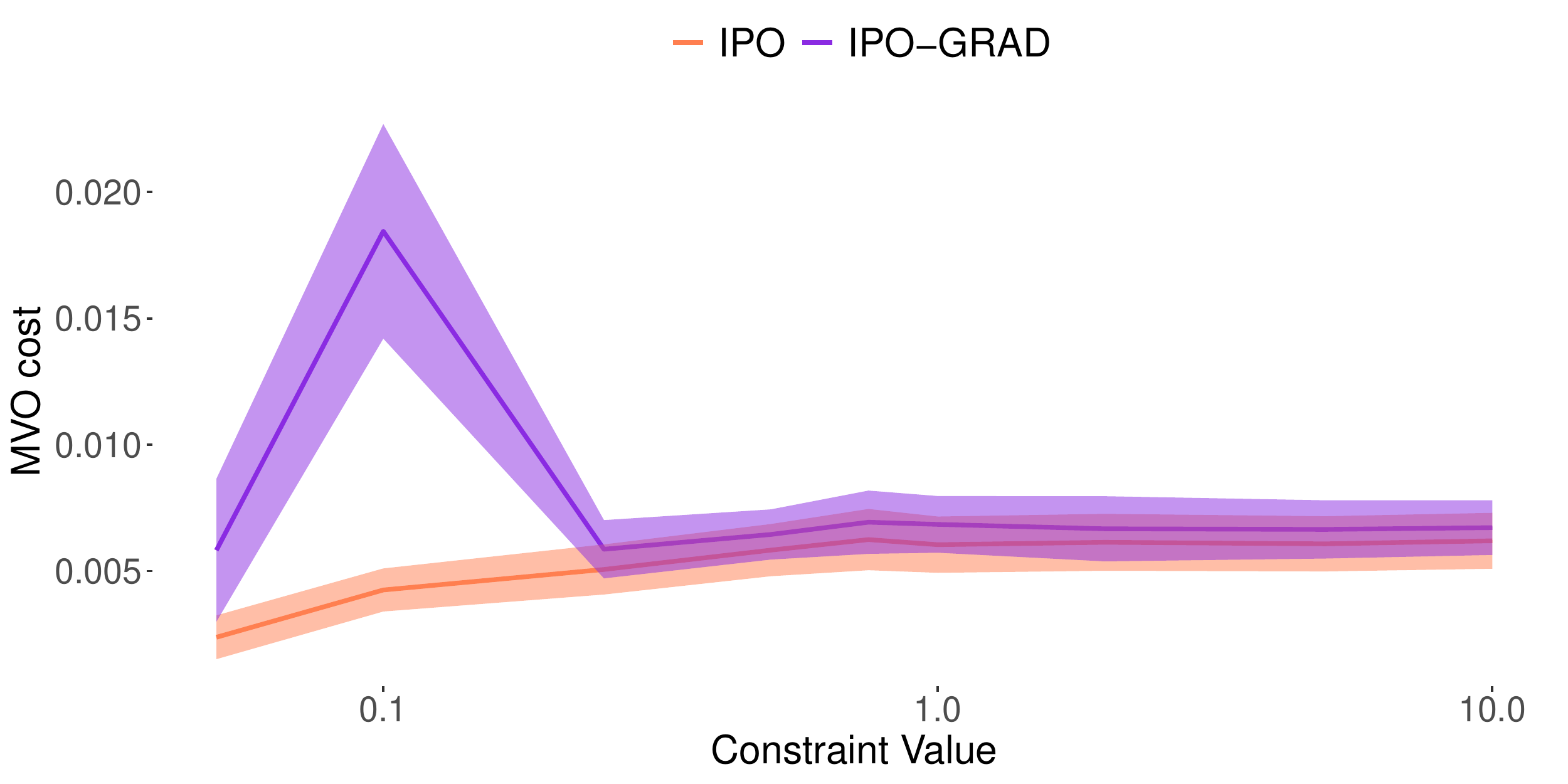}
    \caption{$\delta = 25$, $p = 4$.}
  \end{subfigure}
  \caption{Out-of-sample MVO costs as of function of box constraint value with $p = 4$.}
  \label{fig:sims_3_poly_4}
\end{figure}

\begin{table}[h]
\centering
\begin{tabular}{ l l l}
\hline
 IPO & IPO-GRAD & Iterations\\
\hline
0.023 & 14.389 & 60\\
(0.022,0.024) & (6.9417,22.5718) & (29,94)\\
\hline
\end{tabular}
\caption{Time in seconds for computing the optimal IPO and IPO-GRAD coefficients for an inequality constrained MVO problem. Results are averaged over $360$ instances of simulated data.}
\label{table:sims_3}
\end{table}

\section{Experiments}\label{sec:results}

\textbf{Experiment Setup:}
\newline
We consider an asset universe of $24$ commodity futures markets, described in Table \ref{table:universe}. The daily price data is given from March $1986$ through December $2020$, and is provided by Commodity Systems Inc. Futures contracts are rolled on, at most, a monthly basis in order to remain invested in the most liquid contract, as measured by open-interest and volume. Arithmetic returns are computed directly from the roll-adjusted price data.

In each experiment we follow \citet{Zum2006} and estimate the covariance matrix using an exponential moving average with a decay rate of $0.94$. We consider both univariate and multivariate prediction models. The feature, $\{\toi{\bx}\}$, for univariate models is the $252$-day average return, or trend, for each market. The feature therefore represents a measure of the well-documented `trend' factor, popular to many Commodity Trading Advisors (CTAs) and Hedge Funds (see for example  \citep{Baltas2012}, \citep{Roncalli2011},\citep{Moskowitz2012}). The features for multivariate models are the $252$-day trend and the carry for each market. We follow \citet{Koijen2018} and define  the carry as the expected convenience yield, or cost, for holding that commodity, and is estimated by the percent difference in price between the two futures contracts closest to expiry.

As we will see below, the majority of the IPO and OLS regression coefficients  are not statistically significant at an individual market level. Indeed this is common and well document in many applications of financial forecasting (see for example \citep{Huang2020,Moskowitz2012}). The lack of statistical significance may be indicative of low signal-to-noise levels and/or forecasting model misspecification. Furthermore, the absence of statistical significance does not prohibit the development of profitable portfolio level trading strategies and indeed we observe in Table \ref{table:pooled_reg} that the features are statistically significant at the $95\%$-ile level when evaluated across all markets.

\begin{table}[h]
\centering
\begin{tabular}{ l l l l l}
\hline
Feature & Coefficient & Std. Error  & T-Statistic & P-Value \\
\hline
Carry  & 0.3300 & 0.1654 & 1.9953 & 0.0460\\
Trend & 0.0942 & 0.0324 & 2.9101 & 0.0036\\
\hline
\end{tabular}
\caption{Univariate regression coefficients and t-statistic summary aggregated across all available markets.}
\label{table:pooled_reg}
\end{table}

Each day we form the optimal portfolio weight, $\bz^*(\toi{\hat{\by}})$ at the close of day $i$, and assume execution at the following close, $i+1$. In each experiment, described below, we consider two models for estimating asset returns:
\begin{enumerate}
\item \textbf{OLS:} ordinary-least squares with prediction coefficients, $\hat{\btheta}$.
\item \textbf{IPO:} integrated prediction and optimization, where $\btheta^*$ is determined by the IPO optimization framework described in Section \ref{sec:method}.
\end{enumerate}
We consider $6$ experiments:
\begin{enumerate}
\item Unconstrained MVO program with univariate regression.
\item Unconstrained MVO program with multivariate regression.
\item Equality constrained MVO program with univariate regression.
\item Equality constrained  MVO program with multivariate regression.
\item Inequality constrained MVO program with univariate regression.
\item Inequality constrained MVO program with multivariate regression.
\end{enumerate}

The equality constrained MVO programs are market-neutral: $\mathbb{S} = \{ \bz^T\bone = 0 \}$, whereas the inequality constrained MVO programs are both market-neutral and include lower bound and upper bound market constraints:
$$
\mathbb{S} = \{ \bz^T\bone = 0, -0.125 \leq \bz \leq 0.125  \}.
$$
Note that the results and discussion for the equality constrained MVO and multivariate models are very similar to that of the unconstrained and inequality constrained MVO with univariate prediction models and can be found in Appendix \ref{sec:app_exp}. In order to provide realistic annualized volatilities in the $10\%-20\%$ range,  we fix the risk-aversion parameter to $\delta = 50$. For each experiment, the initial parameter estimation is performed using the first $14$ years of data (March $1986$ through December $1999$). Out-of-sample testing begins in January $2000$ and ends in December $2020$.  We apply a walk-forward training and testing framework whereby the optimal regressions coefficients are updated every $2$ years using all available training data at that point in time. Performance is in excess of the risk-free rate and gross of trading costs.

Each model is evaluated on absolute and relative terms, with a focus on out-of-sample MVO cost and out-of-sample Sharpe ratio cost, provided  by Equation $\eqref{eq:mvo_sr_cost_oos}$.
\begin{equation}\label{eq:mvo_sr_cost_oos}
\begin{split}
c_{\text{MVO}}(\bz,  \by  )  = -\mu(\bz,\by) + \frac{\delta}{2} \sigma^2(\bz,\by), \quad \text{and} \quad  c_{\text{SR}}(\bz,  \by  ) =    -\frac{ \mu(\bz,\by) }{ \sigma(\bz,\by) }
\end{split}
\end{equation}
where
$$ \mu(\bz,\by) = \frac{1}{m}\sum_{i=1}^m \toit{\bz} \toi{ \by } \quad \text{and} \quad \sigma^2(\bz,\by) = \frac{1}{m} \sum_{i=1}^m(\toit{\bz} \toi{ \by } - \mu(\bz,\by))^2,$$
denote the mean and variance of realized daily returns. To quantify the consistency of observed performance metrics, we bootstrap the out-of-sample returns generated by each model using $1000$ samples as follows:

\begin{enumerate}
\item For each $k \in \{1,2,...,1000\}$, sample, without replacement, a batch, $B_k$, with $|B_k| = 252$ observations ($1$ year) from the out-of-sample period.
\item For each sample and model, compute the realized MVO and Sharpe ratio costs using Equation $\eqref{eq:mvo_sr_cost_oos}$.
\end{enumerate}
 In order to fairly compare the realized costs  we ensure that each model uses identical bootstrap observations. We report the dominance ratio (DR); namely the proportion of samples for which the realized cost of the IPO model is less than that of the OLS model.

Our experiments should be interpreted as a proof-of-concept, rather than a fully comprehensive financial study. That said, we believe that the results presented below provide compelling evidence for using IPO for estimating regression coefficients. In general, the IPO models exhibit lower out-of-sample MVO costs and improved economic outcomes in comparison to the traditional OLS-based `predict, then optimize' approach.

\subsection{Experiment 1:  unconstrained with univariate predictions} \label{sec:results_1}

Economic performance metrics and average out-of-sample MVO costs are provided in Table \ref{table:ipo_uncon_uni} for the time period of {2000-01-01} to {2020-12-31} for the unconstrained MVO portfolios with univariate prediction models. Equity growth charts for the same time period are provided in Figure \ref{fig:ipo_equity_uncon_uni}.  We first observe that the IPO model provides higher absolute and risk-adjusted performance, as measured by the MVO cost and Sharpe ratio. Indeed the IPO model produces an out-of-sample MVO cost that is approximately $50\%$ lower and a Sharpe ratio that is approximately $100\%$ larger than that of the OLS model. Furthermore, the IPO models provide more conservative risk metrics, as measured by portfolio volatility, value-at-risk (VaR), and average drawdown (Avg DD). 

In Figure \ref{fig:ipo_cost_uncon_uni} we compare the realized MVO and Sharpe ratio costs across $1000$ out-of-sample realizations. In general we observe that the IPO model exhibits consistently lower MVO costs and generally higher Sharpe ratios than the OLS model. In Figure \ref{fig:ipo_cost_uncon_uni}(a)  we report a dominance ratio of $97\%$ meaning that the IPO model realizes a lower MVO cost in $97\%$ of samples in comparison to the OLS model. Figure \ref{fig:ipo_cost_uncon_uni}(b) reports a dominance ratio of $68\%$.

In Figure \ref{fig:ipo_coef_uncon_uni} we report the estimated univariate regression coefficients and $\pm 1$ standard error bar for the last out-of-sample data fold. As stated earlier, it is clear that the majority of the IPO and OLS regression coefficients are not statistically significant at an individual market basis. Note that for some markets, the IPO model provides very different regression coefficients, in both magnitude and sign, compared to the OLS coefficients. In particular we observe that, with the exception of Cocoa (CC), all IPO regression coefficients are positive. In contrast, $33\%$ of OLS coefficients are negative.

\begin{figure}[h]
  \includegraphics[width=\linewidth,height=3.8cm, trim={0mm 0cm 0cm 0cm},clip]{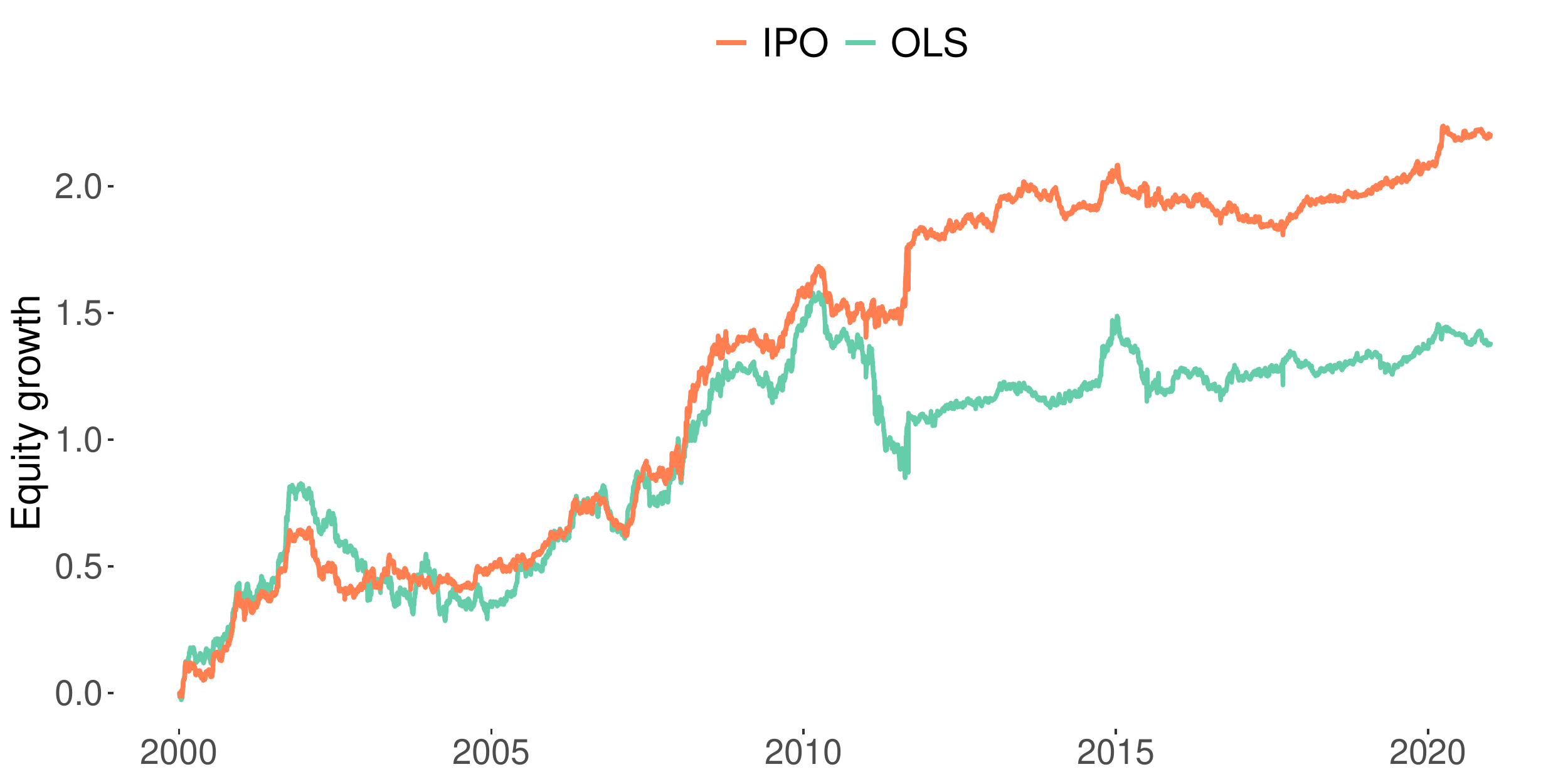}
   \caption{Out-of-sample log-equity growth for the unconstrained  mean-variance program and multivariate IPO and OLS prediction model. }
  \label{fig:ipo_equity_uncon_uni}
\end{figure}

\begin{table}[h]
\centering
\begin{tabular}{lrrrrrr}
\hline
  & Annual Return & Sharpe Ratio & Volatility & Avg Drawdown & Value at Risk & MVO Cost\\
\hline
IPO & 0.1026 & 0.7593 & 0.1352 & -0.0275 & -0.0107 & 0.3544\\
OLS & 0.0644 & 0.3735 & 0.1725 & -0.0426 & -0.0142 & 0.6792\\
\hline
\end{tabular}
\caption{Out-of-sample MVO costs and economic performance metrics for unconstrained mean-variance portfolios with univariate IPO and OLS prediction models.}
\label{table:ipo_uncon_uni}
\end{table}

\begin{figure}[h]
  \centering
  \begin{subfigure}[b]{0.40\linewidth}
    \includegraphics[width=\linewidth , trim={0mm 0cm 0cm 0cm},clip]{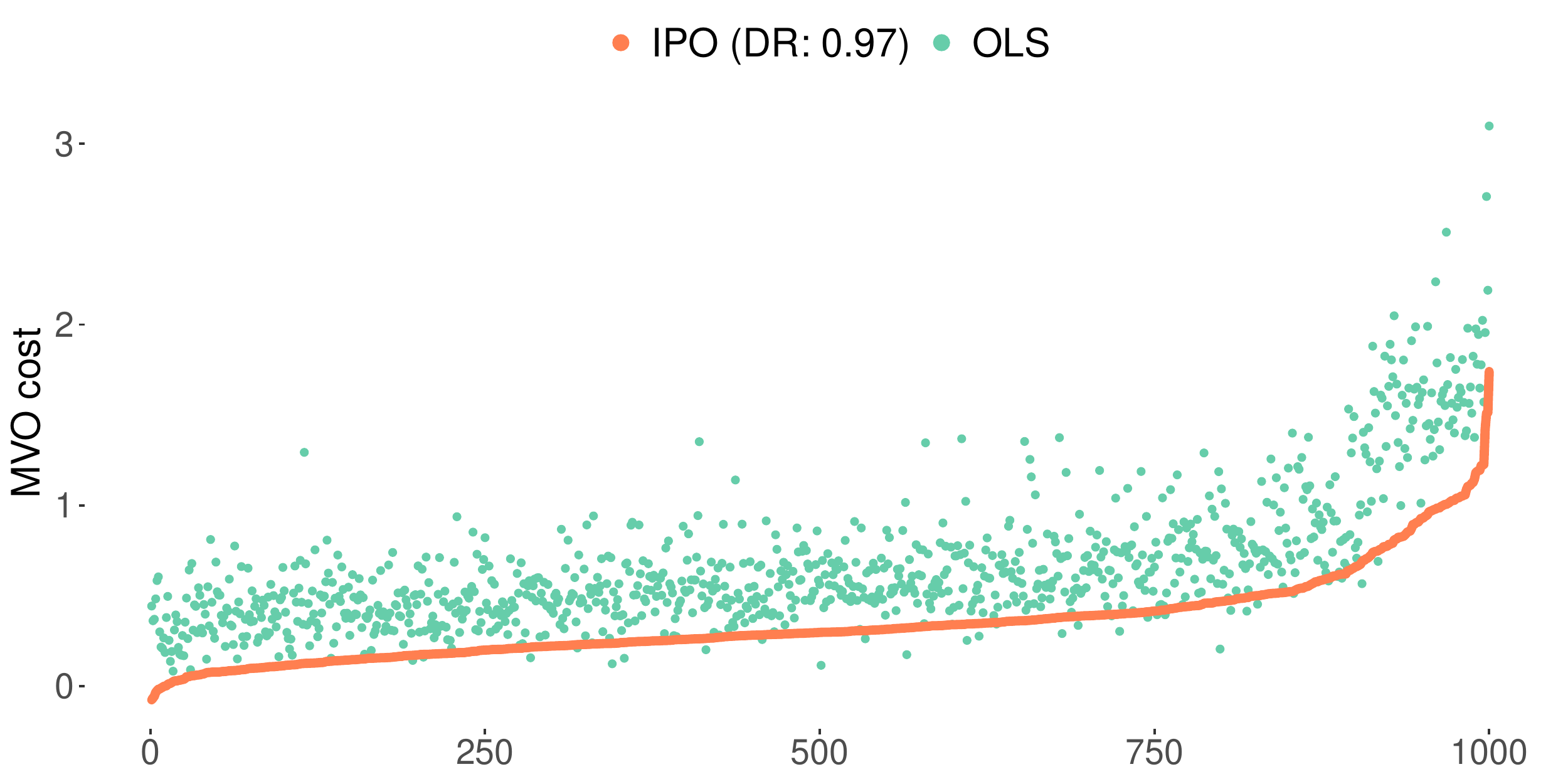}
    \caption{Out-of-sample MVO cost.}
  \end{subfigure}
  \begin{subfigure}[b]{0.40\linewidth}
    \includegraphics[width=\linewidth , trim={0mm 0cm 0cm 0cm},clip]{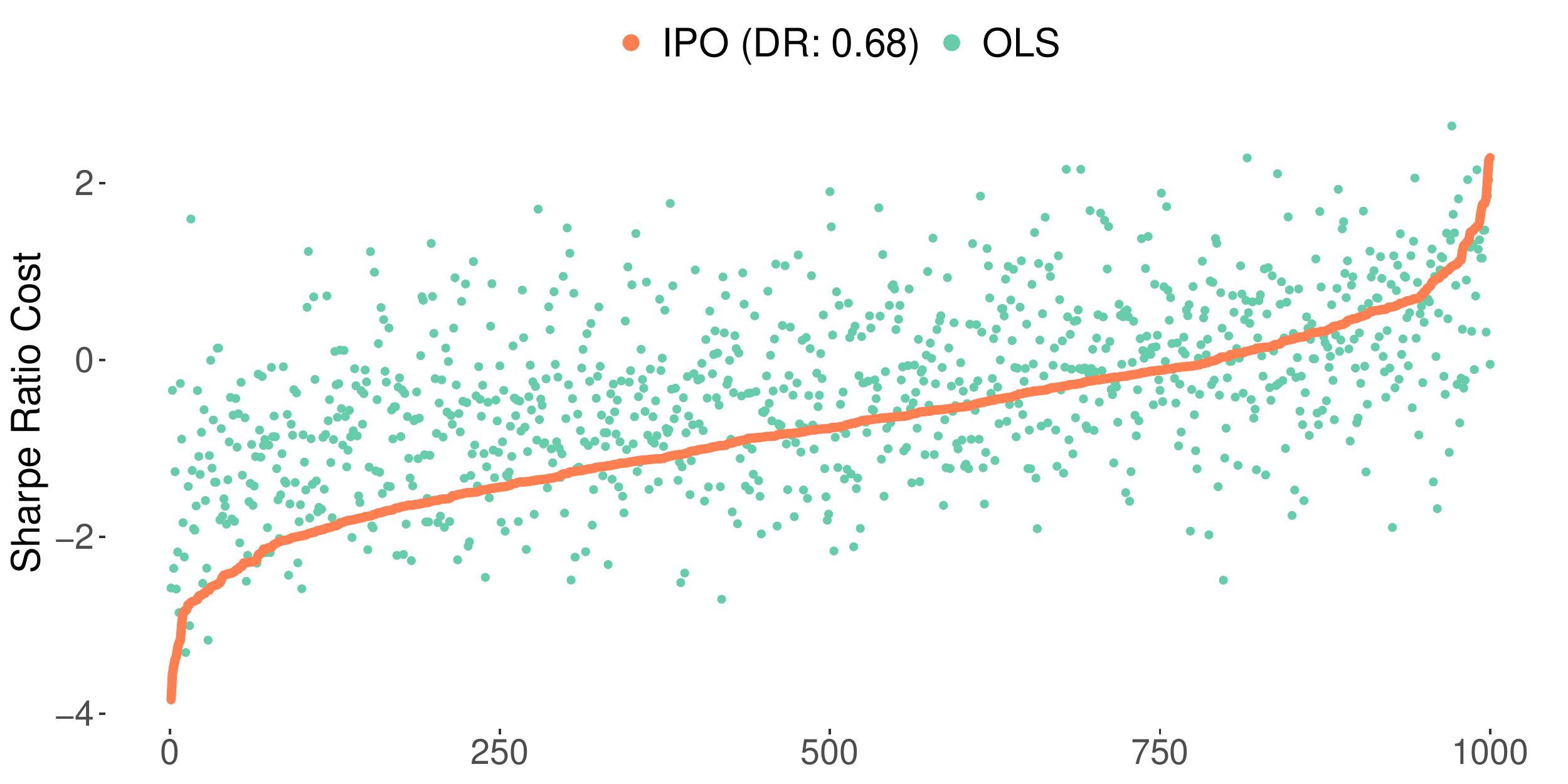}
    \caption{Out-of-sample Sharpe ratio cost.}
  \end{subfigure}
  \caption{Realized out-of-sample MVO and Sharpe ratio costs for the unconstrained  mean-variance program and univariate IPO and OLS prediction models.}
  \label{fig:ipo_cost_uncon_uni}
\end{figure}

\begin{figure}[h]
\includegraphics[width=\linewidth,height = 3.8cm, trim={0mm 0cm 0mm 0cm},clip]{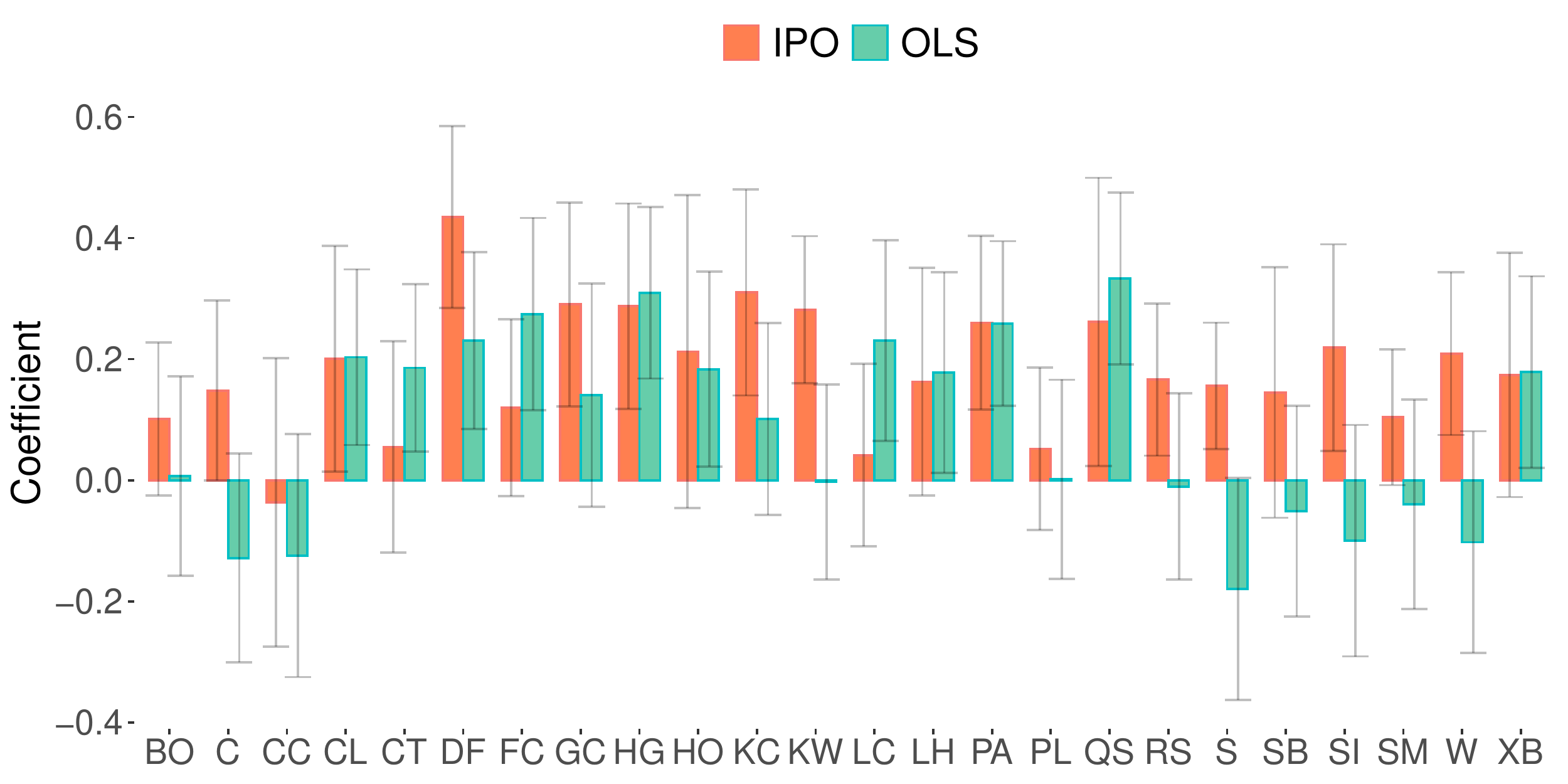}
\caption{Optimal IPO and OLS regression coefficients for the unconstrained  mean-variance program and univariate prediction model. }
\label{fig:ipo_coef_uncon_uni}
\end{figure}

\subsection{Experiment 5: inequality constrained with univariate predictions}

Economic performance metrics and average out-of-sample MVO costs are provided in Table \ref{table:ipo_ineqcon_uni} for the time period of {2000-01-01} to {2020-12-31} for the constrained MVO portfolios with univariate prediction models. Equity growth charts for the same time period are provided in Figure \ref{fig:ipo_equity_ineqcon_uni}. First, observe that the annual returns, risk and  MVO costs are substantially smaller in the presence of portfolio constraints. Indeed this is consistent with the fact that box constraints are themselves a form of portfolio model regularization \citep{Jag2003}. Nonetheless, we observe that the IPO model produces an out-of-sample MVO cost that is approximately $50\%$ lower and a Sharpe ratio that is approximately $85\%$ larger than that of the OLS model. In Figure \ref{fig:ipo_cost_ineqcon_multi} we compare the realized MVO and Sharpe ratio costs across $1000$ bootstrapped sample realizations. Again we observe that the IPO model produces lower MVO and Sharpe ratio costs on average. Observe, however, that the dominance ratios are more modest, with values in the $60\%$-$70\%$ range. This result is intuitive and we would expect the out-of-sample performance of the two models to converge as the portfolio constraints become more strict. Indeed the IPO and OLS model would yield identical results in the limit where the portfolio constraints define a single weight, irrespective of the mean and covariance estimation. Lastly, in Figure \ref{fig:ipo_coef_eqcon_uni} we report the estimated univariate regression coefficients and $\pm 1$ standard error bar for the last out-of-sample data fold. Recall that the IPO coefficients are obtained  by first dropping the inequality constraints and then solving analytically for $\btheta^*$ by Equation $\eqref{eq:theta_star_eqcon}$. The observations and differences between the optimal IPO and OLS coefficients are similar to those discussed in Section \ref{sec:results_1}.

\begin{figure}[h]
  \includegraphics[width=\linewidth,height=3.8cm, trim={0mm 0cm 0cm 0cm},clip]{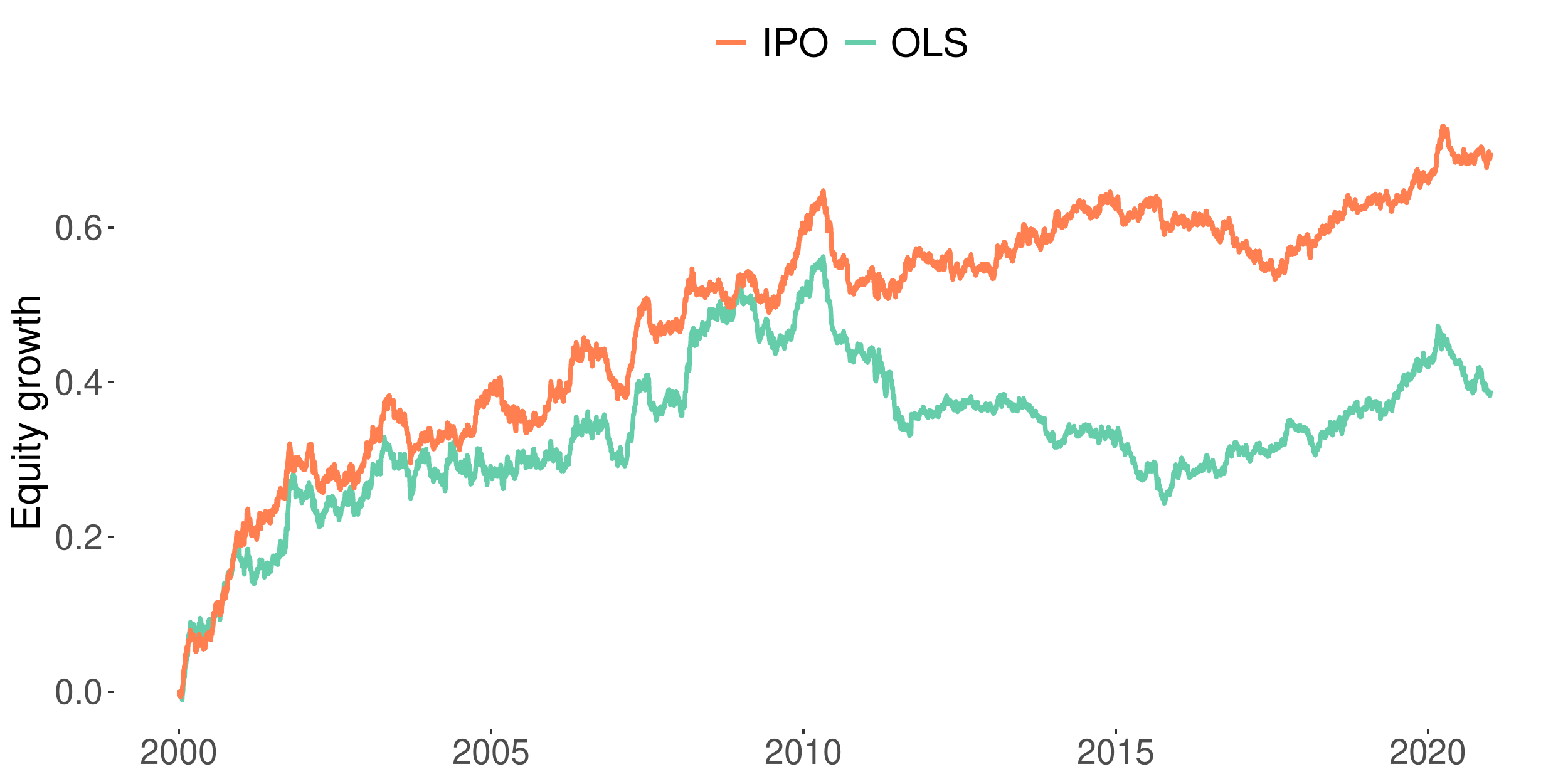}
   \caption{Out-of-sample log-equity growth for the inequality constrained  mean-variance program and multivariate IPO and OLS prediction model. }
  \label{fig:ipo_equity_ineqcon_uni}
\end{figure}

\begin{table}[h]
\centering
\begin{tabular}{lrrrrrr}
\hline
  & Annual Return & Sharpe Ratio & Volatility & Avg Drawdown & Value at Risk & MVO Cost\\
\hline
IPO & 0.0324 & 0.6310 & 0.0513 & -0.0116 & -0.0052 & 0.0335\\
OLS & 0.0181 & 0.3421 & 0.0529 & -0.0174 & -0.0053 & 0.0520\\
\hline
\end{tabular}
\caption{Out-of-sample MVO costs and economic performance metrics for inequality constrained mean-variance portfolios with univariate IPO and OLS prediction models.}
\label{table:ipo_ineqcon_uni}
\end{table}

\begin{figure}[H]
  \centering
  \begin{subfigure}[b]{0.40\linewidth}
    \includegraphics[width=\linewidth , trim={0mm 0cm 0cm 0cm},clip]{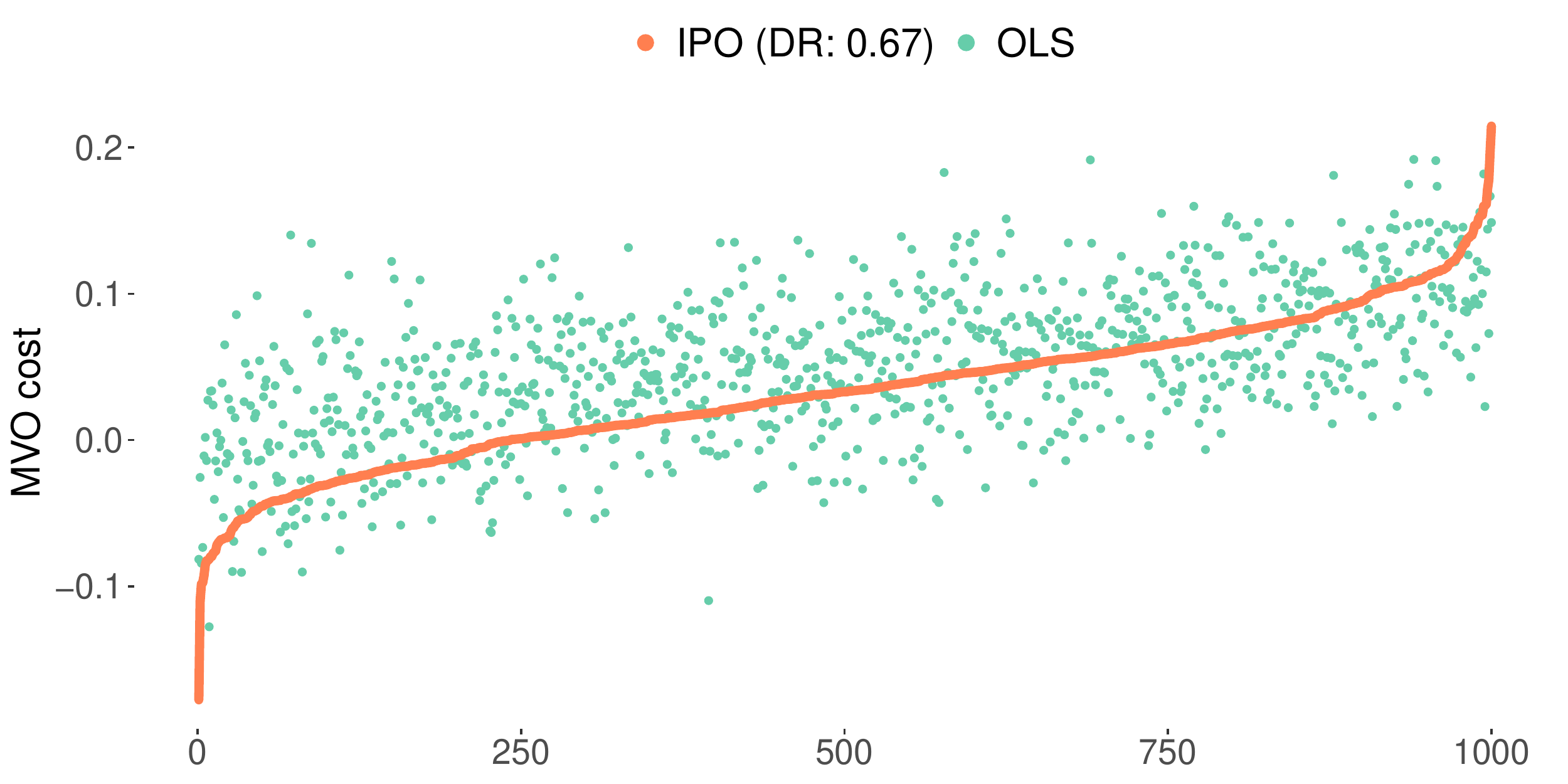}
    \caption{Out-of-sample MVO cost.}
  \end{subfigure}
  \begin{subfigure}[b]{0.40\linewidth}
    \includegraphics[width=\linewidth , trim={0mm 0cm 0cm 0cm},clip]{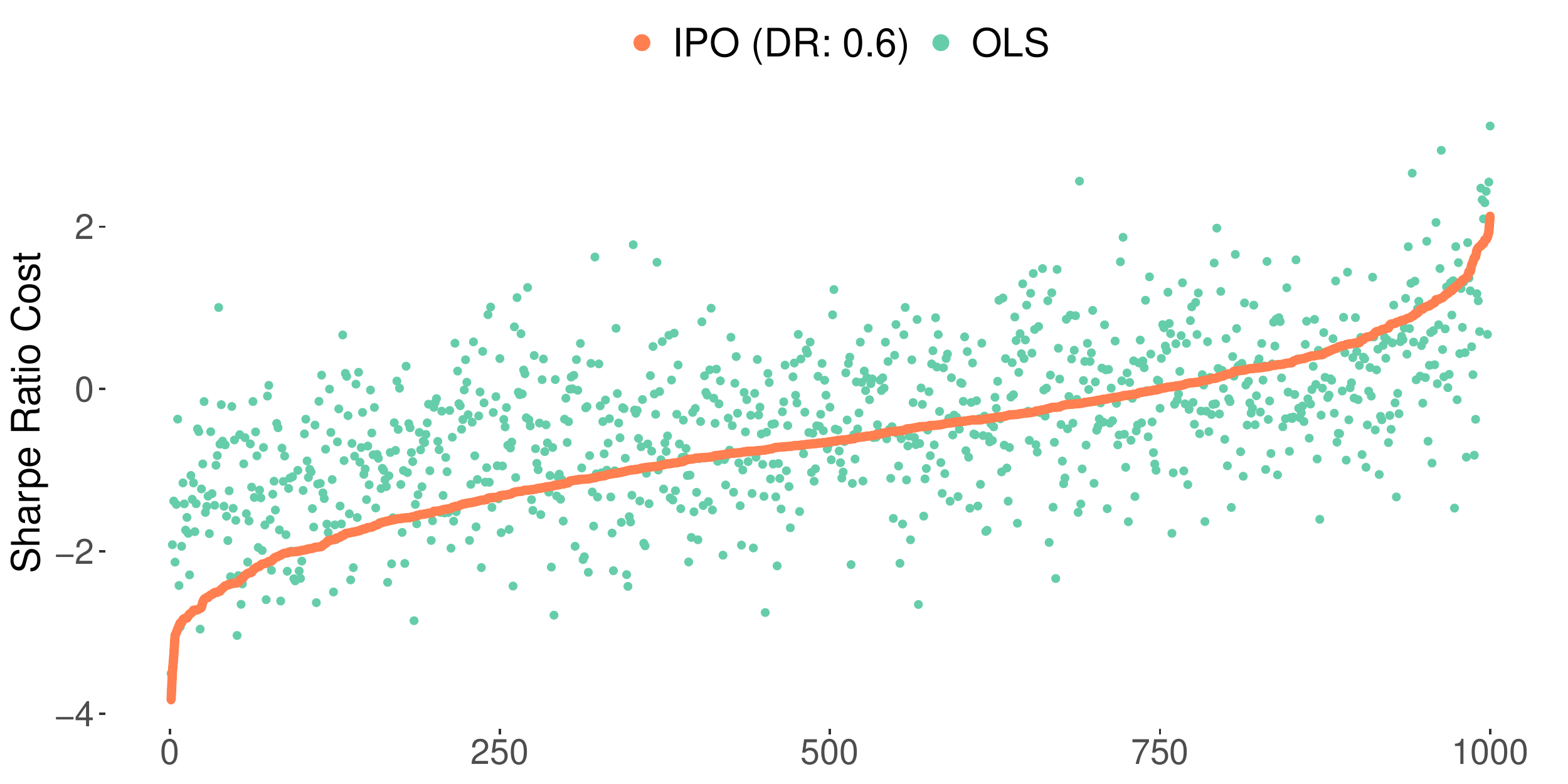}
    \caption{Out-of-sample Sharpe ratio cost.}
  \end{subfigure}
  \caption{Realized out-of-sample MVO and Sharpe ratio costs for the inequality constrained  mean-variance program and univariate IPO and OLS prediction models.}
  \label{fig:ipo_cost_ineqcon_uni}
\end{figure}

\begin{figure}[H]
\includegraphics[width=\linewidth,height = 3.8cm, trim={0mm 0cm 0mm 0cm},clip]{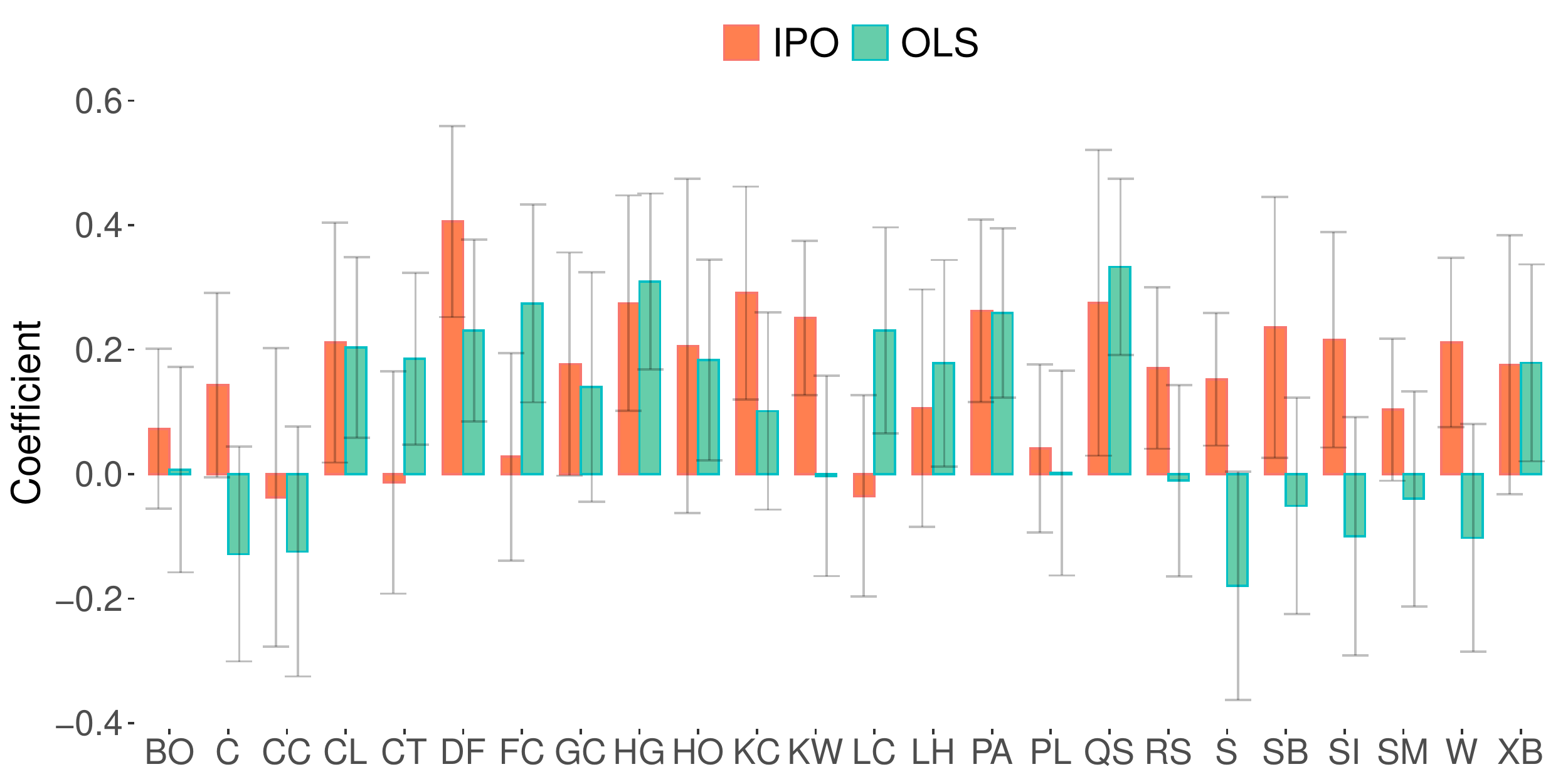}
\caption{Optimal IPO and OLS regression coefficients for the equality constrained  mean-variance program and univariate prediction model. }
\label{fig:ipo_coef_eqcon_uni}
\end{figure}

\section{Conclusion and future work}
In this paper we proposed an integrated prediction and optimization (IPO) framework for optimizing regression coefficients in the context of a mean-variance portfolio optimization. We structured the integrated problem as a bi-level program with a series of lower-level mean-variance optimization programs. We investigated the IPO framework under both univariate and multivariate regression settings and considered the MVO program under various forms of constraints. In a general setting, we presented the current state-of-the-art approach (IPO-GRAD) and restructured the IPO problem as a neural network with a differentiable quadratic programming layer. Where possible, we provided closed-form analytical solutions for the optimal IPO regression coefficients, $\btheta^*$, and the sufficient conditions for uniqueness. We described the sampling distribution properties of $\btheta^*$ and provided the conditions for which $\btheta^*$ is an unbiased estimator of $\btheta$ and provided the expression for the variance.

Extensive numerical simulations demonstrate the computational and performance advantage of the analytical IPO methodology. We demonstrated that, over a wide range of realistic signal-to-noise ratios, the IPO model outperforms the OLS model in terms of minimizing out-of-sample MVO costs. This is true even when the underlying `ground-truth' return generating process is linear in the feature variables. We demonstrated, for a wide range of portfolio sizes, the computational advantage of computing the IPO coefficients analytically, which is on average $10$x-$1000$x faster than the IPO-GRAD methodology. We briefly discussed the computational complexity of the IPO-GRAD methodology and proposed a heuristic which drops the inequality constraints during parameter estimation and invokes the analytical IPO solution. We find that in many instances the IPO-GRAD model overfits the training data, whereas the analytical IPO model produces solutions with lower out-of-sample variance, and in somes cases, improved out-of-sample MVO costs.
We concluded with several experiments using global futures data, under various forms of constraints and prediction model specifications. Out-of-sample results demonstrate that the IPO model provided lower realized MVO costs and superior economic performance in comparison to the traditional  OLS `predict then optimize' approach.

In the presence of general inequality constraints we determined that the current state-of-the-art IPO model is computationally expensive and has a tendency to overfit the training data. We believe that methods for regularizing both the prediction and the decision optimization program, as well as methods for choosing the `best' feature subsets in an integrated setting are interesting areas of future research.
\section{Data Availability Statement}

The data used for experiments was obtained from Commodity Systems Inc: $\text{ https://www.csidata.com}$.


\bibliographystyle{plainnat}
\bibliography{/Users/tars/workspace/phd_thesis/Bibliography/Bibliography}

\newpage
\appendix
\section{Appendix}\label{sec:app_proof}

\subsection{Neural network implementation details}
In the general case we seek to determine a locally optimal solution, $\btheta^*$, to the IPO program by restructuring Program $\eqref{eq:ed1}$ as an end-to-end neural network and applying (stochastic) gradient descent. For compactness, we have temporarily dropped the index notation. From the multivariate chain-rule, the gradient of the IPO objective, $\nabla_\theta L$, can be expressed as:
\begin{equation} \label{grad}
\nabla_\theta L = \frac{\partial L }{\partial \bz^*}\frac{\partial \bz^*}{\partial \hat{\by}} \frac{\partial \hat{\by} }{\partial \btheta}.
\end{equation}
In our case, the MVO cost function, $c$, is smooth and twice differentiable over decision variables, $\bz$, and therefore it is relatively straightforward to compute the gradient $\partial L / \partial \bz^*$. The Jacobian, $\partial \bz^*/ \partial \hat{\by}$, requires differentiation through the $\argmin$ operator. \citet{Amos2017}  demonstrate that rather than forming the Jacobian directly, we can instead compute $\partial L/ \partial  \hat{\by} $ by implicit differentiation of the system of equations  provided by the Karush–Kuhn–Tucker (KKT) conditions at the optimal solution $\bz^*$ to program $\eqref{eq:mvo_ineq}$.

We follow the work of \citet{Amos2017} and begin by first writing the Lagrangian of program $\eqref{eq:mvo_ineq}$:

\begin{equation} \label{eq:lagrange}
\mathcal{L}(\bz, \blambda, \bnu) = -\bz^T \hat{\by} + \frac{\delta}{2} \bz^T \hat{\bV} \bz + \blambda^T(\bG \bz - \bh) + \bnu^T(\bA \bz - \blb),
\end{equation}
where $\blambda \in \mathbb{R}^{d_{iq}}$ and $\bnu \in \mathbb{R}^{d_{eq}}$ are the dual variables of the inequality and equality constraints, respectively.  The KKT optimality conditions for stationarity, primal feasibility, and complementary slackness are given by equations $\eqref{eq:kkt}$.

\begin{equation} \label{eq:kkt}
\begin{split}
-\hat{\by} + \frac{\delta}{2}  \hat{\bV} \bz^*  + \bG^T \blambda^{*T} + \bA^T \bnu^* = 0\\
(\bG \bz* - \bh) \leq 0\\
\blambda^* \geq 0 \\
\blambda^* \cdot (\bG \bz^* - \bh)= 0\\
\bA \bz^* = \blb
\end{split}
\end{equation}
Following \citet{Amos2017}, we take the differentials of these conditions to yield the following system of equations:
\begin{equation} \label{eq:diff}
\begin{split}
\begin{bmatrix}
\delta  \hat{\bV} &  \bG^T &   \bA^T \\
\diag(\blambda^*)\bG & \diag (\bG \bz^* - \bh ) & 0\\
\bA & 0 & 0
\end{bmatrix}
\begin{bmatrix}
{\bdz }\\
{\bdlambda }\\
{\bdnu}
\end{bmatrix}
= -
\begin{bmatrix}
\delta {\bdV} \bz^* - {\bdy } + {\bdG}^T \blambda^* + {\bdA}^T \bnu* \\
\diag(\blambda ^*) {\bdG } \bz^* - \diag(\blambda ^*) {\bdh}\\
{\bdA} \bz^* - {\bdb}
\end{bmatrix}.
\end{split}
\end{equation}

\citet{Amos2017} make two important observations about the system of equations $\eqref{eq:diff}$. The first, is that the left side matrix gives the optimality conditions of the convex quadratic problem, which, when solving by interior-point methods, must be factorized in order to obtain the solution to the decision program \citep{Boyd2004}. Secondly, the right side gives the differentials of the relevant functions at the achieved solution with respect to any of the input parameters. In particular, we seek to compute the Jacobian $\partial \bz^* / \partial \hat{\by}$. As explained by \citet{Amos2017}, the Jacobian $\partial \bz^*/\partial \hat{\by}$ is obtained by letting $\bdy = \bI$ (setting all other differential terms to zero) and solving the system of equations for $\bdz$. From a computation standpoint, the required Jacobian is therefore effectively obtained `for free' upon factorization of the left matrix when obtaining the solution, $\bz^*$, in the forward pass. 

In practice it is inefficient to compute the Jacobian matrix $\partial \bz^* / \partial \hat{\by}$. Instead we compute $\partial c / \partial \hat{\hat{\by}}$ directly by multiplying the backward pass vector by the inverse of the transposed left-hand-side matrix, as shown in equation $\eqref{eq:diff_sol}$.

\begin{equation} \label{eq:diff_sol}
\begin{split}
\begin{bmatrix}
\bar{ \bd }_{\bz} \\
\bar{ \bd }_{\blambda} \\
\bar{ \bd }_{\bnu}
\end{bmatrix}
=
\begin{bmatrix}
\frac{\partial c}{\partial \hat{\by}}\\
-\\
-
\end{bmatrix}
= -
\begin{bmatrix}
\delta  \hat{\bV} &  \bG^T\diag(\blambda^*) &   \bA^T \\
\bG & \diag (\bG \bz^* - \bh ) & 0\\
\bA & 0 & 0
\end{bmatrix} ^{-1}
\begin{bmatrix}
\big( \frac{\partial c}{\partial \bz^*} \big)^T \\
0\\
0
\end{bmatrix}.
\end{split}
\end{equation}

More generally, equation $\eqref{eq:diff_sol}$ allows for efficient computation of the  gradients with respect to any of the MVO input problem variables.  For the reader's interest, we state the gradients for all other problem variables and refer the reader to \citet{Amos2017} for their derivation.

\begin{align*}
\frac{\partial c   }{\partial \hat{\bV}} & = \frac{1}{2} \Big(\bar{ \bd }_{\bz}  \bz^{*T} + \bz^* \bar{ \bd }_{\bz}^T \Big) & \qquad \frac{\partial c   }{\partial \hat{\by}} & = \bar{ \bd }_{\bz} \\
\frac{\partial c   }{\partial \bA} & =  \bar{ \bd }_{\bnu}  \bz^{*T} + \bnu^* \bar{ \bd }_{\bz} ^T   & \qquad \frac{\partial c   }{\partial \blb} & = -\bar{ \bd }_{\bnu}  \\
\frac{\partial c   }{\partial \bG} & = \diag(\blambda ^*) \bar{ \bd }_{\blambda}  \bz^{*T} + \blambda^* \bar{ \bd }_{\bz} ^T   & \qquad  \frac{\partial c   }{\partial \bh} & = - \diag(\blambda ^*) \bar{ \bd }_{\blambda}
\end{align*}


\subsection{Proof of Proposition \ref{prop:l_uncon_uni} }

We begin with the following proposition that will become useful later.

\begin{prop}\label{prop:bvb}
Let $\bV \in \mathbb{R}^{m\times m}$ be a symmetric positive definite matrix. Let $\bm B \in \mathbb{R}^{m\times n}$ and consider the quadratic form $\bA = \bm B^T \bV \bm B$. Then $\bA$ is a symmetric positive definite matrix if $\bm B$ has full column rank.
\end{prop}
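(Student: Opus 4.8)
The plan is to verify directly the two defining properties of a symmetric positive definite matrix, treating symmetry and strict positivity of the associated quadratic form separately. First I would dispatch symmetry, which does not require the rank hypothesis: since $\bV$ is symmetric, $\bA^T = (\bm B^T \bV \bm B)^T = \bm B^T \bV^T \bm B = \bm B^T \bV \bm B = \bA$, so $\bA$ is symmetric.

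Next I would address positive definiteness. For arbitrary $\bx \in \mathbb{R}^n$, I would rewrite the quadratic form by associativity as $\bx^T \bA \bx = \bx^T \bm B^T \bV \bm B \bx = (\bm B \bx)^T \bV (\bm B \bx)$. Setting $\bw = \bm B \bx$, the positive definiteness of $\bV$ gives $\bw^T \bV \bw \geq 0$, with equality if and only if $\bw = \bm 0$. The only remaining task is to rule out $\bw = \bm 0$ for $\bx \neq \bm 0$, which is precisely where the full-column-rank hypothesis must be invoked.

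The crux, although a short one, is connecting full column rank to injectivity of $\bm B$: full column rank is equivalent to $\Null(\bm B) = \{\bm 0\}$, i.e. $\bm B \bx = \bm 0 \implies \bx = \bm 0$. Hence for any $\bx \neq \bm 0$ we have $\bw = \bm B \bx \neq \bm 0$, so $\bx^T \bA \bx = \bw^T \bV \bw > 0$; combined with symmetry this yields $\bA \succ 0$. I expect no genuine obstacle here — the argument is elementary — but the one point requiring care is to invoke the rank hypothesis exactly at the step guaranteeing $\bm B \bx \neq \bm 0$, since dropping it leaves $\bA$ only positive semidefinite (its null space inherits that of $\bm B$).
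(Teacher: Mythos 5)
Your proposal is correct and follows essentially the same route as the paper's proof: establish symmetry directly from $\bV = \bV^T$, then substitute $\bw = \bm B \bx$ into the quadratic form and use the full-column-rank hypothesis (trivial null space of $\bm B$) to conclude $\bw \neq \bm 0$ and hence strict positivity. Your write-up is in fact slightly more explicit than the paper's, which dismisses symmetry as following ``directly from the definition,'' but the mathematical content is identical.
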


\begin{proof}\label{proof:bvb_proof}
The symmetry of $\bA$ follows directly from the definition. To prove positive definiteness, let $\bx \in  \mathbb{R}^n$ be a non-zero vector and consider the quadratic form $\bx^T \bA \bx$:
$$
\bx^T \bA \bx = \bx^T \bm B^T \bV \bm B \bx = \by^T \bV \by.
$$
Clearly $\by^T \bV \by > 0$ for all $\by \neq 0$ and $\by^T \bV \by = 0 \iff \bm B\bx = 0$. But $\bm B$ has full column rank and therefore the only solution to $\bm B \bx=0$ is the trivial solution $\bx=0$. It follows then that $\bx^T \bm B^T \bV \bm B \bx > 0$ and therefore $\bA$ is positive definite.
\end{proof}

Let $\mathbb{S} = \mathbb{R}^{d_z}$, then the solution to the  MVO Program $\eqref{eq:mvo}$ is given by:
\begin{equation}\label{eq:z_star_uncon_general}
\bz^*(\toi{\hat{\by}}) = \frac{1}{\delta} \toinv{\hat{\bV}} \toi{\hat{\by}} =  \frac{1}{\delta} \toinv{\hat{\bV}} \bP \diag(\toi{\bx}) \btheta.
\end{equation}
Direct substitution of $\eqref{eq:z_star_uncon_general}$ into Equation $\eqref{eq:mvo_full}$ yields the following quadratic objective:
\begin{equation}\label{eq_a:l_theta_uncon}
L(\btheta) = \frac{1}{2}\btheta^T \bH(\bx)  \btheta -  \btheta^T \bd(\bx,\by)
\end{equation}
where
\begin{equation}\label{eq_a:d_uncon}
\bd(\bx,\by)  = \frac{1}{m\delta} \sum_{i = 1}^m  \Big( \diag(\toi{\bx}) \bP^T \toinv{\hat{\bV}}  \toi{\by} \Big)
\end{equation}
and
\begin{equation}\label{eq_a:h_uncon}
\bH(\bx) =  \frac{1}{m \delta} \sum_{i = 1}^m  \Big( \diag(\toi{\bx}) \bP^T \toinv{\hat{\bV}}  \toi{\bV}  \toinv{\hat{\bV}} \bP \diag(\toi{\bx})  \Big).
\end{equation}
Applying Proposition \ref{prop:bvb} it follows then that if there exists an $\toi{\bx}$ such that $\toi{\bx_j} \neq 0 \quad \forall j \in 1,...,d_x$ then $\bH(\bx,\by) \succ 0$ and therefore $\eqref{eq_a:qp_uncon_general}$ is a convex quadratic function.
\begin{equation} \label{eq_a:qp_uncon_general}
\begin{split}
\minimize_{ \btheta \in \bTheta} \quad & \frac{1}{2}\btheta^T \bH(\bx)  \btheta -  \btheta^T \bd(\bx,\by).
\end{split}
\end{equation}
In the absence of constraints on $\btheta$, then the first-order conditions are necessary and sufficient for optimality, with optimal IPO coefficients given by:
\begin{equation} \label{eq_a:theta_star_uncon_general}
\btheta^* = \bH(\bx)^{-1} \bd(\bx,\by)
\end{equation}


\subsection{Proof of Proposition \ref{prop:uncon_bias_general} }

Let $\btheta^*$ and $\bd_{\bu}(\bx)$ be as defined by Equation $\eqref{eq:theta_star}$ and Equation $\eqref{eq:du}$, respectively. It follows then that:
\begin{equation}\label{eq_a:proof_uncon_bias_general}
\begin{split}
\mathbb{E}[\btheta^*] & = \mathbb{E}\big[ \bH(\bx)^{-1}  \bd(\bx,\by) \big] \\
& = \bH(\bx)^{-1} \frac{1}{m\delta} \sum_{i = 1}^m  \Big( \diag(\toi{\bx}) \bP^T \toinv{\hat{\bV}} \mathbb{E}\big[  \toi{\by} \big] \Big)\\
& = \bH(\bx)^{-1} \frac{1}{m\delta} \sum_{i = 1}^m  \Big( \diag(\toi{\bx}) \bP^T \toinv{\hat{\bV}} \bP \diag(\toi{\bx}) \btheta \Big)\\
& =  \bH(\bx)^{-1} \bd_{\bu}(\bx) \btheta
\end{split}
\end{equation}
Corollary \ref{cor:uncon_unbias} follows directly from Equation $\eqref{eq_a:proof_uncon_bias_general}$. Observe that when $\toi{\hat{\bV}} = \toi{\bV} \forall i \in \{1, ..., m\}$, then
$$
\bH(\bx) =  \frac{1}{m \delta} \sum_{i = 1}^m  \Big( \diag(\toi{\bx}) \bP^T \toinv{\hat{\bV}} \bP \diag(\toi{\bx})  \Big).
$$
It follows then that:
\begin{equation}\label{eq:proof_uncon_unbias}
\begin{split}
\mathbb{E}[\btheta^*] & = \mathbb{E}\big[ \bH(\bx)^{-1}  \bd(\bx,\by) \big] \\
& = \bH(\bx)^{-1} \frac{1}{m\delta} \sum_{i = 1}^m  \Big( \diag(\toi{\bx}) \bP^T \toinv{\hat{\bV}} \bP \diag(\toi{\bx}) \btheta \Big)\\
& =   \bH(\bx)^{-1}\bH(\bx) \btheta\\
& = \btheta.
\end{split}
\end{equation}

\subsection{Proof of Proposition \ref{prop:uncon_var} }

Let $\{\toi{\by}\}_{i=1}^m$ be independent random variables with  $\toi{\by} \sim \mathcal{N}(\bP \diag(\toi{\bx}) \btheta , \bSigma )$. Let $\hat{\bSigma}$ and $\bM$ be as defined by Equation $\eqref{eq:sigma_hat}$ and Equation $\eqref{eq:m_uncon}$, respectively. It follows then that:

\begin{equation}\label{eq_a:proof_uncon_var}
\begin{split}
\Var(\btheta^*) & = \Var \big( \bH(\bx)^{-1}  \bd(\bx,\by) \big) \\
& = \bH(\bx)^{-1} \Var \Big( \frac{1}{m\delta} \sum_{i = 1}^m   \diag(\toi{\bx}) \bP^T \toinv{\hat{\bV}}  \toi{\by} \Big) \bH(\bx)^{-1} \\
& = \bH(\bx)^{-1}  \frac{1}{m^2 \delta^2 } \sum_{i = 1}^m  \Big( \diag(\toi{\bx}) \bP^T \toinv{\hat{\bV}} \Var(\toi{\by})   \toinv{\hat{\bV}} \bP \diag(\toi{\bx}) \Big)\bH(\bx)^{-1}\\
& =  \bH(\bx)^{-1}  \frac{1}{m^2 \delta^2 } \sum_{i = 1}^m  \Big( \diag(\toi{\bx}) \bP^T \toinv{\hat{\bV}} \hat{\bSigma}  \toinv{\hat{\bV}} \bP \diag(\toi{\bx}) \Big)\bH(\bx)^{-1} \\
& = \bH(\bx)^{-1} \bM \bH(\bx)^{-1}.
\end{split}
\end{equation}

\subsection{Proof of Proposition \ref{prop:uncon_tracking_error} }
Let $\bz^*(\toi{\by})$ and $\bz^*(\toi{\hat{\by}})$ be as defined in Equation $\eqref{eq:mvo}$ and Equation $\eqref{eq:z_star_uncon}$, respectively. Recall, the objective function of the minimum tracking-error representation of the IPO program is:
\begin{equation} \label{eq_a:uncon_tracking_error_obj}
L_{\text{te}}(\btheta) = \frac{1}{2m} \sum_{i = 1}^m \lVert \bz^*(\toi{\hat{\by}}) - \bz^*(\toi{\by})  \rVert^2_{\toi{\bV}}
\end{equation}
The first-order necessary conditions for optimality of Program $\eqref{eq:mvo}$ state:
\begin{equation}\label{eq_a:mvo_fonc}
 \toi{\bV} \bz^*( \toi{\by} ) =  \toi{\by}
\end{equation}
Expanding  Equation $\eqref{eq_a:uncon_tracking_error_obj}$ and substituting in Equation $\eqref{eq_a:mvo_fonc}$  completes the proof:
\begin{equation} \label{eq_a:uncon_tracking_error_obj}
\begin{split}
L_{\text{te}}(\btheta) & = \frac{1}{2m} \sum_{i = 1}^m \big( \bz^*(\toi{\hat{\by}}) - \bz^*(\toi{\by}) \big)^T \toi{\bV} \big( \bz^*(\toi{\hat{\by}}) - \bz^*(\toi{\by}) \big)\\
& =  \frac{1}{m} \sum_{i = 1}^m   \frac{1}{2} \bz^*(\toi{\hat{\by}})^T \toi{\bV} \bz^*(\toi{\hat{\by}}) -\bz^*(\toi{\hat{\by}})^T \toi{\bV}\bz^*(\toi{\by})  + \frac{1}{2} \bz^*(\toi{\by})^T \toi{\bV} \bz^*(\toi{\by}) \\
& =  \frac{1}{m} \sum_{i = 1}^m   \frac{1}{2} \bz^*(\toi{\hat{\by}})^T \toi{\bV} \bz^*(\toi{\hat{\by}}) -\bz^*(\toi{\hat{\by}})^T\toi{\by}  + \bz^*(\toi{\by})^T \toi{\bV} \bz^*(\toi{\by}) .
\end{split}
\end{equation}
Note that the proof of Proposition \ref{prop:eqcon_tracking_error} follows a similar argument for the case of equality constrained MVO portfolios.

\subsection{Proof of Proposition \ref{prop:l_eqcon_uni} }
In the presence of equality constraints then the solution to the MVO Program is given by:
\begin{equation}\label{eq_a:z_star_eqcon_general}
\begin{split}
\bz^*(\toi{\hat{\by}}) &  = \frac{1}{\delta} \bF ( \bF^T \toi{\hat{\bV}} \bF )^{-1} \bF^T  \bP \diag(\toi{\bx})\btheta +  (\bI - \bF ( \bF^T \toi{\hat{\bV}} \bF )^{-1} \bF^T \toi{\hat{\bV}} )\bz_0,
\end{split}
\end{equation}
where   $\bz_0$ be a particular element of $ \mathbb{S} =\{\bA \bz = \blb\}$ and $\bF$ is a basis for the nullspace of $\bA$.

Direct substitution of $\eqref{eq_a:z_star_eqcon_general}$ into Equation $\eqref{eq:mvo_full}$ yields the following quadratic objective:
\begin{equation}\label{eq_a:l_theta_eqcon}
L(\btheta) = \frac{\delta}{2} \sum_{i = 1}^m \toi{L_1}(\btheta) - \sum_{i=1}^m \toi{L_2}(\btheta),
\end{equation}
where
\begin{equation}\label{eq_a:l1}
\begin{split}
\toi{L_1}(\btheta) & = \bz^*(\toi{\hat{\by}})^T \toi{\bV} \bz^*(\toi{\hat{\by}})\\
& = \frac{1}{\delta^2} \btheta^T\diag(\toi{\bx}) \bP^T \bF ( \bF^T \toi{\hat{\bV}} \bF )^{-1} \bF^T \toi{\bV} \bF ( \bF^T \toi{\hat{\bV}} \bF )^{-1} \bF^T  \bP \diag(\toi{\bx})\btheta \\
 & \quad + \frac{2}{\delta} \btheta^T\diag(\toi{\bx}) \bP^T \bF ( \bF^T \toi{\hat{\bV}} \bF )^{-1} \bF^T \toi{\bV} (\bI - \bF ( \bF^T \toi{\hat{\bV}} \bF )^{-1} \bF^T \toi{\hat{\bV}} )\bz_0\\
 & \quad + \bz_0^T (\bI - \toi{\hat{\bV}}\bF ( \bF^T \toi{\hat{\bV}} \bF )^{-1} \bF^T  )\toi{\bV} (\bI - \bF ( \bF^T \toi{\hat{\bV}} \bF )^{-1} \bF^T \toi{\hat{\bV}} )\bz_0
\end{split}
\end{equation}

and

\begin{equation}\label{eq_a:l2}
\begin{split}
\toi{L_2}(\btheta) & = \bz^*(\toi{\hat{\by}})^T\toi{\by}\\
& = \frac{1}{\delta} \btheta^T\diag(\toi{\bx}) \bP^T \bF ( \bF^T \toi{\hat{\bV}} \bF )^{-1} \bF^T\toi{\by} + \bz_0^T (\bI - \bF ( \bF^T \toi{\hat{\bV}} \bF )^{-1} \bF^T \toi{\hat{\bV}} )\toi{\by}
\end{split}
\end{equation}

Simplifying Equation $\eqref{eq_a:l_theta_eqcon}$ and removing constant terms yields the following
quadratic objective:
\begin{equation}\label{eq_a:l_theta_eqcon_simp}
L(\btheta) = \frac{1}{2}\btheta^T \bH_{\text{eq}}(\bx)  \btheta -  \btheta^T \bd_{\text{eq}}(\bx,\by)
\end{equation}
where
\begin{equation}\label{eq_a:d_uncon}
\bd_{\text{eq}}(\bx,\by)  = \frac{1}{m\delta} \sum_{i = 1}^m  \Big( \diag(\toi{\bx}) \bP^T \bF ( \bF^T \toi{\hat{\bV}} \bF )^{-1} \bF^T (\toi{\by} - \toi{\bV} (\bI - \bF ( \bF^T \toi{\hat{\bV}} \bF )^{-1} \bF^T \toi{\hat{\bV}} )\bz_0)  \Big)
\end{equation}
and
\begin{equation}\label{eq_a:h_uncon}
\bH_{\text{eq}}(\bx) =  \frac{1}{m \delta} \sum_{i = 1}^m  \Big( \diag(\toi{\bx}) \bP^T \bF ( \bF^T \toi{\hat{\bV}} \bF )^{-1} \bF^T \toi{\bV} \bF ( \bF^T \toi{\hat{\bV}} \bF )^{-1} \bF^T  \bP \diag(\toi{\bx})  \Big).
\end{equation}
Again, applying Proposition \ref{prop:bvb} it follows then that if there exists an $\toi{\bx}$ such that $\toi{\bx_j} \neq 0 \quad \forall j \in 1,...,d_x$ then $\bH_\text{eq}(\bx) \succ 0$ and therefore $\eqref{eq_a:qp_eqcon_general}$ is a convex quadratic program:
\begin{equation} \label{eq_a:qp_eqcon_general}
\begin{split}
\minimize_{ \btheta \in \bTheta} \quad & \frac{1}{2}\btheta^T \bH_{\text{eq}}(\bx)  \btheta -  \btheta^T \bd_{\text{eq}}(\bx,\by).
\end{split}
\end{equation}
In the absence of constraints on $\btheta$, then the first-order conditions are necessary and sufficient for optimality, with optimal IPO coefficients given by:
\begin{equation} \label{eq_a:theta_star_eqcon_general}
\btheta_{\text{eq}}^* = \bH_{\text{eq}}(\bx)^{-1}  \bd_{\text{eq}}(\bx,\by).
\end{equation}

\subsection{Proof of Proposition \ref{prop:eqcon_bias_general} }

Let $\btheta_{\text{eq}}^*$ and $\bd_{\be}(\bx)$ be as defined by Equation $\eqref{eq:theta_star_eqcon}$ and Equation $\eqref{eq:de}$, respectively. It follows then that:
\begin{equation}\label{eq_a:proof_eqcon_bias_general}
 \begin{split}
\mathbb{E}[\btheta_{\text{eq}}^*] & = \mathbb{E}\big[ \bH_{\text{eq}}(\bx)^{-1}  \bd_{\text{eq}}(\bx,\by) \big] \\
& = \bH_{\text{eq}}(\bx)^{-1} \frac{1}{m\delta} \sum_{i = 1}^m  \Big( \diag(\toi{\bx}) \bP^T \bF ( \bF^T \toi{\hat{\bV}} \bF )^{-1} \bF^T \mathbb{E}\big[  \toi{\by} \big] \Big)\\
& = \bH_{\text{eq}}(\bx)^{-1} \frac{1}{m\delta} \sum_{i = 1}^m  \Big( \diag(\toi{\bx}) \bP^T \bF ( \bF^T \toi{\hat{\bV}} \bF )^{-1} \bF^T \bP \diag(\toi{\bx})  \Big)\\
& =  \bH_{\text{eq}}(\bx)^{-1} \bd_{\be}(\bx) \btheta
\end{split}
\end{equation}
Corollary \ref{cor:eqcon_unbias} follows directly from Equation $\eqref{eq_a:proof_eqcon_bias_general}$. Observe that when $\toi{\hat{\bV}} = \toi{\bV} \forall i \in \{1, ..., m\}$, then
$$
\bH_{\text{eq}}(\bx) =  \frac{1}{m\delta} \sum_{i = 1}^m  \Big( \diag(\toi{\bx}) \bP^T \bF ( \bF^T \toi{\hat{\bV}} \bF )^{-1} \bF^T \bP \diag(\toi{\bx})  \Big).
$$
It follows then that:
\begin{equation}\label{eq_a:proof_eqcon_unbias}
\begin{split}
\mathbb{E}[\btheta_{\text{eq}}^*] & = \mathbb{E}\big[ \bH_{\text{eq}}(\bx)^{-1}  \bd_{\text{eq}}(\bx,\by) \big] \\
& = \bH_{\text{eq}}(\bx)^{-1} \frac{1}{m\delta} \sum_{i = 1}^m  \Big( \diag(\toi{\bx}) \bP^T \bF ( \bF^T \toi{\hat{\bV}} \bF )^{-1} \bF^T \bP \diag(\toi{\bx})  \Big)\\
& =   \bH_{\text{eq}}(\bx)^{-1}\bH_{\text{eq}}(\bx) \btheta\\
& = \btheta
\end{split}
\end{equation}

\subsection{Proof of Proposition \ref{prop:eqcon_var} }

Let $\{\toi{\by}\}_{i=1}^m$ be independent random variables with  $\toi{\by} \sim \mathcal{N}(\bP \diag(\toi{\bx}) \btheta , \bSigma )$. Let $\hat{\bSigma}$ and $\bM_{\text{eq}}$ be as defined by Equation $\eqref{eq:sigma_hat}$ and Equation $\eqref{eq:m_eq}$, respectively. It follows then that:
{\footnotesize{
\begin{equation}\label{eq_a:proof_eqcon_var}
\begin{split}
\Var(\btheta_{\text{eq}}^*) & = \Var \big( \bH_{\text{eq}}(\bx)^{-1}  \bd{\text{eq}}(\bx,\by) \big) \\
& = \bH_{\text{eq}}(\bx)^{-1} \Var \Big( \diag(\toi{\bx}) \bP^T \bF ( \bF^T \toi{\hat{\bV}} \bF )^{-1} \bF^T (\toi{\by} - \toi{\bV} (\bI - \bF ( \bF^T \toi{\hat{\bV}} \bF )^{-1} \bF^T \toi{\hat{\bV}} )\bz_0)  \Big) \bH_{\text{eq}}(\bx)^{-1} \\
& = \bH_{\text{eq}}(\bx)^{-1}  \frac{1}{m^2 \delta^2 } \sum_{i = 1}^m  \Big( \diag(\toi{\bx}) \bP^T \bF ( \bF^T \toi{\hat{\bV}} \bF )^{-1} \bF^T \Var(\toi{\by}) \bF ( \bF^T \toi{\hat{\bV}} \bF )^{-1} \bF^T  \bP \diag(\toi{\bx}) \Big) \bH_{\text{eq}}(\bx)^{-1}\\
& =  \bH_{\text{eq}}(\bx)^{-1}  \frac{1}{m^2 \delta^2 } \sum_{i = 1}^m  \Big( \diag(\toi{\bx}) \bP^T \bF ( \bF^T \toi{\hat{\bV}} \bF )^{-1} \bF^T \hat{\bSigma} \bF ( \bF^T \toi{\hat{\bV}} \bF )^{-1} \bF^T  \bP \diag(\toi{\bx}) \Big) \bH_{\text{eq}}(\bx)^{-1} \\
& = \bH_{\text{eq}}(\bx)^{-1} \bM_{\text{eq}} \bH_{\text{eq}}(\bx)^{-1}.
\end{split}
\end{equation}
}}

\section{Additional Experiments}\label{sec:app_exp}
\subsection{Experiment 2: unconstrained with multivariate predictions }\label{sec:results_2}

Economic performance metrics and average out-of-sample MVO costs are provided in Table \ref{table:ipo_uncon_multi} for the time period of {2000-01-01} to {2020-12-31} for the unconstrained MVO portfolios with multivariate prediction models. Equity growth charts for the same time period are provided in Figure \ref{fig:ipo_equity_uncon_multi}.  Again we  observe that the IPO model provides higher absolute and risk-adjusted performance and in general  more conservative risk metrics. The IPO model produces an out-of-sample MVO cost that is approximately $50\%$ lower and a Sharpe ratio that is approximately $100\%$ larger than that of the OLS model. In Figure \ref{fig:ipo_cost_uncon_multi} we compare the realized MVO and Sharpe ratio costs across $1000$ out-of-sample realizations. Again we observe that the IPO model exhibits consistently lower MVO costs with a dominance ratio of $99\%$ and generally lower Sharpe ratio costs with a dominance ratio of $65\%$.

In Figure \ref{fig:ipo_coef_uncon_multi} we report the estimated regression coefficients and $\pm 1$ standard error bar for the last out-of-sample data fold. As before, the majority of the IPO and OLS regression coefficients are not statistically significant at an individual market basis. Figures \ref{fig:ipo_coef_uncon_multi} (a) and \ref{fig:ipo_coef_uncon_multi} (b)  report the estimated regression coefficients for the Carry and Trend features, respectively. Again we observe that the IPO model provides very different regression coefficients, in both magnitude and sign, compared to the OLS coefficients.  Observe that in the multivariate regression model, $50\%$ of the OLS Trend coefficients are negative. In contrast, the IPO model has only 3 (12.5\%) negative coefficients: Cocoa (CC), Live Cattle (LC) and Platinum (PL). Furthermore, in many cases such as Feeder Cattle (FC) and Soymeal (SM), the OLS coefficients are relatively large $(>0.30)$ whereas the corresponding IPO coefficients are effectively zero. Lastly note that the magnitude of the coefficients is approximately $10$x larger than the corresponding trend coefficients and is a result of the carry feature values being approximately an order of magnitude smaller.

\begin{figure}[h]
  \includegraphics[width=\linewidth,height=3.8cm, trim={0mm 0cm 0cm 0cm},clip]{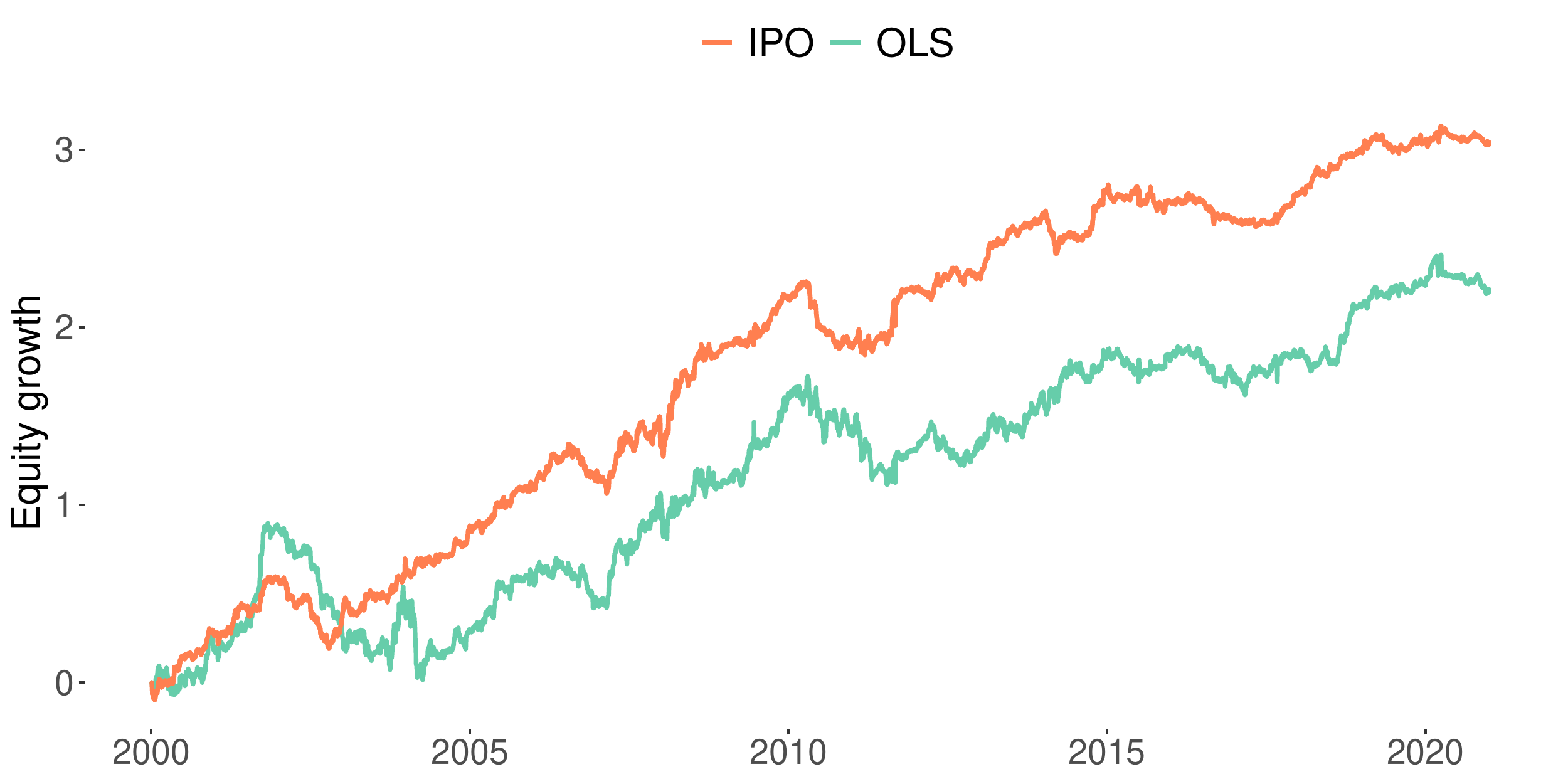}
   \caption{Out-of-sample log-equity growth for the unconstrained  mean-variance program and multivariate IPO and OLS prediction model. }
  \label{fig:ipo_equity_uncon_multi}
\end{figure}

\begin{table}[h]
\centering
\begin{tabular}{lrrrrrr}
\hline
  & Annual Return & Sharpe Ratio & Volatility & Avg Drawdown & Value at Risk & MVO Cost\\
\hline
IPO & 0.1416 & 0.8835 & 0.1603 & -0.0294 & -0.0138 & 0.5004\\
OLS & 0.1034 & 0.4477 & 0.2310 & -0.0438 & -0.0208 & 1.2308\\
\hline
\end{tabular}
\caption{Out-of-sample MVO costs and economic performance metrics for unconstrained mean-variance portfolios with multivariate IPO and OLS prediction models.}
\label{table:ipo_uncon_multi}
\end{table}

\begin{figure}[h]
  \centering
  \begin{subfigure}[b]{0.40\linewidth}
    \includegraphics[width=\linewidth , trim={0mm 0cm 0cm 0cm},clip]{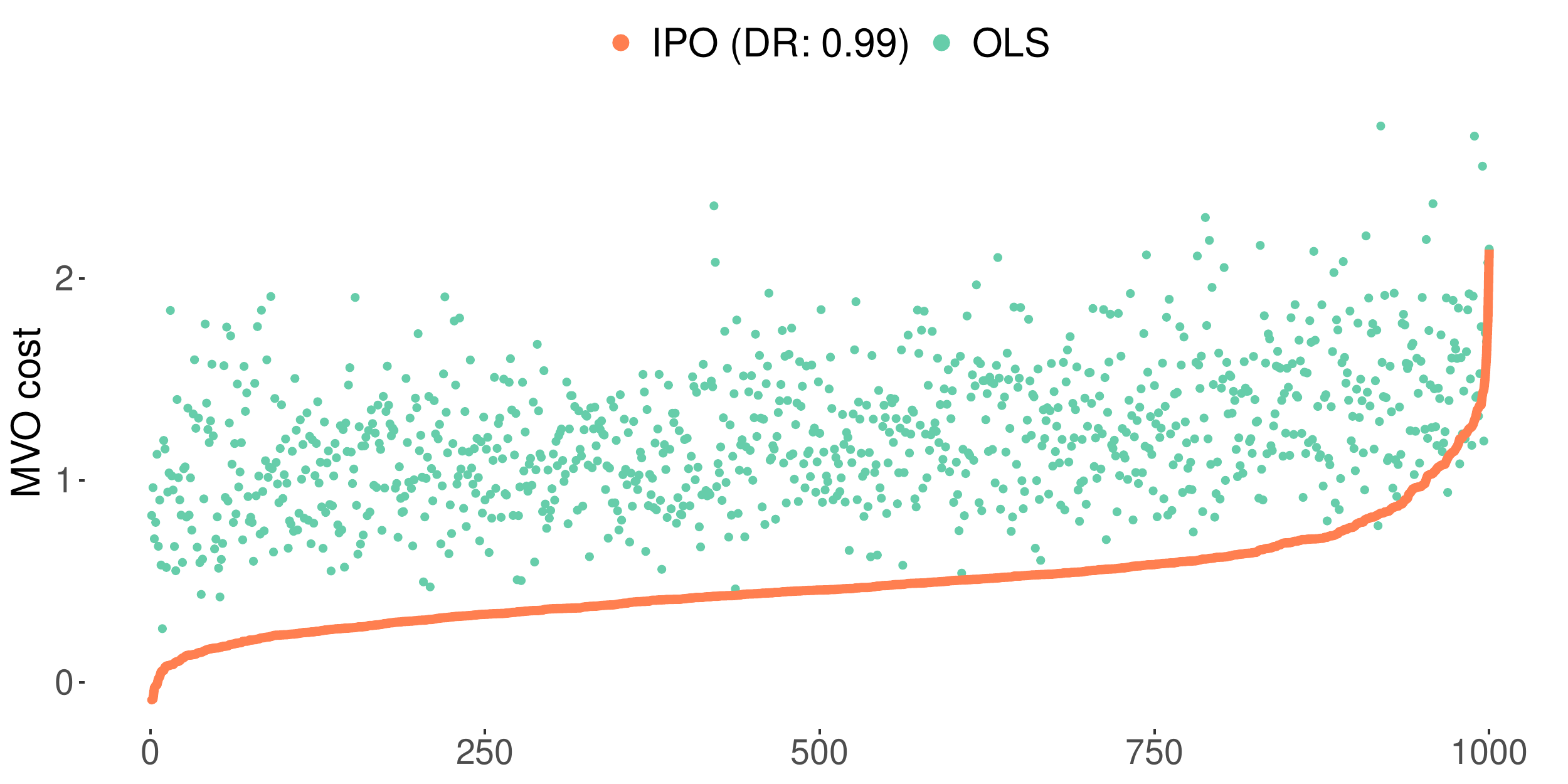}
    \caption{Out-of-sample MVO cost.}
  \end{subfigure}
  \begin{subfigure}[b]{0.40\linewidth}
    \includegraphics[width=\linewidth , trim={0mm 0cm 0cm 0cm},clip]{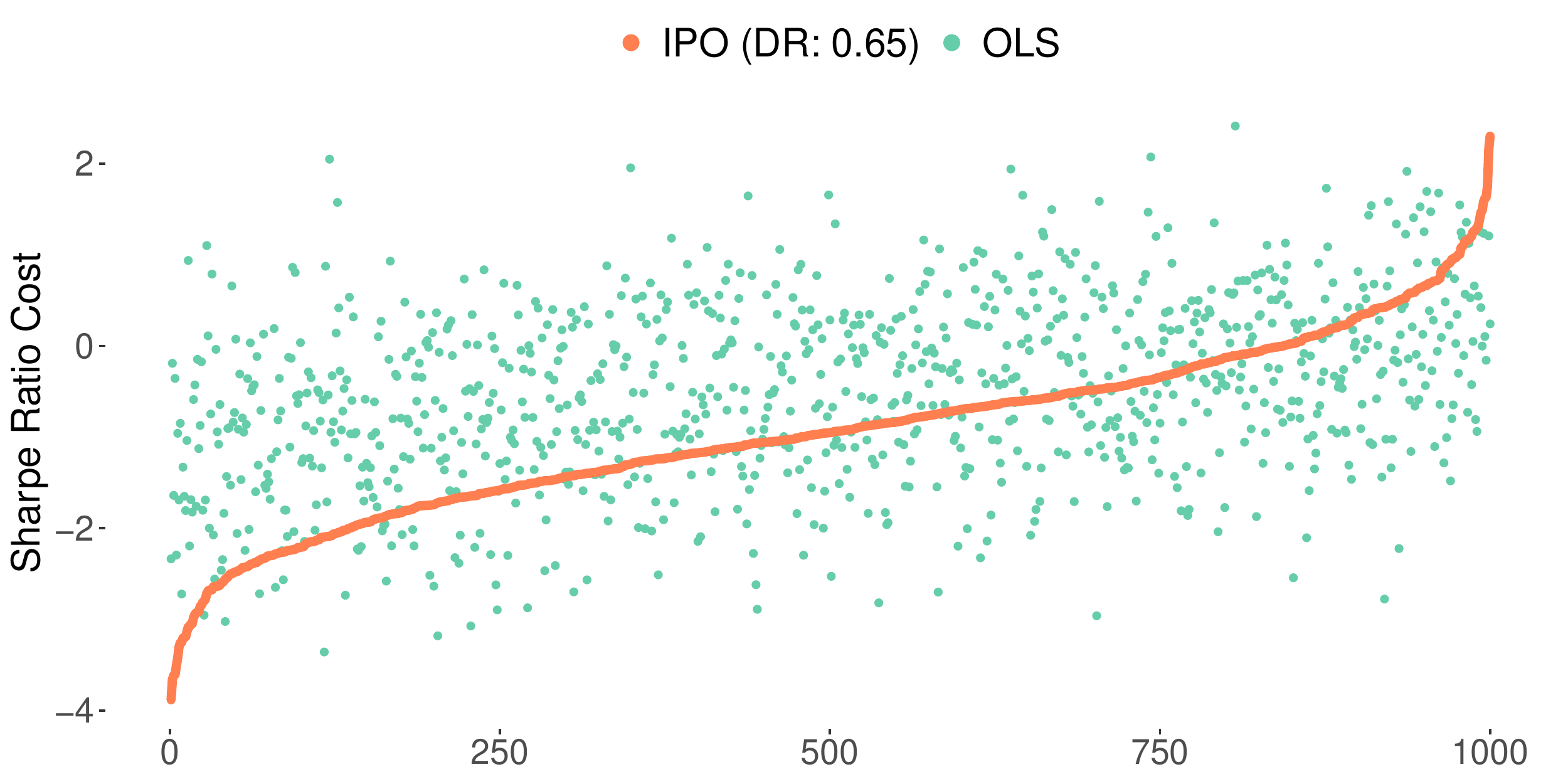}
    \caption{Out-of-sample Sharpe ratio cost.}
  \end{subfigure}
  \caption{Realized out-of-sample MVO and Sharpe ratio costs for the unconstrained  mean-variance program and multivariate IPO and OLS prediction models.}
  \label{fig:ipo_cost_uncon_multi}
\end{figure}

\begin{figure}[h]
\centering
\begin{subfigure}[b]{\linewidth}
\includegraphics[width=\linewidth,height = 3.8cm, trim={0mm 0cm 0mm 0cm},clip]{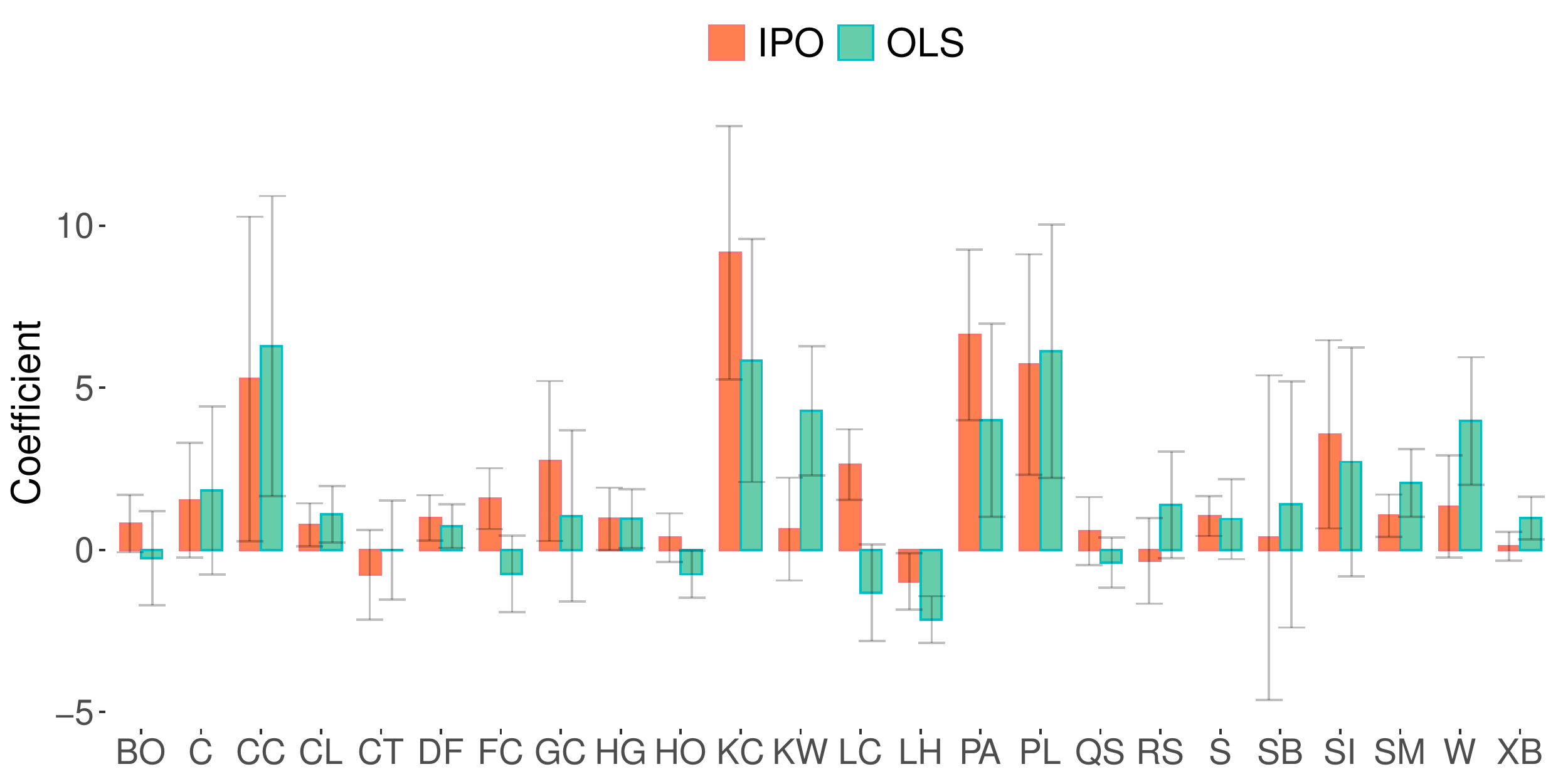}
\caption{Auxiliary feature: Carry.}
\end{subfigure}
\begin{subfigure}[b]{\linewidth}
\includegraphics[width=\linewidth,height = 3.8cm, trim={0mm 0cm 0mm 0cm},clip]{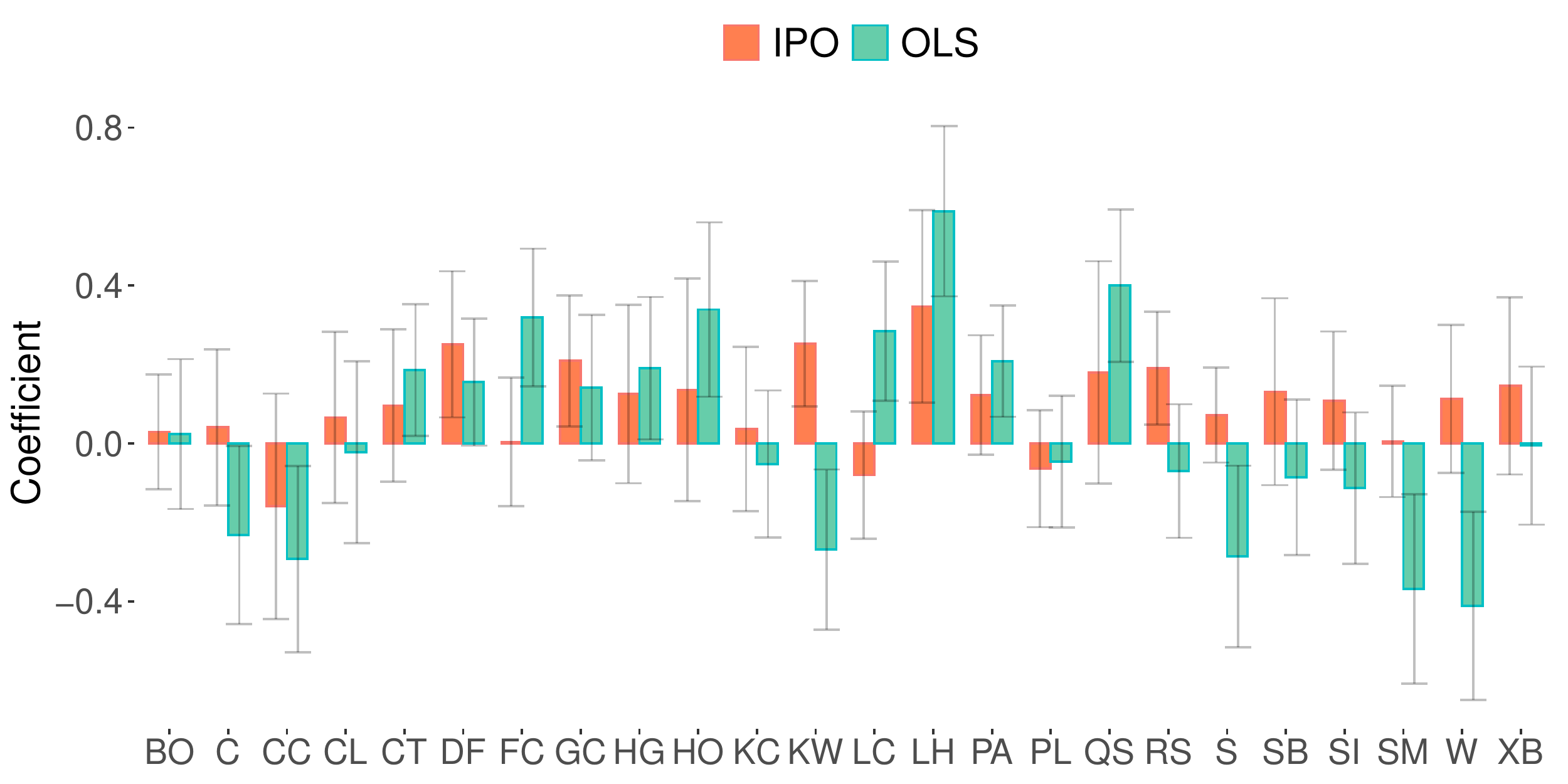}
\caption{Auxiliary feature: Trend.}
\end{subfigure}
\caption{Optimal IPO and OLS regression coefficients for the unconstrained  mean-variance program and multivariate prediction model. }
\label{fig:ipo_coef_uncon_multi}
\end{figure}

\subsection{Experiment 3 and 4: equality constrained with univariate and multivariate predictions}
Economic performance metrics and average out-of-sample MVO costs are provided in Tables \ref{table:ipo_eqcon_uni} and \ref{table:ipo_eqcon_multi} for the equality constrained MVO portfolios with univariate and multivariate prediction models, respectively. Equity growth charts for the time period of {2000-01-01} to {2020-12-31} are provided in Figures \ref{fig:ipo_equity_eqcon_uni} and \ref{fig:ipo_equity_eqcon_multi}.  As in the unconstrained case, we  observe that the IPO model provides higher absolute and risk-adjusted performance, and in general produces more conservative risk metrics. Figures \ref{fig:ipo_cost_eqcon_uni} and \ref{fig:ipo_cost_eqcon_multi} demonstrate that the IPO model produces consistently lower out-of-sample MVO costs, with dominance ratios of $93\%$ and $99\%$, respectively, and generally lower Sharpe ratio costs with dominance ratios of  $67\%$ and $66\%$, respectively. The regression coefficients are identical to those provided in Figures \ref{fig:ipo_coef_eqcon_uni} and \ref{fig:ipo_coef_eqcon_multi}, and we refer to Section \ref{sec:results} for relevant discussion.

\begin{figure}[h]
  \includegraphics[width=\linewidth,height=3.8cm, trim={0mm 0cm 0cm 0cm},clip]{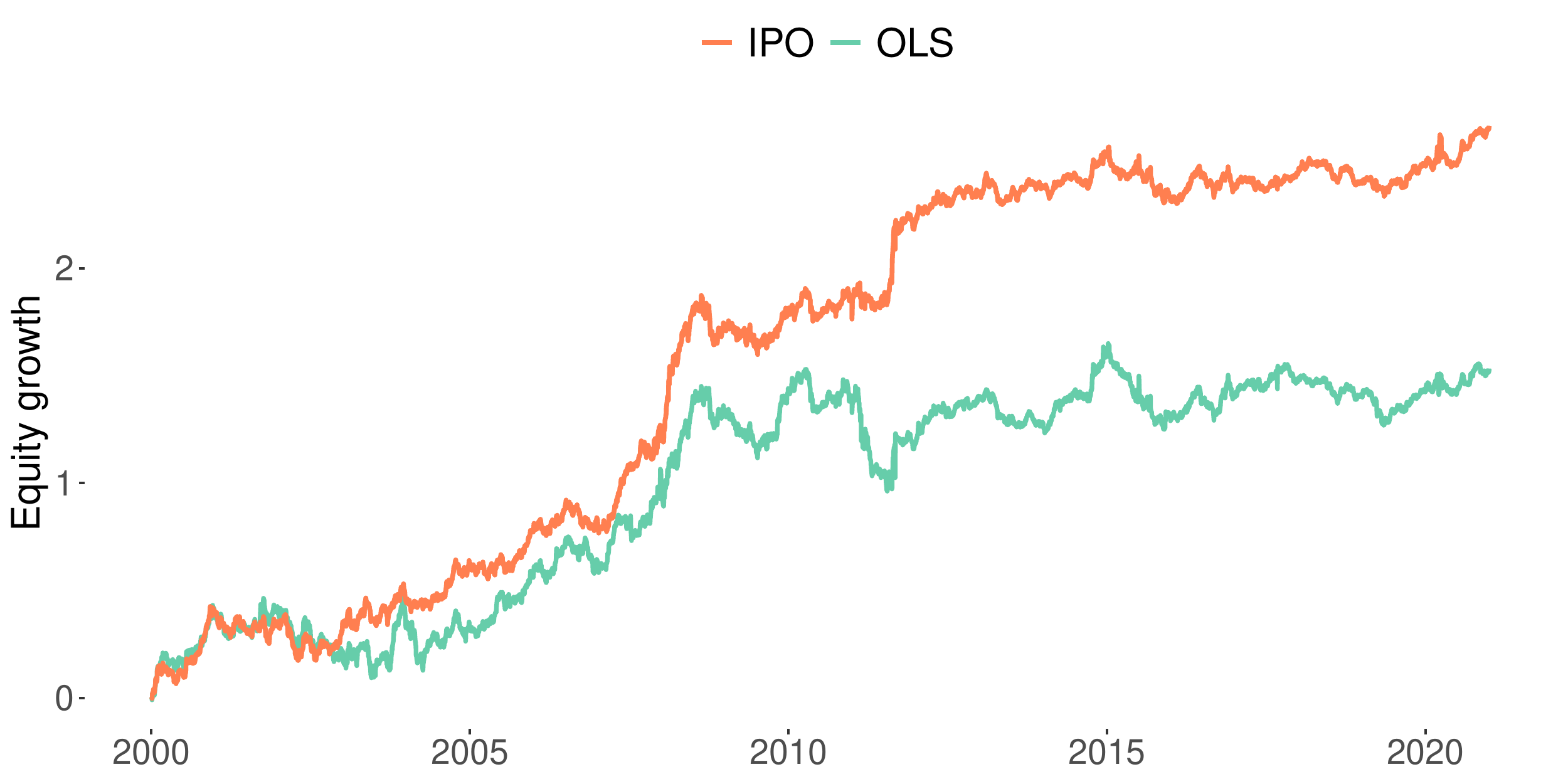}
   \caption{Out-of-sample log-equity growth for the equality constrained  mean-variance program and multivariate IPO and OLS prediction model. }
  \label{fig:ipo_equity_eqcon_uni}
\end{figure}

\begin{table}[h]
\centering
\begin{tabular}{lrrrrrr}
\hline
  & Annual Return & Sharpe Ratio & Volatility & Avg Drawdown & Value at Risk & MVO Cost\\
\hline
IPO & 0.1238 & 0.7665 & 0.1616 & -0.0290 & -0.0142 & 0.5288\\
OLS & 0.0713 & 0.3803 & 0.1876 & -0.0471 & -0.0170 & 0.8082\\
\hline
\end{tabular}
\caption{Out-of-sample MVO costs and economic performance metrics for equality constrained mean-variance portfolios with univariate IPO and OLS prediction models.}
\label{table:ipo_eqcon_uni}
\end{table}

\begin{figure}[h]
  \centering
  \begin{subfigure}[b]{0.40\linewidth}
    \includegraphics[width=\linewidth , trim={0mm 0cm 0cm 0cm},clip]{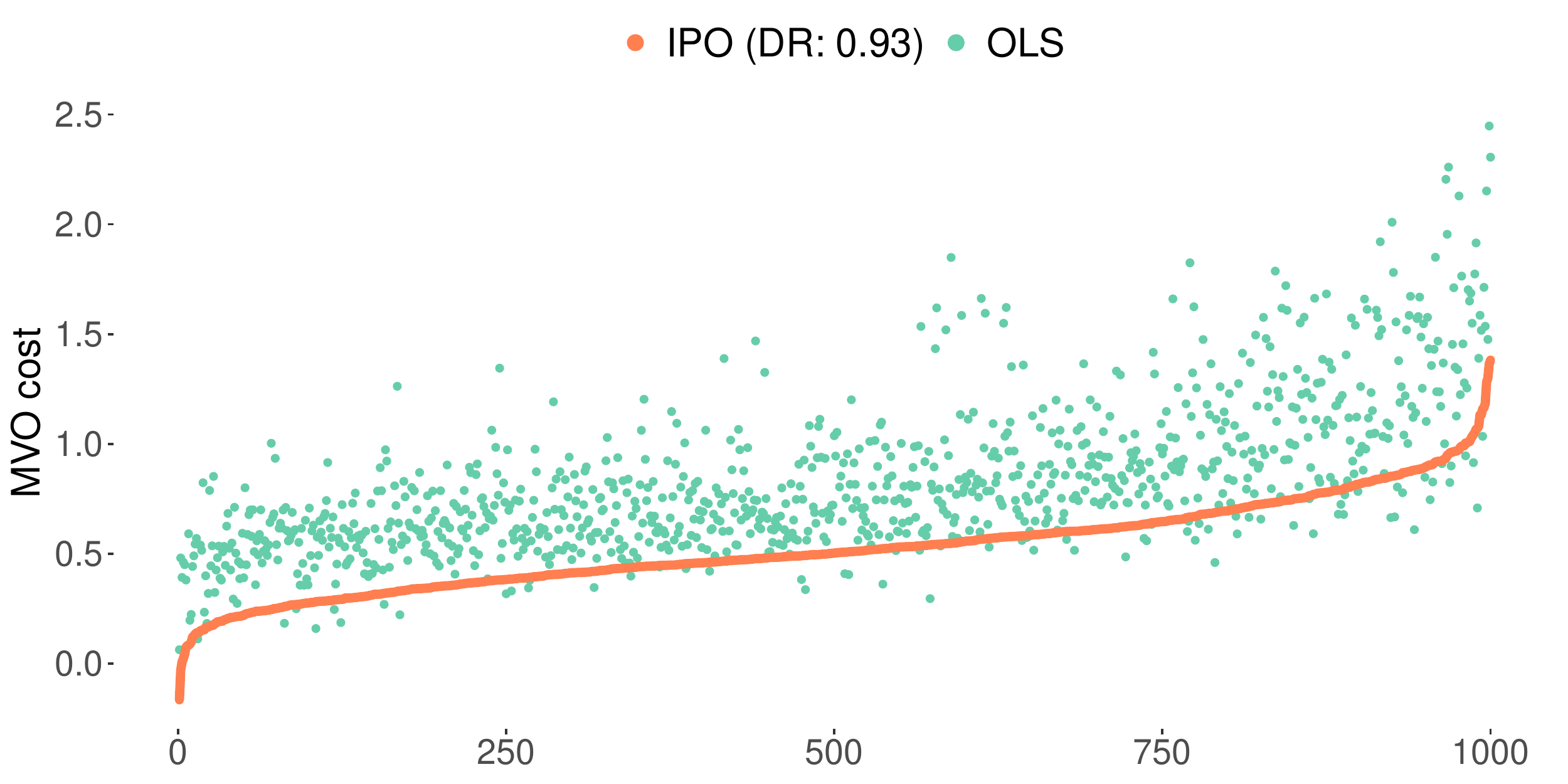}
    \caption{Out-of-sample MVO cost.}
  \end{subfigure}
  \begin{subfigure}[b]{0.40\linewidth}
    \includegraphics[width=\linewidth , trim={0mm 0cm 0cm 0cm},clip]{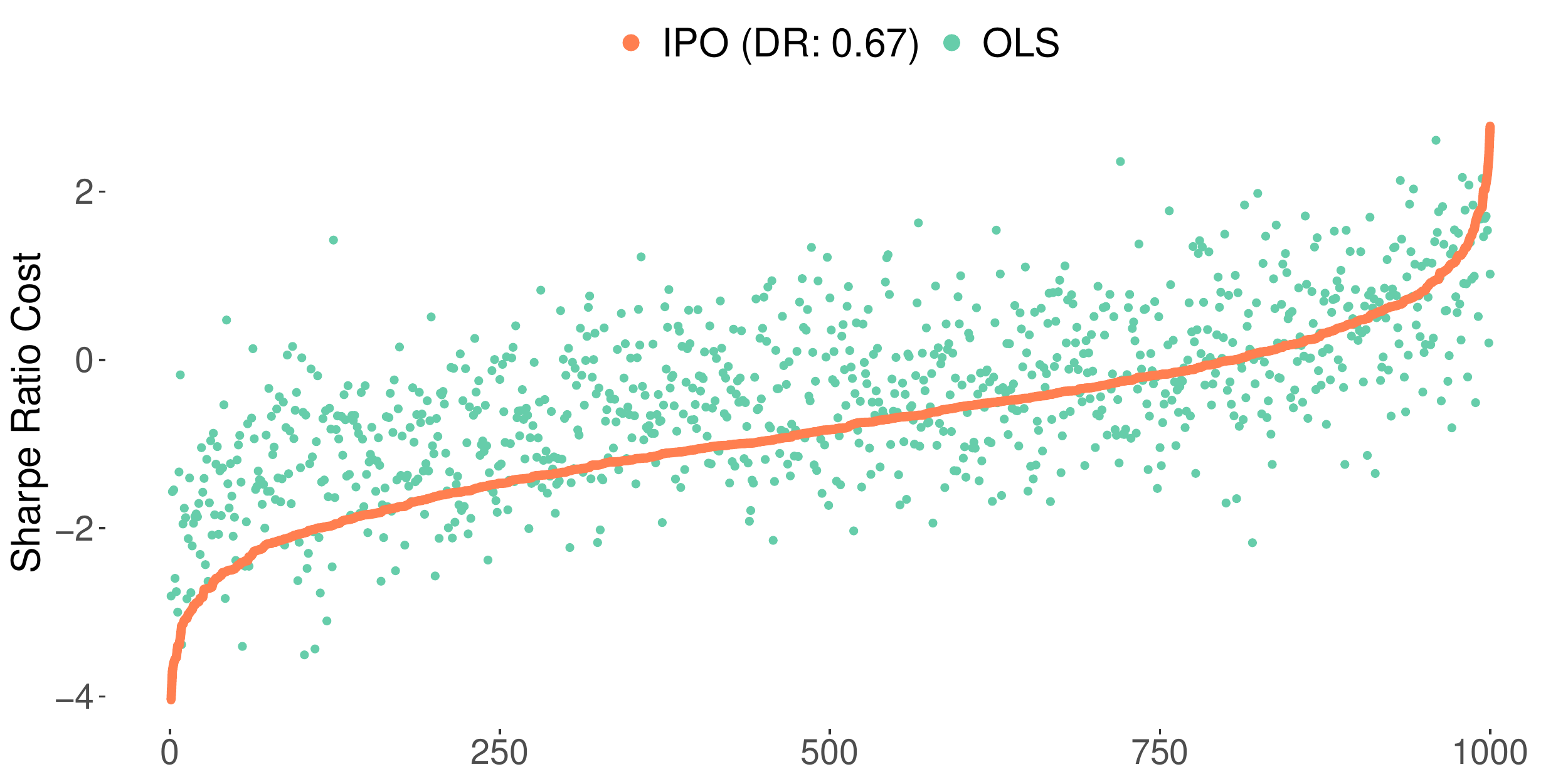}
    \caption{Out-of-sample Sharpe ratio cost.}
  \end{subfigure}
  \caption{Realized out-of-sample MVO and Sharpe ratio costs for the equality constrained  mean-variance program and univariate IPO and OLS prediction models.}
  \label{fig:ipo_cost_eqcon_uni}
\end{figure}


\begin{figure}[h]
  \includegraphics[width=\linewidth,height=3.8cm, trim={0mm 0cm 0cm 0cm},clip]{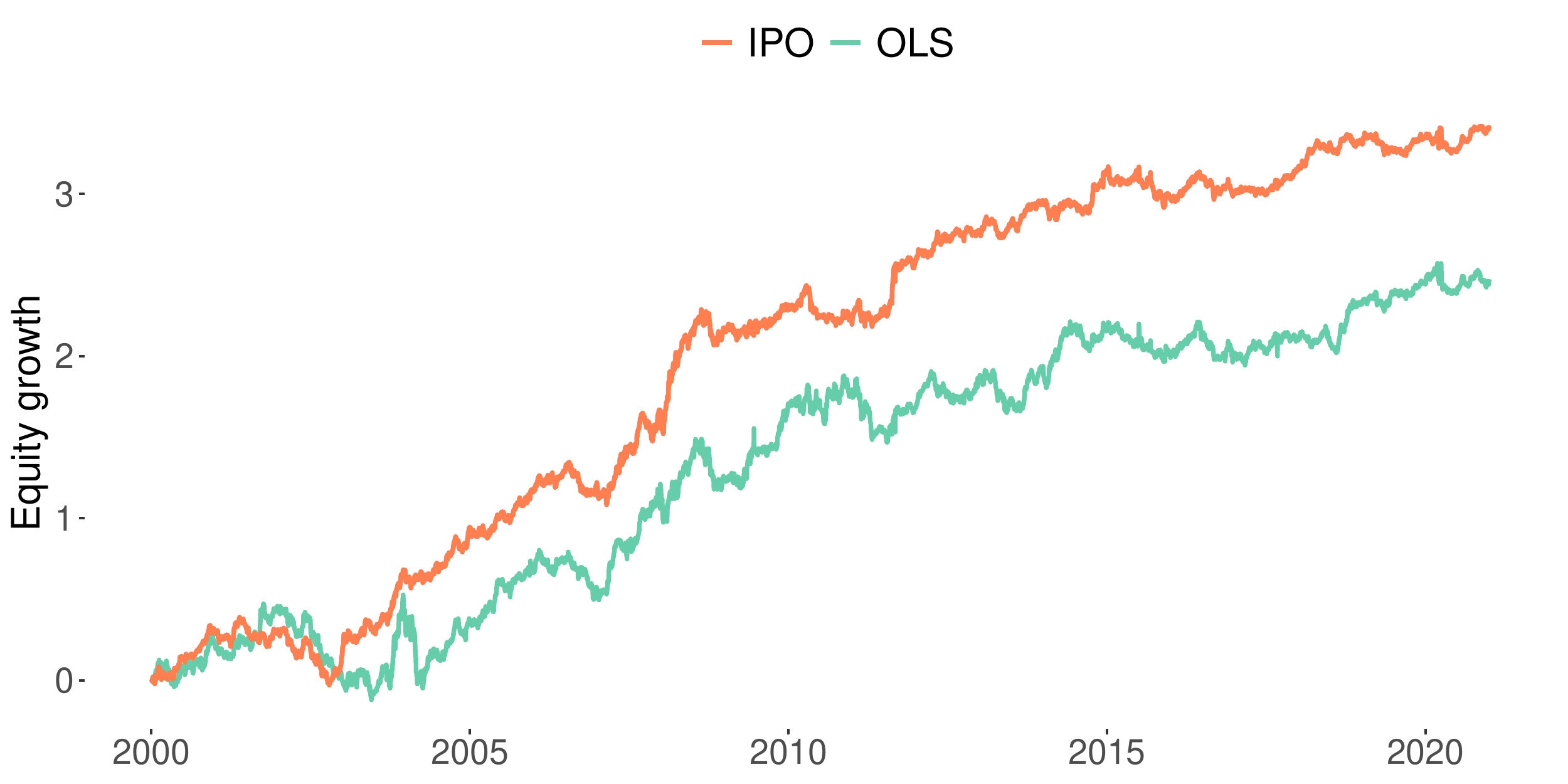}
   \caption{Out-of-sample log-equity growth for the equality constrained  mean-variance program and multivariate IPO and OLS prediction model. }
  \label{fig:ipo_equity_eqcon_multi}
\end{figure}

\begin{table}[h]
\centering
\begin{tabular}{lrrrrrr}
\hline
  & Annual Return & Sharpe Ratio & Volatility & Avg Drawdown & Value at Risk & MVO Cost\\
\hline
IPO & 0.1590 & 0.8851 & 0.1797 & -0.0339 & -0.0163 & 0.6482\\
\hline
OLS & 0.1151 & 0.4784 & 0.2406 & -0.0497 & -0.0215 & 1.3315\\
\hline
\end{tabular}
\caption{Out-of-sample MVO costs and economic performance metrics for equality constrained mean-variance portfolios with multivariate IPO and OLS prediction models.}
\label{table:ipo_eqcon_multi}
\end{table}

\begin{figure}[h]
  \centering
  \begin{subfigure}[b]{0.40\linewidth}
    \includegraphics[width=\linewidth , trim={0mm 0cm 0cm 0cm},clip]{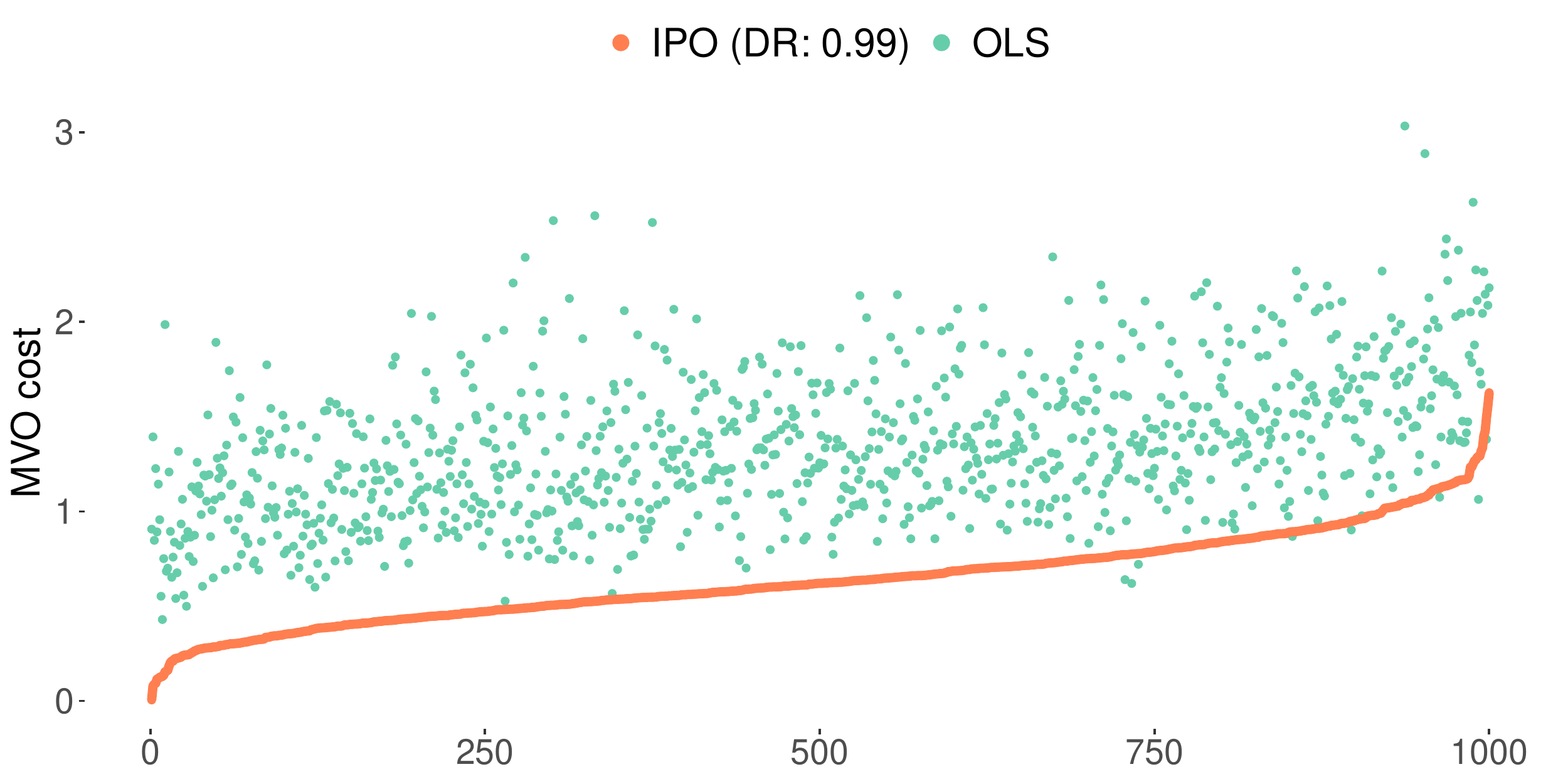}
    \caption{Out-of-sample MVO cost.}
  \end{subfigure}
  \begin{subfigure}[b]{0.40\linewidth}
    \includegraphics[width=\linewidth , trim={0mm 0cm 0cm 0cm},clip]{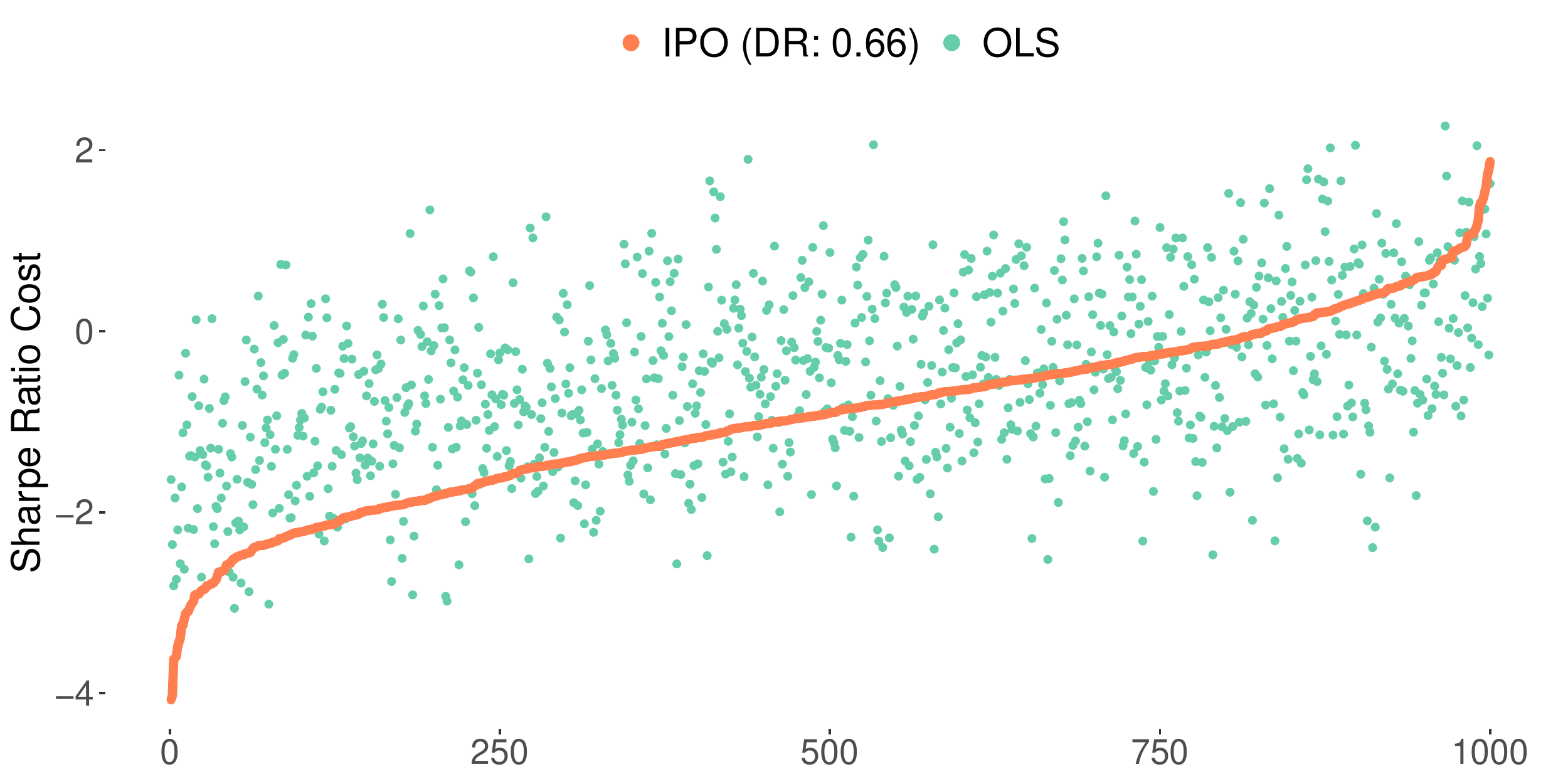}
    \caption{Out-of-sample Sharpe ratio cost.}
  \end{subfigure}
  \caption{Realized out-of-sample MVO and Sharpe ratio costs for the equality constrained  mean-variance program and multivariate IPO and OLS prediction models.}
  \label{fig:ipo_cost_eqcon_multi}
\end{figure}

\subsection{Experiment 6: inequality constrained with multivariate predictions}
Economic performance metrics and average out-of-sample MVO costs are provided in Table \ref{table:ipo_ineqcon_multi} for the time period of {2000-01-01} to {2020-12-31} for the inequality constrained MVO portfolios with multivariate prediction models. Equity growth charts for the same time period are provided in Figure \ref{fig:equity_ineqcon_multi}. Once again we  observe that the IPO model provides modestly higher absolute and risk-adjusted performance and in general  more conservative risk metrics. The IPO model produces an out-of-sample MVO cost that is approximately $60\%$ lower and a Sharpe ratio that is approximately $25\%$ larger than that of the OLS model. In Figure \ref{fig:ipo_cost_ineqcon_multi} we compare the realized MVO and Sharpe ratio costs across $1000$ out-of-sample realizations. Again we observe more modest dominance ratios with values in the $55\%$-$65\%$ range. We observe that the IPO model provides a modest improvement to performance in comparison to the OLS model; a likely result of lower prediction model misspecification and improved portfolio regularization by virtue of the box constraints. The estimated regression coefficients are provided in Figure \ref{fig:ipo_coef_eqcon_multi} and the findings are similar to those described in Section \ref{sec:results_2}.

\begin{figure}[h]
  \includegraphics[width=\linewidth,height=3.8cm, trim={0mm 0cm 0cm 0cm},clip]{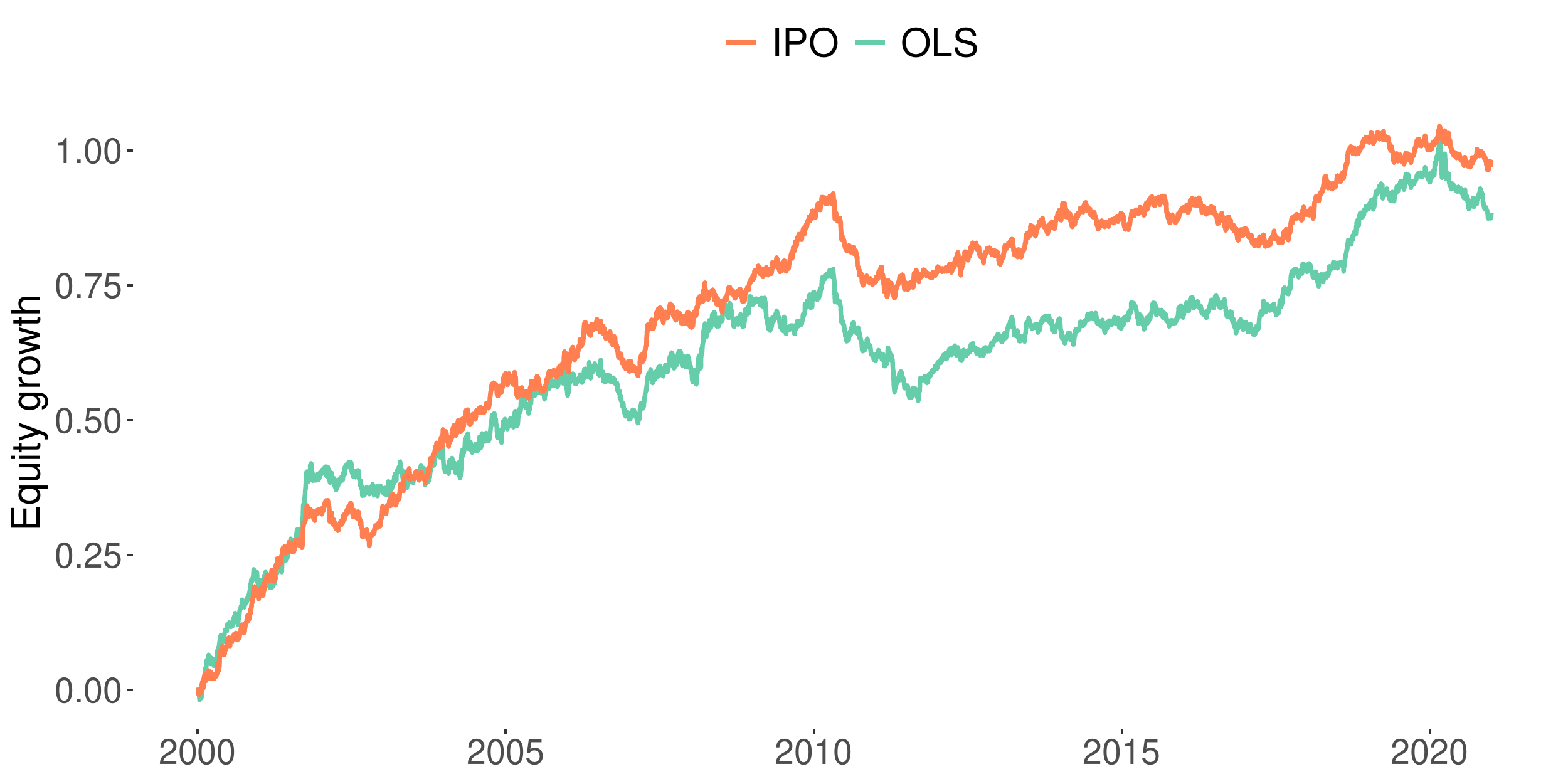}
   \caption{Out-of-sample log-equity growth for the inequality constrained  mean-variance program and multivariate IPO and OLS prediction model. }
  \label{fig:equity_ineqcon_multi}
\end{figure}

\begin{table}[h]
\centering
\begin{tabular}{lrrrrrr}
\hline
  & Annual Return & Sharpe Ratio & Volatility & Avg Drawdown & Value at Risk & MVO Cost\\
\hline
IPO & 0.0456 & 0.7937 & 0.0574 & -0.0119 & -0.0057 & 0.0369\\
OLS & 0.0411 & 0.6488 & 0.0634 & -0.0145 & -0.0063 & 0.0593\\
\hline
\end{tabular}
\caption{Out-of-sample MVO costs and economic performance metrics for inequality constrained mean-variance portfolios with multivariate IPO and OLS prediction models.}
\label{table:ipo_ineqcon_multi}
\end{table}

\begin{figure}[h]
  \centering
  \begin{subfigure}[b]{0.40\linewidth}
    \includegraphics[width=\linewidth , trim={0mm 0cm 0cm 0cm},clip]{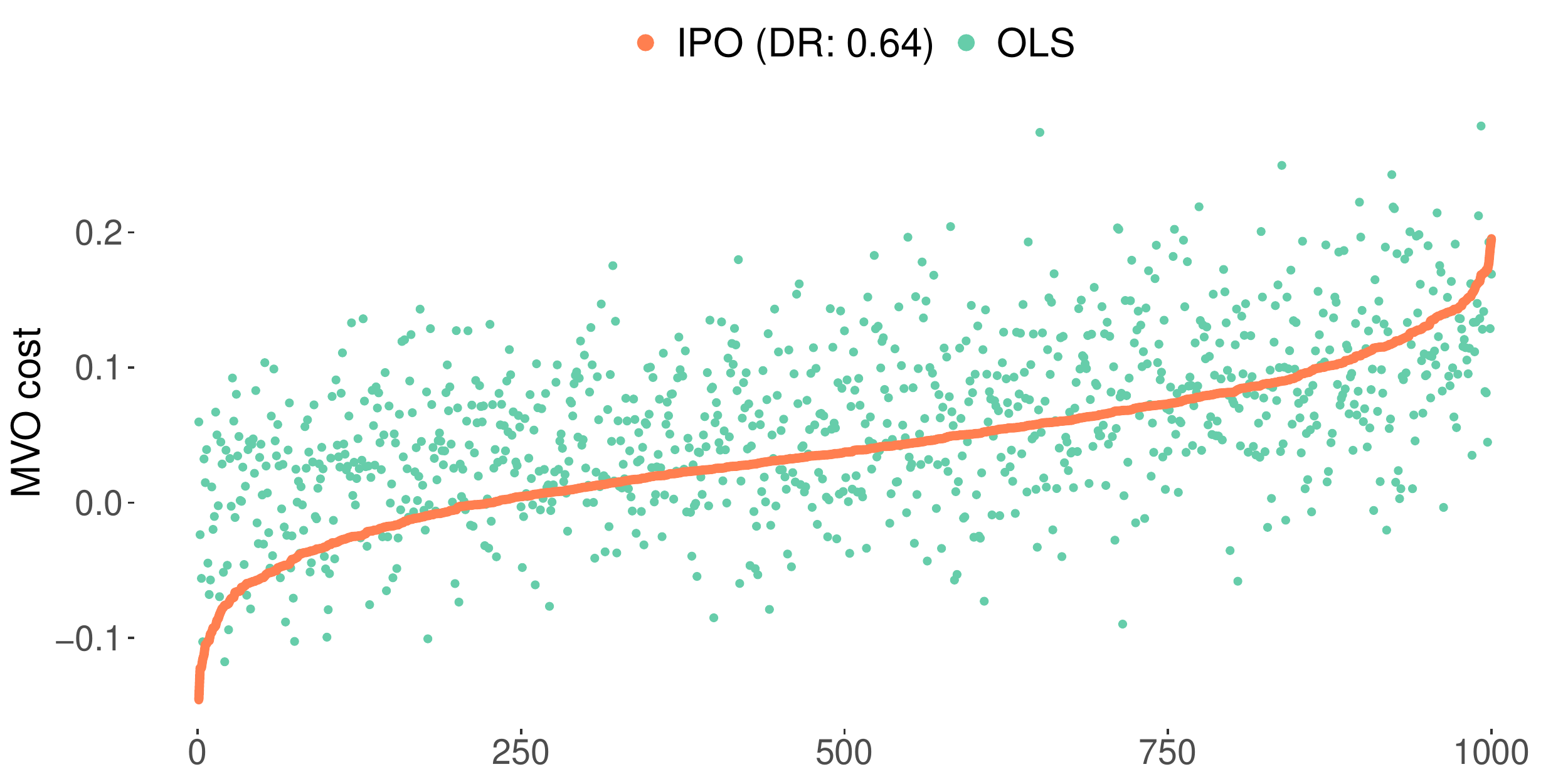}
    \caption{Out-of-sample MVO cost.}
  \end{subfigure}
  \begin{subfigure}[b]{0.40\linewidth}
    \includegraphics[width=\linewidth , trim={0mm 0cm 0cm 0cm},clip]{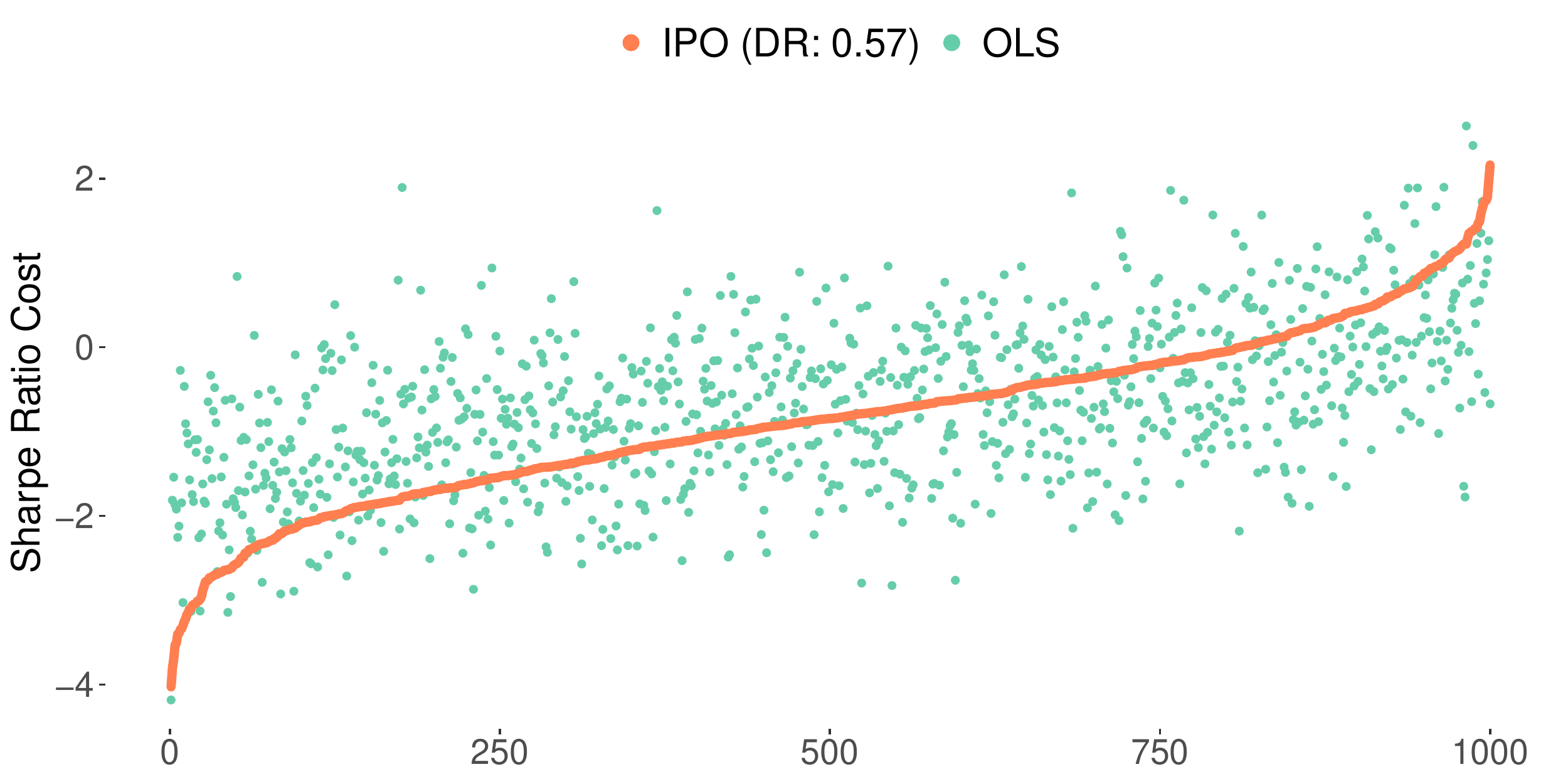}
    \caption{Out-of-sample Sharpe ratio cost.}
  \end{subfigure}
  \caption{Realized out-of-sample MVO and Sharpe ratio costs for the inequality constrained  mean-variance program and multivariate IPO and OLS prediction models.}
  \label{fig:ipo_cost_ineqcon_multi}
\end{figure}

\begin{figure}[h]
\centering
\begin{subfigure}[b]{\linewidth}
\includegraphics[width=\linewidth,height = 3.8cm, trim={0mm 0cm 0mm 0cm},clip]{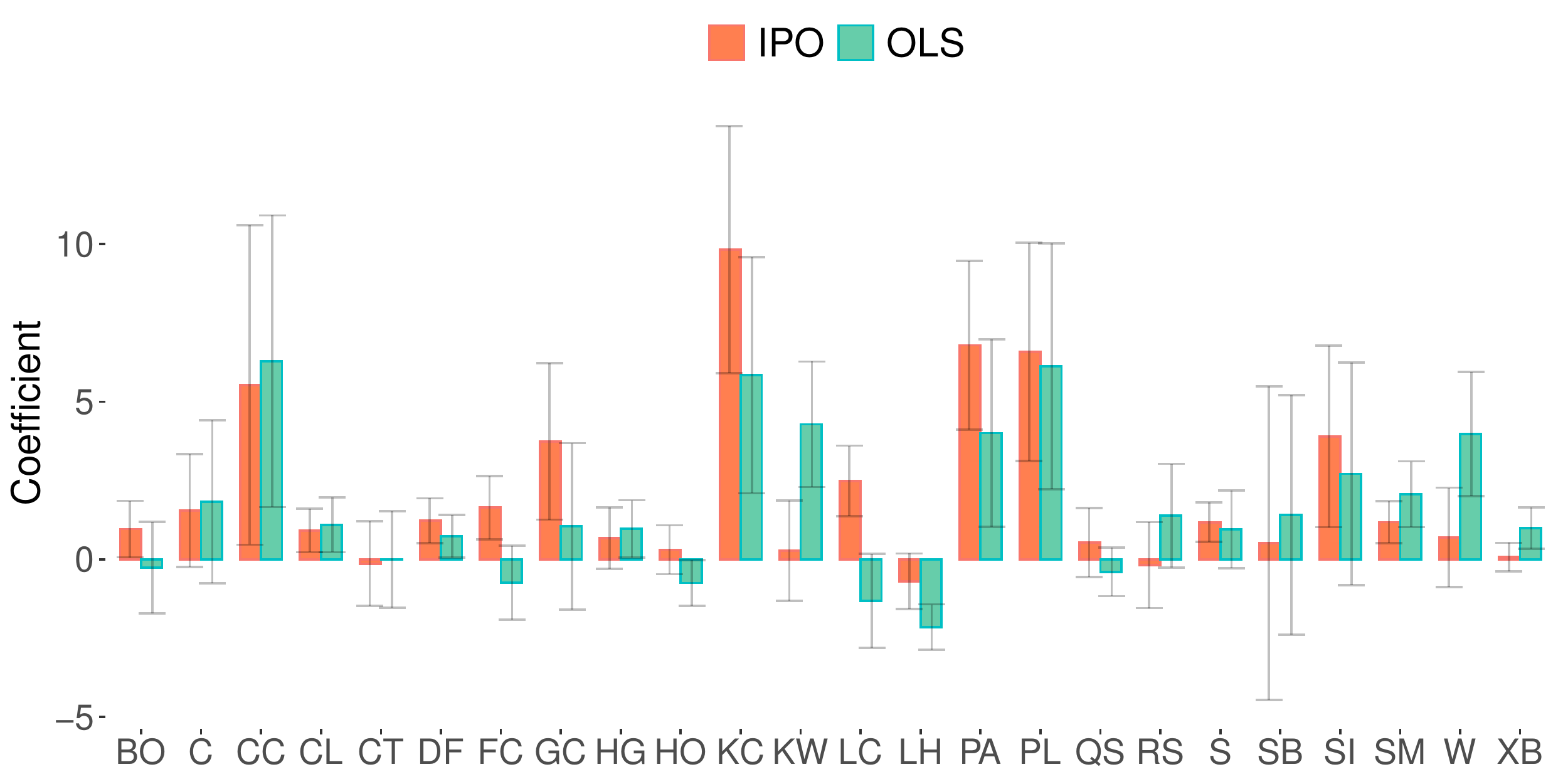}
\caption{Auxiliary feature: Carry.}
\end{subfigure}
\begin{subfigure}[b]{\linewidth}
\includegraphics[width=\linewidth,height = 3.8cm, trim={0mm 0cm 0mm 0cm},clip]{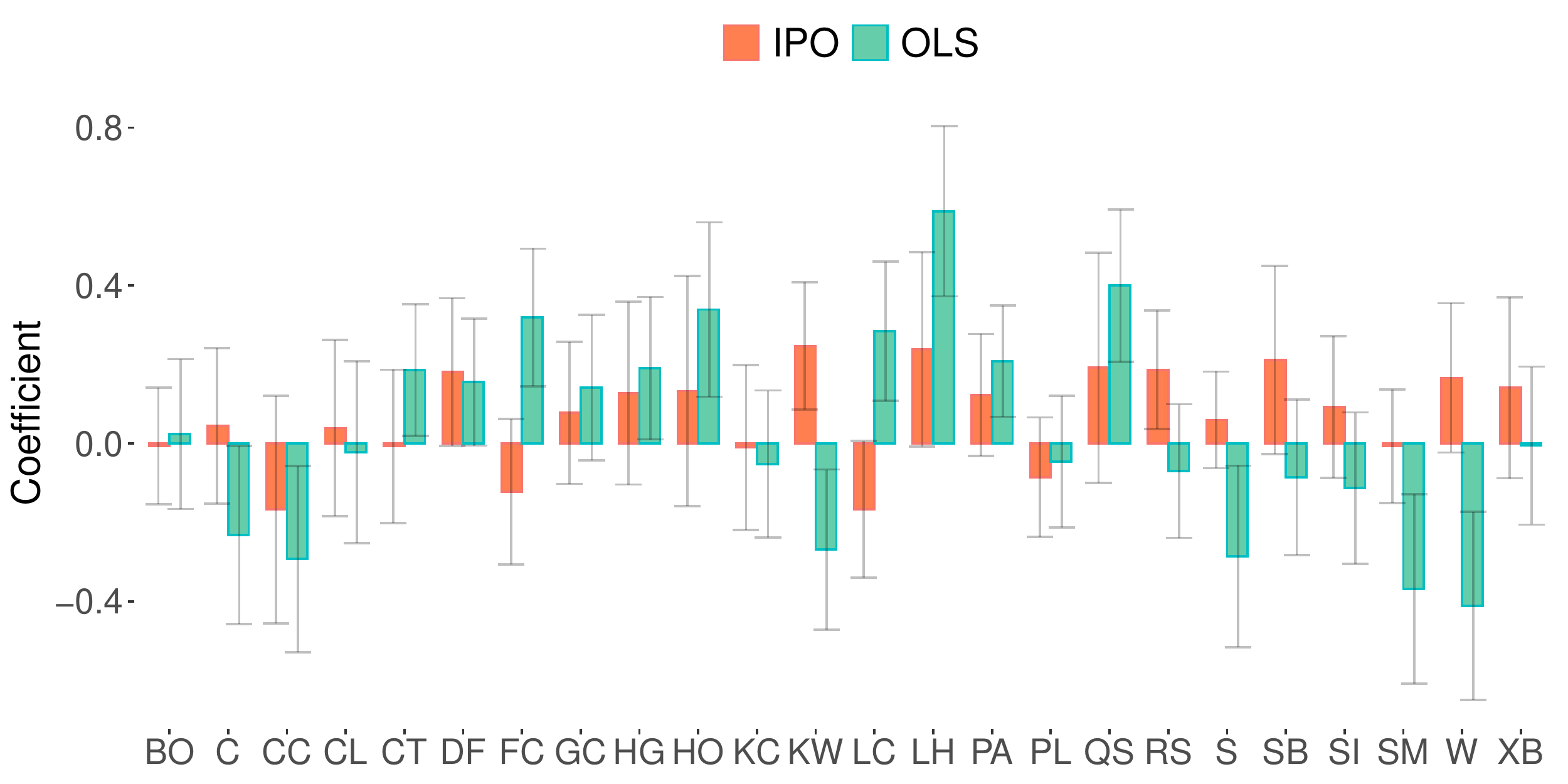}
\caption{Auxiliary feature: Trend.}
\end{subfigure}
\caption{Optimal IPO and OLS regression coefficients for the equality constrained  mean-variance program and multivariate prediction model. }
\label{fig:ipo_coef_eqcon_multi}
\end{figure}

\section{Experiment details}
All experiments were conducted on an Apple Mac Pro computer ($2.7$ GHz $12$-Core Intel Xeon E5,$128$ GB $1066$ MHz DDR3 RAM) running macOS `Catalina'. The software was written using the R programming language (version $4.0.0$) and  torch (version $0.2.0$).

\begin{table}[h]
\centering
\begin{tabular}{ l l l l l}
\hline
\textbf{Asset Class} &  & & \textbf{Market (Symbol)} &  \\
\hline
Energy & & WTI crude (CL)       & Heating oil (HO) & Gasoil (QS)\\
      &  & RBOB gasoline (XB)   &                  &            \\
\\
Grain & & Bean oil (BO)         & Corn (C)         & KC Wheat (KW)\\
      & & Soybean (S)           & Soy meal (SM)    & Wheat (W)\\
\\
Livestock & & Feeder cattle (FC) & Live cattle (LC) &  Lean hogs (LH)\\
\\
Metal   &  & Gold (GC)          & Copper (HG)      & Palladium (PA)\\
        &  & Platinum (PL)      & Silver (SI)      & \\
\\

Soft   &   & Cocoa (CC)         & Cotton (CT)      & Robusta Coffee (DF)\\
       &  & Coffee (KC)        & Canola (RS)       & Sugar (SB)  \\

\hline
\end{tabular}
\caption{Futures market universe. Symbols follow Bloomberg market symbology. Data is provided by Commodity Systems Inc (CSI).}
\label{table:universe}
\end{table}

\end{document}